 \newcommand{\bs}{\bigskip}
 \newcommand{\ms}{\medskip}
 \newcommand{\n}{\noindent}
 \newcommand{\s}{\smallskip}
 \newcommand{\hs}[1]{\hspace*{ #1 mm}}
 \newcommand{\vs}[1]{\vspace*{ #1 mm}}
 \newcommand{\real}{\mathbb{R}}
 \newcommand{\nat}{\mathbb{N}}
 \newcommand{\complex}{\mathbb{C}}
 \newcommand{\ie}{\textrm{i.e.},\hspace*{2mm}}
 \newcommand{\eg}{\textrm{e.g.},\hspace*{2mm}}
 \newcommand{\etal}{\textrm{et al.}\hspace*{2mm}}
 \newcommand{\etalc}{\textrm{et al.}}
 \newcommand{\BB}{{\cal B}}
 \newcommand{\FF}{{\cal F}}
 \newcommand{\GG}{{\cal G}}
 \newcommand{\UU}{{\cal U}}
 \newcommand{\fp}{\mathrm{FP}}
 \newcommand{\sharpp}{\#\mathrm{P}}
 \newcommand{\tinymatrices}[4]{\left({\tiny \begin{array}{cc} #1 & #2 \\%
      #3 & #4   \end{array} }\right)}
 \newcommand{\tinymatrixthree}[6]{\left({\tiny %
      \begin{array}{ccc} #1 & #2 & #3 \\%
                         #4 & #5 & #6   \end{array} }\right)}
 \newcommand{\tinycomb}[2]{\left({\tiny \begin{array}{c} #1 \\%
      #2 \end{array} }\right)}
 \newcommand{\IFF}{\Longleftrightarrow}
 \newcommand{\implies}{\longrightarrow}
 \def\bbox{\vrule height6pt width6pt depth1pt}
\theoremstyle{plain}
 \newtheorem{theorem}{Theorem}[section]
 \newtheorem{lemma}[theorem]{Lemma}
 \newtheorem{proposition}[theorem]{Proposition}
 \newtheorem{claim}{Claim}
 \newenvironment{proof}{\par \noindent
            {\bf Proof. \hs{2}}}{\hfill$\Box$ \vspace*{3mm}}
 \newenvironment{proofof}[1]{\vspace*{5mm} \par \noindent
         {\bf Proof of #1.\hs{2}}}{\hfill$\Box$ \vspace*{3mm}}
 \newcommand{\pair}[1]{\langle #1 \rangle}
\newcommand{\ignore}[1]{}
\newcommand{\holant}{\mathrm{Holant}}
\newcommand{\holantstar}{\mathrm{Holant}^{*}}
\newcommand{\sharpcsp}{\#\mathrm{CSP}}
\newcommand{\sharpcspstar}{\#\mathrm{CSP}^{*}}
\newcommand{\APreduces}{\leq_{\mathrm{AP}}}
\newcommand{\APequiv}{\equiv_{\mathrm{AP}}}
\newcommand{\DG}{{\cal DG}}
\newcommand{\DUP}{\mathrm{DUP}}
\begin{document}
%%%%%%%%%%%%%%%%%%%%%%%%%%%%
%%%%%%%%%%%%%%%%%%
\pagestyle{plain}
\setcounter{page}{1}

\begin{center}
{\Large {\bf Approximation Complexity of Complex-Weighted \s\\ 
Degree-Two Counting Constraint Satisfaction Problems}}\footnote{An extended abstract appeared in the Proceedings of the 17th Annual International Computing and Combinatorics Conference (COCOON 2011), Lecture Notes in Computer Science, vol.6842, pp.122--133, Dallas, Texas, USA, August 14-16, 2011.} \bs\\
{\sc Tomoyuki Yamakami}\footnote{Current Affiliation: Department of Information Science, University of Fukui, 3-9-1 Bunkyo, Fukui 910-8507, Japan.} \bs\\
\end{center}

%%%%%%%%%%%%%%%%%%

\begin{quote}
{\small 
\n{\bf Abstract:} 
Constraint satisfaction problems have been studied in numerous fields with practical and theoretical interests. In recent years, major breakthroughs  have been made in a study of counting constraint satisfaction problems (or \#CSPs). In particular, a computational complexity classification of bounded-degree \#CSPs has been discovered for all degrees except for two, where  the 
``degree'' of an input instance is the maximal number of times that each input 
variable appears in a given set of constraints. 
Despite the efforts of recent studies, however, a complexity 
classification of  degree-$2$ \#CSPs has eluded from our understandings. 
This paper challenges this open problem and gives its partial solution  
by applying two novel proof techniques---T$_{2}$-constructibility and parametrized 
symmetrization---which are specifically designed to handle 
``arbitrary'' constraints under randomized approximation-preserving  reductions. 
We partition entire constraints into four sets and we  
classify the approximation complexity of all degree-$2$ \#CSPs whose constraints are drawn from two of the four sets  
into two categories: problems computable in polynomial-time or problems that are at least as hard as $\#\mathrm{SAT}$. 
Our proof exploits a close relationship between complex-weighted degree-$2$ 
\#CSPs and Holant problems, which are a natural generalization of complex-weighted \#CSPs.   

\s

\n{\bf Keywords:} constraint satisfaction problem, \#CSP, bounded degree,  AP-reducibility, constructibility, symmetrization, \#SAT, Holant problem, signature
}
\end{quote}

%%%%%%%%%%%%%%%%%
%%%%%%%%%%%%%%%%%
\section{Approximation Complexity of Bounded-Degree \#CSPs}\label{sec:introduction}

{\em Constraint satisfaction problems} (or {\em CSPs}, in short), which are composed of ``variables'' (on appropriate domains) and ``constraints'' among those  variables, have been studied with practical and theoretical interests in various fields, including artificial intelligence, database theory, graph theory, and statistical physics. 
A decision version of CSP asks whether, given a list of constraints over  variables, all the constraints are satisfied simultaneously. 
Schaefer \cite{Sch78} first charted a whole map of the computational complexity of Boolean CSPs (\ie CSPs with constraints on the Boolean domain) according to a fixed list of constraints.  

Of numerous variants of CSPs, in particular, 
a {\em counting CSP} (or {\em \#CSP}) asks 
how many variable assignments satisfy all the given constraints.  
As a typical \#CSP, the {\em counting satisfiability problem} (or \#SAT) 
is to count  the total number of satisfiable assignments for each given logical formula.  This counting problem \#SAT is known to be computationally hard for Valiant's class $\sharpp$ of counting functions \cite{Val79}. 

In the past two decades, a great progress has been observed in a  
study of \#CSPs and their variants. 
The first major leap came in 1996 when Creignou and Hermann \cite{CH96}  discovered a precise classification of all unweighted \#CSPs (\ie \#CSPs with Boolean-valued constraints). 
Their classification theorem asserts that every $\sharpcsp$ whose constraints are all taken from a fixed set $\FF$ (denoted $\sharpcsp(\FF)$) 
can be classified into one of the following 
two categories: polynomial-time computable problems or $\sharpp$-hard problems. This statement is known as a {\em dichotomy theorem} for unweighted \#CSPs.  

In many real-life problems, however, natural constraints often take real or complex values rather than Boolean values. It is therefore quite natural to expand the scope of constraints from Boolean values to real values and beyond.  
An early extension was made by Dyer, Goldberg, and Jerrum \cite{DGJ09} 
to nonnegative rational numbers.  
After a series of vigorous work,  Cai, Lu, and Xia \cite{CLX09x} finally gave a most general form of classification theorem for 
complex-weighted \#CSPs (\ie \#CSPs with complex-valued constraints), provided that arbitrary unary constraints can be freely added to input instances. 
For succinctness, hereafter, we use ``$*$ (star)'' as in ``\#CSP$^*$'' to indicate this extra use of free unary constraints. 

Another major progress has been recently reported in an area of the approximation complexity of \#CSPs. Using a notion of {\em randomized  approximation-preserving reducibility} (or {\em AP-reducibility}, in short) \cite{DGGJ03}, Dyer, Goldberg, and Jerrum \cite{DGJ10} discovered  a complete classification of the approximation complexity of unweighted \#CSPs. Unlike the aforementioned exact complexity case, unweighted \#CSPs are classified into three categories, which include 
an intermediate level between polynomial-time computable problems and $\sharpp$-hard problems. 
This {\em trichotomy theorem} therefore draws a clear contrast between the approximation complexity and the exact complexity of the unweighted \#CSPs.  
Later in 2010, this result was further extended into  complex-weighted 
$\sharpcspstar$s \cite{Yam10a}. 
A recent extensive study has also targeted another important refinement of \#CSPs---{\em bounded-degree \#CSPs}---where the ``degree'' is the maximal number of times that any variable appears in a given set of constraints. 
A complete classification was recently given by Dyer, Goldberg, Jalsenius, and Richerby \cite{DGJR10} to unweighted bounded-degree $\sharpcspstar$s 
when their degree exceeds $2$. Subsequently, 
Yamakami \cite{Yam10b} extended their result to complex-weighted bounded-degree $\sharpcspstar$s. 
We conveniently say that counting problems $A$ and $B$ are 
``AP-equivalent (in complexity)'' when they have the same computational complexity under the aforementioned AP-reductions. 
With a help of this notion,  for any set $\FF$ of constraints,   $\sharpcspstar(\FF)$'s and $\sharpcspstar_{3}(\FF)$'s become AP-equivalent  \cite{Yam10b}, where the subscript ``$3$'' in $\sharpcspstar_{3}(\FF)$ indicates that the maximum degree is at most $3$.   
Nevertheless, degree-$2$ \#CSPs have 
eluded from our understandings and it has remained {\em open} to discover 
a complete classification of the approximation complexity of 
degree-$2$ \#CSPs.  

This paper presents a partial solution to this open problem by exploiting a  fact that the computational complexity 
of \#CSP$^*$s are  closely linked to that 
of {\em Holant problems}, 
where Holant problems were introduced by Cai \etalc~\cite{CLX09x} to 
generalize a framework of \#CSPs (motivated and influenced by Valiant's holographic reductions and algorithms \cite{Val06,Val08}). In this framework, complex-valued  constraints (on the Boolean domain) are simply called {\em  signatures}. A Holant problem then asks to compute the total weights of the products of the values of signatures over all possible edge-assignments to an input graph. Conveniently,  
let $\holantstar(\FF)$ denote a complex-weighted Holant problem whose signatures are either limited to a given set $\FF$ or just unary signatures. 
A close link we exploit here is that $\sharpcspstar_{2}(\FF)$'s and $\holantstar(\FF)$'s are AP-equivalent \cite{Yam10b}, and  
this equivalence makes it possible for us to work on the Holant framework.  

When any permutation of Boolean variables of a signature $f$ does not change the output value of $f$,  the signature $f$ is called 
{\em symmetric}. Typical examples of symmetric signatures include $OR$ (where   $OR(x_1,x_2)$ evaluates the logical formula ``$x_1\vee x_2$'') and $NAND$ (which evaluates ``$not(x_1\wedge x_2)$''). 
All symmetric Holant$^*$ problems (where unary signatures are given for free) were neatly classified  by  Cai, Lu, and Xia \cite{CLX09x}     into two categories: those solvable in polynomial time and those at least 
as hard as the complex-weighted counting satisfiability problem (or $\#\mathrm{SAT}_{\complex}$). 
To obtain this dichotomy theorem, Cai \etal used a technique of Valiant \cite{Val08}, called a {\em holographic transformation}, which transforms signatures without changing solutions of the associated 
Holant$^*$ problems. 

The difference between symmetric signatures and asymmetric ones 
in the case of approximation complexity of \#CSPs with Boolean constraints are quite striking. 
Even for a simple example of binary (\ie arity-$2$) constraints, the symmetric signature $OR$ makes the corresponding counting problem $\sharpcsp(OR)$ $\sharpp$-hard, whereas the asymmetric signature $Implies$ (where $Implies(x_1,x_2)$ evaluates the propositional formula ``$x_1\supset x_2$'') makes $\sharpcsp(Implies)$ sit between the set of polynomial-time solvable problems and the set of $\sharpp$-hard problems \cite{DGJ10} and $\sharpcsp(Implies)$ has been speculated to be intractable.   

In this paper, we give two approximation classification theorems for complex-weighted degree-$2$  $\sharpcspstar$s.  Our major contributions are two fold: (1) we present  a
systematic technique of handling arbitrary signatures and (2) we demonstrate two classification theorems for approximation complexity of complex-weighted  $\sharpcspstar$s associated with particular sets of  signatures. 
To be more precise, in the first classification theorem (Theorem \ref{outside-SIG}), 
we first define a ternary signature set $SIG$ and prove that, for any signature $f$ outside of $SIG$, $\sharpcspstar_{2}(f)$ 
is at least as hard as $\#\mathrm{SAT}_{\complex}$ 
(\ie a complex-valued version of $\#\mathrm{SAT}$). 
This result leaves the remaining task of focusing on 
ternary signatures residing within $SIG$. For our convenience, we will  split $SIG$ 
into three parts---$SIG_0$, $SIG_1$, and $SIG_2$---and, in the second classification theorem, when  all signatures are drawn from $SIG_1$, 
we provide with a complete classification of all degree-$2$ $\sharpcspstar$s. The other two sets will be handled in separate papers due to 
their lengthy proofs.  
The second classification theorem (Theorem \ref{main-theorem}) is roughly stated as follows: for any set $\FF$ of signatures in $SIG_1$, if $\FF$ is included in a particular signature set, called $\DUP$,  
then $\sharpcspstar_{2}(\FF)$ is solvable in polynomial time; otherwise, $\sharpcspstar_{2}(\FF)$ is computationally hard for 
$\sharpp_{\complex}$ under AP-reductions, where $\sharpp_{\complex}$ is a complex-valued version of $\sharpp$ (see, \eg \cite{Yam10a}).  In fact, we can precisely describe the requirements for asymmetric signatures to be $\sharpp_{\complex}$-hard. 
Proving these two theorems require novel ideas and new technical tools: {\em T$_{2}$-constructibility} and {\em parameterized  symmetrization scheme} of asymmetric signatures. 

Our proofs of the aforementioned main theorems proceed in the following way. 
{}From an arbitrary ternary signature $f$, we nicely construct a new ``ternary''  signature, denoted $Sym(f)$,  so that $Sym(f)$ becomes symmetric. 
This process, which is a form of (simple) symmetrization scheme, is carried out by  T$_{2}$-construction, and this construction ensures that the corresponding problem $\sharpcspstar_{2}(f)$ is AP-equivalent to 
$\sharpcspstar_{2}(Sym(f))$. 
When $f$ is outside of $SIG$, $\sharpcspstar_{2}(Sym(f))$ 
further becomes AP-equivalent to certain symmetric Holant$^*$ problems, and thus we can appeal to the dichotomy theorem of Cai \etal for 
symmetric Holant$^*$ problems. When $f$ is in $SIG_1$, on the contrary, we need another symmetric ``binary'' signature alongside $Sym(f)$. Employing another symmetrization scheme, we T$_{2}$-construct such a signature, denoted $SymL(f)$, from $f$. Moreover, this new signature is ``parametrized'' so that we can discuss an infinite number of similar signatures simultaneously. To apply  Cai \etalc's dichotomy theorem, the two symmetrized signatures must fail to meet a few special conditions.    
To prove that this is indeed the case, we falsely assume that those conditions are met. Now, we  translate the conditions into a set of certain low-degree multivariate polynomial equations that have a common solution in $\complex$. We then try to argue that there is no such common solution, contradicting our initial assumption.  Notably, this argument requires only an elementary analysis of low-degree polynomial equations and the whole analysis is easy and straightforward to follow. This nice feature is an advantage and strength of our argument. 

%%%
%%%
\ms

To prove the two main theorems, the rest of this paper is organized as follows. First, we describe fundamental notions and notations in Section \ref{sec:preliminaries}, including signatures, Holant problems, \#CSPs, AP-reduction, and holographic transformation. 
We then introduce two new technical tools---T$_{2}$-constructibility and parametrized symmetrization---for the description of the 
proofs of our main theorems (Theorems 
 \ref{outside-SIG}--\ref{main-theorem}). 
The notion of T$_{2}$-constructibility is explained in Section \ref{sec:T-const}, and the notions of (simple) symmetrization scheme and parametrized symmetrization scheme appear respectively in Sections \ref{sec:asymmetric-main} and 
\ref{sec:first-symmetrization}. 
Many fundamental properties of those symmetrization schemes are presented in Section \ref{sec:SymL(f)}. 
Theorem \ref{outside-SIG} relies on Proposition \ref{SAT-ONE3-EQ2} and its proof  appears in Section \ref{sec:SAT-reduction}. 
In contrast, the proof of Theorem \ref{main-theorem} uses two key propositions, Propositions \ref{computability-result}--\ref{h0+h2=h1+h3=0}, where  Proposition \ref{computability-result} is proven in Section \ref{sec:key-props}, and 
the proof of Proposition \ref{h0+h2=h1+h3=0} is given in 
Section \ref{sec:outline-proof} based on Proposition \ref{SymL-non-degenerate}--\ref{prop-type-III}. Finally, Proposition \ref{SymL-non-degenerate} is proven  in Section \ref{sec:proof-SymL}, and Proposition \ref{prop-type-I}--\ref{prop-type-III} are explained in Sections \ref{sec:type-I}--\ref{sec:type-III}, completing the proof of Proposition \ref{h0+h2=h1+h3=0}.

%%%%%
%%%%%
\section{Fundamental Notions and Notations}\label{sec:preliminaries}

We briefly present fundamental notions and notations, which will be used in later sections. Let $\nat$ denote the set of all {\em natural numbers} (\ie non-negative integers). For convenience, the notation $\nat^{+}$ expresses $\nat-\{0\}$.  Moreover, $\real$ and $\complex$ denote respectively the sets of all {\em real numbers} and of all {\em complex numbers}. 
For any complex number $\alpha$, 
$|\alpha|$ and $\arg(\alpha)$ denote the {\em absolute value} and the {\em argument} of $\alpha$, respectively. 
For each number $n\in\nat^{+}$, $[n]$ denotes the integer set $\{1,2,\ldots,n\}$. 
For a position integer $k$, let $S_k$ denote the set of all permutations over $[k]$. For brevity, we express each permutation $\sigma\in S_k$ as $(a_1a_2\ldots a_k)$  to mean that $\sigma(i)=a_{i}$ for every index $i\in[k]$. 
We always treat vectors as {\em row vectors}, unless stated otherwise. 
To simplify descriptions of compound conditions and requirements among Boolean variables, we informally use {\em logical connectives}, 
such as ``$\wedge$'' (AND), ``$\vee$'' (OR), and ``$not$'' (NOT). An  example of such usage is: $(g_1=0\wedge g_0+g_2=0)\vee not(g_0=g_2=0)$.

%%%%%
%%%%%
\subsection{Signatures and Relations}

The most fundamental concept in this paper is ``signature'' on the 
Boolean domain. Instead of the conventional term ``constraint,'' we intend in this paper to use this term ``signature.'' 
A {\em  signature}  of arity $k$ is a 
complex-valued function of arity $k$; that is, $f$ is a map from $\{0,1\}^{k}$ to $\complex$. 
Assuming the standard lexicographic order on $\{0,1\}^{k}$, 
we conveniently express $f$ as a row-vector consisting of its output values, which can be identified with an element in the space $\complex^{2^{k}}$. 
For instance, if $f$ has arity $2$, then $f$ is expressed as $(f(00),f(01),f(10),f(11))$. 
A signature $f$ is called {\em symmetric} if $f$'s values depend only on the {\em Hamming weight} of inputs.  
An {\em asymmetric} signature, on the contrary, is a 
signature that is not symmetric. 
When $f$ is an arity-$k$ symmetric function, we use another succinct notation  $f=[f_0,f_1,\ldots,f_k]$, where each $f_i$ is the value of $f$ on inputs of 
Hamming weight $i$. For example, the equality function ($EQ_k$) of arity $k$ is expressed as $[1,0,\ldots,0,1]$ ($k-1$ zeros).  
{\em Unary signatures} (\ie signatures of arity $1$), in particular, play an essential role in this paper. 

A {\em relation} of arity $k$ is a subset of $\{0,1\}^k$. Such a relation can be also viewed as a function mapping Boolean variables to $\{0,1\}$ (\ie $x\in R$ iff $R(x)=1$, for every $x\in\{0,1\}^k$) and it can be treated as a ``Boolean'' signature. For instance, logical relations $OR$, $NAND$,  
and $Implies$ are expressed as ``signatures'' in the following obvious manner: $OR=[0,1,1]$, $NAND=[1,1,0]$,  and $Implies=(1,1,0,1)$. 
In addition, we define $ONE_3=[1,1,0,0]$, which means that the total number of $1$s in any satisfying assignment should equal one. 

To simplify our further descriptions, it is useful to introduce the following two special sets of signatures. 
First, let $\UU$ denote the set of all unary signatures. 
Next, let $\DG$ denote the set of all signatures $f$ of arity $k$ that are expressed by products of $k$ unary functions, which are applied respectively to $k$ variables. A signature in $\DG$ is called {\em degenerate}.  Note that, for ternary symmetric signature $f=[a_0,a_1,\ldots,a_k]$, $f$ is non-degenerate if and only if the rank of $\tinycomb{a_0\;a_1\;\cdots\; a_{k-1}}{a_1\;a_2\;\cdots\;a_k\;\;\;\;}$ is exactly two (see, \eg \cite{CLX09x}). 

%%%%%%%%%%%%%%%%
\subsection{\#CSPs and Holant Problems}

In an undirected bipartite graph $G=(V_1|V_2,E)$ (where $V_1,V_2$ are {\em vertex sets} and $E$ is an {\em edge set}), 
all nodes in $V_1$ appear on the left-hand side and all nodes in $V_2$ appear on the right-hand side of the graph. 
For any  vertex $v$, the {\em incident set} $E(v)$ of $v$ is a set of all edges {\em incident} on $v$, and $deg(v)$ is the {\em degree} of $v$. For any matrix $A$, the notation $A^{T}$ denotes the {\em transposed matrix} of $A$. 

Let us define complex-weighted (Boolean) $\sharpcsp$ problems.   Throughout this paper, the notation $\FF$ often denotes an arbitrary set of signatures of arity at least $1$.  
Conventionally, the term ``constraint'' is used to describe 
a function mapping variables on a certain domain; nonetheless, 
as we have stated in the previous subsection, 
we wish to use the term ``signature'' instead. 
Limited to a given set $\FF$, a 
complex-weighted $\sharpcsp$ problem, denoted $\sharpcsp(\FF)$, takes as an input instance a finite subset $H$ of all elements of the form $\pair{h,(x_{i_1},x_{i_2},\ldots,x_{i_k})}$, where a signature $h\in\FF$ is defined on   $(x_{i_1},x_{i_2},\ldots,x_{i_k})$ of Boolean variables $\{x_1,x_2,\ldots,x_n\}$ with $i_1,\ldots,i_k\in[n]$,  and the problem outputs the complex value: 
\[
\sum_{x_1,x_2,\ldots,x_n\in\{0,1\}} \prod_{\pair{h,x'}\in H} h(x_{i_1},x_{i_2},\ldots,x_{i_k}),
\] 
\sloppy where $x'= (x_{i_1},x_{i_2},\ldots,x_{i_k})$. For brevity, we often express  $h(x_{i_1},x_{i_2},\ldots,x_{i_k})$ to mean  $\pair{h,(x_{i_1},x_{i_2},\ldots,x_{i_k})}$ whenever it is clear from the context. 
The {\em degree} of an input instance to $\sharpcsp(\FF)$ 
is the greatest number of times that any variable 
appears among its signatures.   
For any positive integer $d$, $\sharpcsp_{d}(\FF)$ expresses the restriction of $\sharpcsp(\FF)$ to instances of degrees at most $d$.

We can view a counting problem \#CSPs from a slightly different perspective, 
known as a {\em Holant framework}, and we pay our attention to so-called  {\em Holant problems}.  
An input instance to a Holant problem is a signature grid that  
contains an undirected  graph $G$, in which all nodes are labeled by signatures in $\FF$.   
More formally, following the terminology developed in \cite{CL07,CL08}, 
 we define 
a {\em bipartite Holant problem} $\holant(\FF_1|\FF_2)$  as a counting problem that takes a {\em (bipartite) signature grid}  $\Omega =(G,\FF'_1|\FF'_2,\pi)$, where $G=(V_1|V_2,E)$ is a finite undirected bipartite graph, two ``finite'' subsets $\FF'_1\subseteq \FF_1$ and $\FF'_2\subseteq \FF_2$, and a {\em labeling function} $\pi:V_1\cup V_2\rightarrow\FF_1'\cup\FF'_2$ such that  $\pi(V_1)\subseteq \FF'_1$ and $\pi(V_2)\subseteq \FF'_2$, and each vertex $v\in V_1\cup V_2$ is labeled by a signature $\pi(v): \{0,1\}^{deg(v)}\rightarrow\complex$. For convenience, we often write $f_{v}$ for $\pi(v)$. 
Let $Asn(E)$ be the set of all edge assignments $\sigma: E\rightarrow \{0,1\}$. The objective of this problem is to compute the following value $\holant_{\Omega}$: 
\[
\holant_{\Omega} = \sum_{\sigma\in Asn(E)} \prod_{v\in V}f_{v}(\sigma|E(v)), 
\]    
where $\sigma|E(v)$ denotes the binary string $(\sigma(w_1),\sigma(w_2),\cdots,\sigma(w_k))$ if $E(v)=\{w_1,w_2,\ldots,w_k\}$, sorted in a certain pre-fixed order by $f$. 

We often view 
$\sharpcsp(\FF)$ (as well as $\sharpcsp_{d}(\FF)$) as a special case of bipartite Holant problem of the following form: an instance to  $\sharpcsp(\FF)$ is a bipartite graph $G$, where all vertices on the left-hand side, each of which represents a variable, are labeled by equality functions ($EQ_k$) and all vertices on the right-hand side are labeled by constraints. Whenever variables appear in constraints, edges are drawn between their corresponding nodes on each side of the graph. 
In terms of Holant problems, therefore,  $\sharpcsp(\FF)$ coincides with 
$\holant(\{EQ_{k}\}_{k\geq1}|\FF)$. Throughout this paper, we interchangeably take these two different views of 
complex-weighted $\sharpcsp$ problems. With this Holant viewpoint, the degree of an instance is just the maximum degree of nodes that appear on the left-hand side of a bipartite graph in the instance.

The following abbreviations are useful in this paper;   
we write $\sharpcsp(f,\FF,\GG)$ to mean $\sharpcsp(\{f\}\cup \FF\cup\GG)$ and $\holant(f,\FF_1|\FF_2,\GG)$ to 
mean $\holant(\{f\}\cup\FF_1|\FF_2\cup\GG)$, for example.  
In particular, we abbreviate $\sharpcsp(\UU,\FF)$,  $\sharpcsp_{d}(\UU,\FF)$,  and $\holant(\UU,\FF_1|\UU,\FF_2)$ as  $\sharpcspstar(\FF)$, $\sharpcspstar_{d}(\FF)$, and $\holantstar(\FF_1|\FF_2)$, respectively. 

In the end, as a concrete example of counting problem, we introduce a complex-weighted version of the {\em counting satisfiability problem}, denoted $\#\mathrm{SAT}_{\complex}$ in \cite{Yam10a}.   
Let $\phi$ be any propositional formula and let $V(\phi)$ denote the set of all variables that appear in $\phi$. For this formula $\phi$, we consider 
a series $\{w_x\}_{x\in V(\phi)}$ of {\em node-weight functions} $w_x:\{0,1\}\rightarrow \complex-\{0\}$. Given the pair $(\phi,\{w_x\}_{x\in V(\phi)})$, $\#\mathrm{SAT}_{\complex}$ asks to compute the sum of all weights $w(\sigma)$ for every truth assignment $\sigma$ that satisfies $\phi$, where $w(\sigma)$ is the product of all $w_x(\sigma(x))$ for any $x\in V(\phi)$.

%%%%%%%%%%%%%%%%%
\subsection{FP$_{\complex}$ and AP-Reducibility}\label{sec:randomized-scheme}

To compare the exact complexities of two Holant problems, 
Cai \etalc~\cite{CLX09x} utilized a complex-valued analogue of 
(polynomial-time) Turing reducibility. In contrast, for approximation complexity, 
Dyer, Goldberg, Greenhill, and Jerrum \cite{DGGJ03} introduced 
so-called ``AP-reducibility'' to measure the approximation complexity 
of  various unweighted \#CSPs. Here, we adapt their notion of AP-reducibility. Since all \#CSP$^*$s can be treated as 
complex-valued functions mapping from 
$\{0,1\}^*$ to $\complex$, it suffices for us to develop necessary methodology concerning only complex-valued functions. 

The following notational conventions are taken from  \cite{Yam10a,Yam10b}. 
The notation $\fp_{\complex}$ denotes the collection of all string-based     
functions $f:\{0,1\}^*\rightarrow\complex$ that can be computed deterministically in time polynomial in the lengths of inputs. 
A {\em randomized approximation scheme} for (complex-valued) $F$ is a randomized algorithm that takes a standard input $x\in\Sigma^*$ together with an {\em error tolerance parameter} $\varepsilon\in(0,1)$, and outputs values $w$ with probability at least $3/4$ for which  
\[
2^{-\epsilon} \leq \left| \frac{w}{F(x)}\right| \leq 2^{\epsilon} 
\hs{3} \text{and} \hs{3} 
\left| \arg\left( \frac{w}{F(x)}\right) \right|\leq 2^{\epsilon},
\]
where  we conventionally assume that, whenever $|F(x)|=0$ or $\arg(F(x))=0$, we instead require $|w|=0$ or $|\arg(w)|\leq 2^{\epsilon}$, respectively. 
Furthermore, when 
a randomized approximation scheme for $F$ runs in time polynomial in $(|x|,1/\varepsilon)$, we call it a 
{\em fully polynomial(-time) randomized approximation scheme} 
(or simply, {\em FPRAS}) for $F$. 

Now, we are ready to introduce the desired reduction 
between complex-valued functions in our approximation context. 
Given two functions $F$ and $G$, a {\em polynomial-time randomized  approximation-preserving reduction} (or {\em AP-reduction}) from $F$ to $G$ is a randomized algorithm $M$ that takes a pair $(x,\varepsilon)\in\Sigma^*\times(0,1)$ as input instance,  
uses an arbitrary randomized approximation scheme $N$ for $G$ as oracle, 
and satisfies the following three conditions:
(i) $M$ is still a randomized approximation scheme for $F$ independent of a choice of $N$ for $G$;  
(ii) every oracle call made by $M$ is of the form $(w,\delta)$ in $\Sigma^*\times(0,1)$ with $1/\delta \leq p(|x|,1/\varepsilon)$, where $p$ is a fixed polynomial, and its answer is the outcome of $N$ on $(w,\delta)$; and (iii) the running time of $M$ is upper-bounded 
by a certain polynomial in $(|x|,1/\varepsilon)$, which is not depending on the choice of $N$ for $G$. If such an AP-reduction exists, then we 
say that $F$ is {\em AP-reducible} to $G$ and we write $F\APreduces G$. 
If $F\APreduces G$ and $G\APreduces F$, then $F$ and $G$ are 
said to be {\em AP-equivalent} and we use the notation $F\APequiv G$.  
 
The following basic properties of AP-reductions are straightforward from the definition of $\sharpcspstar_{2}(\FF)$'s: given  two signature sets  $\FF$ and $\GG$,  if $\FF\subseteq \GG$, then 
$\sharpcspstar_{2}(\FF)\APreduces \sharpcspstar_{2}(\GG)$. 

Lemma \ref{holant-reduction} gives additional useful properties. 
To prove the lemma, we need the following results proven in \cite{Yam10b}: 
for any signature set $\FF$,   
 $\sharpcspstar(\FF)\APequiv \sharpcspstar_{3}(\FF) \APequiv \holantstar(EQ_3|\FF)$ and  
$\sharpcspstar_{2}(\FF) \APequiv \holantstar(EQ_2|\FF)$. 

\begin{lemma}\label{holant-reduction}
(1) For any signature $f$, $\holantstar(EQ_2|f) \APreduces \holantstar(EQ_3|f)$.  
(2) For any set $\FF$ of signatures,  $\holantstar(EQ_2|\FF)\APreduces \sharpcspstar(\FF)$.
\end{lemma}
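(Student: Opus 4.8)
The plan is to establish the two reductions essentially by direct gadget constructions in the Holant framework, using the degree bookkeeping implicit in the equivalences $\sharpcspstar_{2}(\FF) \APequiv \holantstar(EQ_2|\FF)$ and $\sharpcspstar(\FF) \APequiv \holantstar(EQ_3|\FF)$ recalled just above.

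For part (1), $\holantstar(EQ_2|f) \APreduces \holantstar(EQ_3|f)$, the key observation is that a degree-$2$ equality node $EQ_2$ on the left-hand side is nothing more than an edge connecting two occurrences of right-hand-side signatures. First I would take an arbitrary instance $\Omega$ of $\holantstar(EQ_2|f)$, i.e.\ a bipartite signature grid whose left vertices all carry $EQ_k$ for $k\le 2$ (and whose right vertices carry $f$ or unary signatures), and show how to simulate it by an instance of $\holantstar(EQ_3|f)$. Each left vertex of degree $2$ is replaced by a left vertex of degree $3$ carrying $EQ_3$, where the third ``dangling'' edge is capped off by a fresh unary signature that forces no constraint — concretely, attach the unary signature $[1,1]$ (the all-ones unary function), which is available for free in $\holantstar$. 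Since $EQ_3$ restricted by $[1,1]$ on one wire is exactly $EQ_2$ on the remaining two wires (the bit on the capped wire is summed out and both assignments give weight $1$, and $EQ_3$ already forces all three bits equal so the capped bit is pinned), the Holant value is preserved exactly. Left vertices of degree $1$ (carrying $EQ_1=[1,1]$ itself, which is just a free unary) and degree $0$ cause no difficulty. This is a many-one style construction yielding an exact value-preserving polynomial-time reduction, hence trivially an AP-reduction.

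For part (2), $\holantstar(EQ_2|\FF)\APreduces \sharpcspstar(\FF)$, I would chain the equivalences: by the cited result $\sharpcspstar(\FF)\APequiv \holantstar(EQ_3|\FF)$, it suffices to AP-reduce $\holantstar(EQ_2|\FF)$ to $\holantstar(EQ_3|\FF)$, and this is exactly the statement of part (1) applied signature-by-signature — the cap-off gadget does not interact with which signatures sit on the right, so the same construction works verbatim with an arbitrary set $\FF$ in place of a single $f$. Concretely: given an instance of $\holantstar(EQ_2|\FF)$, replace every degree-$2$ left equality node by a degree-$3$ one capped with the free unary $[1,1]$, obtaining an instance of $\holantstar(EQ_3|\FF)$ with the same Holant value; then invoke $\holantstar(EQ_3|\FF)\APequiv\sharpcspstar(\FF)$ to finish. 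Transitivity of $\APreduces$ closes the argument.

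I do not expect a genuine obstacle here; the only point requiring mild care is checking that capping $EQ_3$ with $[1,1]$ genuinely realizes $EQ_2$ rather than, say, a scaled or distorted version — one must verify $\sum_{b\in\{0,1\}} EQ_3(x,y,b)\cdot[1,1](b) = EQ_2(x,y)$ for all $x,y$, which is immediate since the left side is $EQ_3(x,y,0)+EQ_3(x,y,1)$ and $EQ_3(x,y,b)=1$ iff $x=y=b$, so the sum is $1$ iff $x=y$. A secondary bookkeeping point is that introducing the new degree-$3$ node and the attached unary does not inflate the degree of any variable on the right-hand side, so the resulting instance is legitimately an instance of the target problem; this is clear because the new edge is incident only to the freshly added unary vertex. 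With those checks in place the reductions are exact and polynomial-time, a fortiori AP-reductions, so no error-tolerance propagation analysis is even needed.
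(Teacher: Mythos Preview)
Your proposal is correct and follows essentially the same approach as the paper: replace each $EQ_2$ on the left by an $EQ_3$ with its extra wire capped by the free unary $[1,1]$, then for part~(2) invoke the cited equivalence $\holantstar(EQ_3|\FF)\APequiv\sharpcspstar(\FF)$ together with part~(1). Your explicit verification of $\sum_{b}EQ_3(x,y,b)[1,1](b)=EQ_2(x,y)$ and the degree bookkeeping are slightly more detailed than the paper's one-line treatment, but the argument is the same.
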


\begin{proof}
(1) This can be easily shown by replacing,  with $\sum_{x_3\in\{0,1\}} EQ_{3}(x_1,x_2,x_3)\cdot [1,1](x_3)$, each signature $EQ_2(x_1,x_2)$ that appears in any signature grid to $\holantstar(EQ_2|\FF)$.  

(2) Using (1), we obtain $\holantstar(EQ_2|\FF) \APreduces 
\holantstar(EQ_3|\FF)$. 
The remaining AP-equivalence   $\holantstar(EQ_3|\FF) \APequiv  \sharpcspstar(\FF)$  follows from \cite{Yam10b}. 
\end{proof}

%%%%%
\subsection{Holographic Transformation} 

The notion of {\em holographic transformation} was introduced by Valiant \cite{Val02b,Val08} to extend the scope of the application of holographic algorithms. Cal and Lu \cite{CL08} later contributed to its abstract   formulation.  
Holographic transformation is one of the few technical tools that still work together with AP-reducibility. 
Since each signature $f$ is expressed as a row vector, whenever we want to use a column-vector form of $f$, 
we formally write $f^{T}$ to avoid any confusion that may incur. 

We fix a $2\times2$ nonsingular matrix $M$ and let $f$ and $g$ be signatures of arity $k$ and $m$, respectively. For any signature grid $\Omega=(G,\{g\}|\{f\},\pi)$, we define another signature grid $\Omega'$ by simply replacing the nodes's labels $g$ and $f$ respectively with $f(M^T)^{\otimes k}$ and $g(M^{-1})^{\otimes m}$, where $\otimes$ means the {\em tensor product}. A key observation made by Valiant is that $\holant_{\Omega}$ equals $\holant_{\Omega'}$. More generally, let $\FF$ and $\GG$ be any two sets of signatures. We conveniently write ${\GG}(M^{-1})^{\otimes}$ for the set $\{ g(M^{-1})^{\otimes k} \mid   f\in\GG, \text{$f$ has arity $k$}\}$ and $\FF (M^T)^{\otimes}$ for the set $\{ f(M^T)^{\otimes k}\mid f\in\FF, \text{$f$ has arity $k$}\}$.
(Note that, for any vectors $f,g$ of dimension $k$, the equation $h=f(M^T)^{\otimes k}$ is equivalent to the equation 
$h^T = M^{\otimes k}f^T$.)  By the above observation, 
holographic transformation obviously preserves the exact complexity of Holant problems under Turing reductions, and thus obtain 
Valiant's so-called {\em Holant theorem}: $\holant(\GG|\FF)$ is Turing equivalent to $\holant({\GG}(M^{-1})^{\otimes}|{\FF}(M^{T})^{\otimes})$ 
for any $2\times 2$  nonsingular complex matrix $M$ (see, \eg \cite{CL08,CL07,CLX09x} 
for a discussion). It is important to note 
that the Holant theorem is still valid under AP-reductions, because we can trivially construct an AP-reduction machine computing, \eg $\holant_{\Omega'}$ from $\holant_{\Omega}$ defined above. Since unary signatures are transformed into unary signatures, we therefore obtain the following statement.  

\begin{lemma}\label{holographic-transform}
$\holantstar(\GG|\FF)\APequiv \holantstar({\GG}(M^{-1})^{\otimes}|{\FF}(M^{T})^{\otimes})$ for any $2\times 2$  nonsingular complex matrix $M$.  
\end{lemma}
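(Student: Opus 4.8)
The plan is to read the lemma off Valiant's Holant theorem --- recalled in its Turing-reduction form immediately before the statement --- together with one elementary observation: the set $\UU$ of all unary signatures is invariant under every holographic transformation. First I would unfold the abbreviations, writing $\holantstar(\GG|\FF)=\holant(\UU,\GG\,|\,\UU,\FF)$ and $\holantstar(\GG(M^{-1})^{\otimes}|\FF(M^{T})^{\otimes})=\holant(\UU,\GG(M^{-1})^{\otimes}\,|\,\UU,\FF(M^{T})^{\otimes})$, so that the claim becomes a comparison of two ordinary (non-star) bipartite Holant problems after suitable transformation.

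The crucial point is the invariance of $\UU$. A unary signature $u$ is a vector in $\complex^{2}$, and its holographic transform is $u(M^{-1})^{\otimes 1}=uM^{-1}$ when $u$ labels a left-hand node, or $u(M^{T})^{\otimes 1}=uM^{T}$ when it labels a right-hand node; both are again unary signatures. Since $M$ is nonsingular, the maps $u\mapsto uM^{-1}$ and $u\mapsto uM^{T}$ are bijections of $\complex^{2}$ onto itself, whence $\UU(M^{-1})^{\otimes}=\UU$ and $\UU(M^{T})^{\otimes}=\UU$. Therefore, applying the holographic transformation by $M$ to a signature grid of $\holant(\UU,\GG\,|\,\UU,\FF)$ --- replacing every left-hand label $h$ by $h(M^{-1})^{\otimes}$ raised to its arity and every right-hand label $h$ by $h(M^{T})^{\otimes}$ raised to its arity --- produces exactly a signature grid of $\holant(\UU,\GG(M^{-1})^{\otimes}\,|\,\UU,\FF(M^{T})^{\otimes})$ over the same underlying graph, and by the Holant theorem the two grids have equal Holant values.

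It then remains to package this equality as an AP-reduction in both directions, which I would do explicitly. For $\holantstar(\GG|\FF)\APreduces\holantstar(\GG(M^{-1})^{\otimes}|\FF(M^{T})^{\otimes})$: on input $(\Omega,\varepsilon)$ the reduction machine computes the transformed grid $\Omega'$ (an oracle-free polynomial-time step, as $M$ is fixed and tensoring $2\times 2$ matrices up to a signature's arity is cheap), queries the oracle once on $(\Omega',\varepsilon)$, and returns the answer verbatim. Since $\holant_{\Omega}=\holant_{\Omega'}$ \emph{exactly}, any $w$ that meets the modulus/argument approximation bounds of Section~\ref{sec:randomized-scheme} relative to $\holant_{\Omega'}$ meets them relative to $\holant_{\Omega}$ with the same $\varepsilon$; conditions (i)--(iii) of AP-reducibility are immediate (a single oracle call with $\delta=\varepsilon$, polynomial running time). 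The converse direction is the same argument with $M$ replaced by $M^{-1}$, using $(M^{-1})^{-1}=M$ and $(M^{-1})^{T}=(M^{T})^{-1}$ so that transforming $\GG(M^{-1})^{\otimes}$ and $\FF(M^{T})^{\otimes}$ back by $M^{-1}$ recovers $\GG$ and $\FF$; chaining the two reductions yields $\APequiv$.

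There is essentially no hard obstacle here: the proof is a transcription of the Holant theorem. The one place that warrants a line of care --- and it is what the lemma is really about --- is the ``star'' bookkeeping, namely that the transformation introduces no genuinely new unary signatures, so that the free-unary convention is preserved; this is precisely the invariance $\UU(M^{-1})^{\otimes}=\UU(M^{T})^{\otimes}=\UU$ noted above. The only other thing worth remarking is that, because the two Holant values coincide (not merely approximately), the reduction is trivially approximation-preserving with just one oracle call, so no amplification or error-budget splitting is needed.
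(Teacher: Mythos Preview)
Your proposal is correct and follows exactly the route the paper takes: the paper justifies the lemma in the paragraph immediately preceding it by noting that the Holant theorem yields $\holant_{\Omega}=\holant_{\Omega'}$ exactly (so the AP-reduction is trivial) and that ``unary signatures are transformed into unary signatures'' (your invariance $\UU(M^{-1})^{\otimes}=\UU(M^{T})^{\otimes}=\UU$). You have simply spelled out these two sentences in full detail.
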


This lemma will be extensively used to prove one of the four key propositions, namely, Proposition \ref{SAT-ONE3-EQ2}.

%%%%%%%%%%%%%%%%%%%%%%%%%%%
%%%%%%%%%%%%%%%%%%%%%%%%%%%
\section{Main Theorems}\label{sec:main-theorem}

Now, we challenge an unsolved question of determining the 
approximation complexity of degree-$2$ 
$\sharpcspstar$s. With a great help of two new powerful techniques for 
``arbitrary'' signatures, we can give a partial answer to this question by presenting two main theorems---Theorems \ref{outside-SIG} and \ref{main-theorem}---for the degree-$2$ $\sharpcspstar$s with ternary signatures. The first technical tool is a modification of {\em T-constructibility}, which was shown effective for unbounded-degree $\sharpcspstar$s \cite{Yam10a}. The second tool is a clear, systematic method of transforming arbitrary signatures into slightly more complicated but ``symmetric'' signatures. These techniques will be explained in details in the subsequent sections. The two theorems may suggest a future direction of the intensive research on \#CSPs (on an arbitrary domain). 

%%%%%
\subsection{Symmetric Signatures of Arity 3}

To state our main theorems, we begin with a short discussion on symmetric signatures of arity $3$. 
Recently, a crucial progress was made by Cai, Lu, and Xia \cite{CLX09x} in the field of Holant problems, in particular, ``symmetric'' Holant$^{*}$  problems. 
A counting problem $\holantstar(f)$ with a symmetric signature $f$ is shown to be classified into only two types: either it is polynomial-time solvable or it is at least as hard as $\#\mathrm{SAT}_{\complex}$. 
In this classification, 
Cai \etal recognized two useful categories of ternary symmetric signatures. A ternary signature of the first category has the form $[a,b,-a,-b]$ 
with two constants $a,b\in\complex$. In contrast, a ternary signature  $[a,b,c,d]$ of the second category satisfies the following technical condition: there exist two constants $\alpha,\beta\in\complex$ (not both zero) for which 
$\alpha a+\beta b-\alpha c=0$ and $\alpha b+\beta c-\alpha d=0$. For later convenience, we call this pair $(\alpha,\beta)$ the {\em binding coefficients} of the signature. 
To simplify our description, the notations $Sig^{(1)}$ and $Sig^{(2)}$  
respectively  
denote the sets of all signatures of the first category and of the second category.   

\sloppy Regarding $Sig^{(1)}$ and $Sig^{(2)}$, Cai \etal proved three key lemmas, which lead to their final dichotomy theorem for symmetric Holant$^*$ problems: unless target Holant$^*$ problems are in $\fp_{\complex}$, 
they are Turing reducible to one of the following three problems,  $\holantstar(EQ_3|OR)$, $\holantstar(EQ_3|NAND)$, and 
$\holantstar(ONE_3|EQ_2)$. For later convenience, we define $\BB =\{(EQ_3|OR), (EQ_3|NAND), (ONE_3|EQ_2)\}$. 
Notice that the proofs of their lemmas require only a holographic transformation technique and a ``realizability'' technique. Since these tools still work in our  approximation context, we obtain the following three statements, which become a preparation to the description of our main theorems.

\begin{lemma}\label{cai-lemmas}
Let $f$ be any ternary non-degenerate symmetric signature and let $g=[c_0,c_1,c_2]$ be any non-degenerate signature. Each of the following statements holds. 
\begin{enumerate}\vs{-1}
\item\label{cai-third-case}  
If $f\not\in Sig^{(1)}\cup Sig^{(2)}$, then 
there exists a pair $(g_1|g_2)\in\BB$ such that  $\holantstar(g_1|g_2)\APreduces \holantstar(EQ_2|f)$. 
\vs{-2}
\item\label{cai-first-case} If $f\in Sig^{(1)}$, 
$g\not\in\{[\lambda,0,\lambda]\mid \lambda\in\complex\}$, 
and $c_0+c_2\neq 0$, then there exists  a pair $(g_1|g_2)\in\BB$ such that  $\holantstar(g_1|g_2)\APreduces \holantstar(EQ_2|f,g)$. 
\vs{-2}
\item\label{cai-second-case} If $f\in Sig^{(2)}$  
with its binding coefficients $(\alpha,\beta)$,  
 $g\not\in\{[2\alpha\lambda,\beta\lambda,2\alpha\lambda]\mid \lambda\in\complex\}$, and $\alpha c_0+ \beta c_1 - \alpha c_2\neq 0$, 
then there exists  a pair $(g_1|g_2)\in\BB$ such that  $\holantstar(g_1|g_2)\APreduces \holantstar(EQ_2|f,g)$. 
\end{enumerate} 
\end{lemma}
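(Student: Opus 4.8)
The plan is to recover these three statements as the AP-reducibility counterparts of Cai, Lu, and Xia's key hardness lemmas for symmetric Holant$^{*}$ problems \cite{CLX09x}, transcribing CLX's proofs step by step into the approximation setting. As the discussion above records, those proofs are built out of only two kinds of move: (a)~holographic transformations of the signatures involved, and (b)~\emph{realizability} constructions, in which a desired signature is produced as the signature of a finite gadget assembled from the available signatures (here $f$, or $f$ and $g$, together with arbitrary unary signatures). Move (a) is already compatible with AP-equivalence --- this is precisely Lemma \ref{holographic-transform}. For move (b) I would record the elementary fact that, when a signature $h$ is realized by such a gadget over a set $\GG$, substituting the gadget for each $h$-labeled vertex of an input signature grid is a polynomial-time operation that changes the Holant value only by an explicitly computable nonzero scalar; hence $\holantstar(\ldots,h,\ldots)\APreduces\holantstar(\ldots,\GG,\ldots)$ by a Turing reduction that determines the answer exactly, which is trivially an AP-reduction. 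Since CLX's lemmas are phrased for the $\holant^{*}$-form of the problems, which coincides with the $\holantstar(EQ_2|\cdot)$-form up to the trivial wiring of an $EQ_2$ (itself an AP-reduction), granting (a) and (b) turns each reduction in CLX's argument into an AP-reduction verbatim.

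For statement (\ref{cai-third-case}) I would follow CLX's case analysis on the second-order recurrence satisfied by the non-degenerate ternary symmetric signature $f$. For non-degenerate $f$, every case in which $\holantstar(EQ_2|f)$ is tractable falls within $Sig^{(1)}\cup Sig^{(2)}$, so the two hypotheses $f\notin Sig^{(1)}$ and $f\notin Sig^{(2)}$ place us in one of CLX's hardness cases; in each of these CLX pick a suitable nonsingular $M$ and build, over $\{\,f(M^{T})^{\otimes3}\,\}\cup\UU$, a finite gadget simulating one of the three base problems $\holantstar(EQ_3|OR)$, $\holantstar(EQ_3|NAND)$, $\holantstar(ONE_3|EQ_2)$. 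Replaying these gadgets with $\APreduces$ and undoing $M$ via Lemma \ref{holographic-transform} gives $\holantstar(g_1|g_2)\APreduces\holantstar(EQ_2|f)$ for the appropriate $(g_1|g_2)\in\BB$.

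Statements (\ref{cai-first-case}) and (\ref{cai-second-case}) go the same way, with the extra non-degenerate binary signature $g=[c_0,c_1,c_2]$ brought in because $\holantstar(EQ_2|f)$ alone is tractable when $f\in Sig^{(1)}$ (resp.\ $f\in Sig^{(2)}$). I would holographically normalize $f$ and $g$ \emph{simultaneously}, bringing the image of $f$ into CLX's canonical $Sig^{(1)}$- (resp.\ $Sig^{(2)}$-) form, so that the sole function of $g$ is to escape the tractable class carried by $f$. The side conditions in the lemma are exactly CLX's non-triviality requirements ensuring that the image of $g$ is not absorbed into that structure: ``$g\notin\{[\lambda,0,\lambda]\mid\lambda\in\complex\}$'' (resp.\ ``$g\notin\{[2\alpha\lambda,\beta\lambda,2\alpha\lambda]\mid\lambda\in\complex\}$'') prevents $g$ from being a scalar multiple of the equality-type binary signature attached to $f$'s recurrence, while ``$c_0+c_2\neq0$'' (resp.\ ``$\alpha c_0+\beta c_1-\alpha c_2\neq0$'') keeps $g$ out of the degenerate locus compatible with that recurrence. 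With both in force, CLX's gadget over $\{\,f(M^{T})^{\otimes3},\,g(M^{T})^{\otimes2}\,\}\cup\UU$ simulates one of the three base problems, and transcribing it with $\APreduces$ and undoing $M$ yields the claimed reductions.

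The one thing that must be checked rather than merely cited is that every realizability step in CLX's proofs of these three lemmas is \emph{direct} --- a single finite gadget whose signature is exactly a scalar times the target --- rather than a polynomial-interpolation argument over a family of gadgets, since compounded interpolation could destroy the relative-error guarantees required of an AP-reduction. (The assertion above that only holographic transformation and realizability are used is precisely the claim that no such interpolation is needed here; should some case require it, one would instead invoke the interpolation-compatible-with-AP machinery of \cite{Yam10a,Yam10b}.) The remaining work --- confirming that CLX's non-triviality conditions on the normalized $g$ pull back to exactly the displayed conditions on $g=[c_0,c_1,c_2]$, and that Lemma \ref{holographic-transform} carries the holographic images of $EQ_3,OR,NAND,ONE_3,EQ_2$ so that the three base problems reappear intact --- is routine bookkeeping. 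I expect the directness check to be the main obstacle, being the only place where one genuinely cannot quote CLX.
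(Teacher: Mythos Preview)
Your proposal is correct and follows essentially the same approach as the paper: both arguments transcribe Cai--Lu--Xia's hardness lemmas into the AP setting by observing that the only moves used are holographic transformations (handled by Lemma~\ref{holographic-transform}) and direct gadget realizations (which preserve AP-reducibility because the substitution changes the Holant value by a computable nonzero scalar). The paper packages your ``move (b)'' into its formal notion of T$_{2}$-constructibility (Section~\ref{sec:T-const}) and invokes Lemma~\ref{T-const-reduction} in place of your ad-hoc gadget-substitution argument, but the content is identical; your remark that the crucial check is the absence of polynomial interpolation in CLX's argument is exactly the point the paper is making when it says ``these tools still work in our approximation context.''
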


\begin{proof}
Here, we will prove only (2). In this proof, we need a notion of T$_{2}$-constructibility as well as Lemma \ref{T-const-reduction}, which will be described in Section \ref{sec:T-const-tech}.   
Following an argument of Cai, Lu, and Xia \cite{CLX09x}, 
for given signatures $f$ and $g$, 
we first choose a pair $(g_1|g_2)\in\BB$, a signature $h$, and  a $2\times2$ nonsingular matrix $M$ such that  $EQ_2=g_2(M^{-1})^{\otimes 2}$ and  $h=g_1(M^T)^{\otimes 3}$; in other words,    
$\holantstar(g_2|g_1)$ is transformed into $\holantstar(EQ_2|h)$ 
by Valiant's holographic transformation.  Notice that $\holantstar(g_1|g_2)$ and $\holantstar(g_2|g_1)$ are essentially identical.  
By Lemma \ref{holographic-transform}, we conclude that $\holantstar(g_1|g_2)\APreduces \holantstar(EQ_2|h)$. By analyzing the argument in \cite{CLX09x}, we can show that, with a certain finite subset $\FF\subseteq\UU$,  
$h$ is T$_{2}$-constructed from signatures in $\FF\cup \{f,g\}$. 
Therefore, by applying Lemma \ref{T-const-reduction}, we immediately obtain the desired AP-reduction: $\holantstar(g_1|g_2)\APreduces \holantstar(EQ_2|f,g)$. 
\end{proof}

As discussed earlier, Holant$^*$ problems $\holantstar(g_1|g_2)$ with  $(g_1|g_2)\in\BB$ are at least as hard as $\#\mathrm{SAT}_{\complex}$ under Turing reductions \cite{CLX09x}.  
When dealing with complex numbers, in general, 
it is not immediately clear that Turing reductions can be automatically replaced by AP-reductions, because a number of ``adaptive'' queries made by Turing reductions might possibly violate certain requirements imposed on the definition of AP-reduction. Despite such a concern, we will be able to prove in Proposition \ref{SAT-ONE3-EQ2} that those problems are indeed 
AP-reduced from $\#\mathrm{SAT}_{\complex}$, and thus Lemma \ref{cai-lemmas} is  still applicable to obtain the $\sharpp_{\complex}$-hardness of certain $\sharpcspstar_{2}(\FF)$'s.  

%%%%%
\subsection{Arbitrary Signatures of Arity 3}\label{sec:asymmetric-main}

Finally, we turn our attention to arbitrary signatures of arity $3$ 
and their associated degree-$2$ $\sharpcspstar$s. 
We have already seen the dichotomy theorem of Cai \etalc~\cite{CLX09x} 
for symmetric Holant$^*$ problems hinge on two particular signature sets  $Sig^{(1)}$ and $Sig^{(2)}$.  In order to obtain  
a similar classification theorem for 
{\em all} ternary signatures, we wish to take the first systematic approach by introducing two useful tools. Since these tools are not limited to a particular type of signatures, as a result, we will obtain a general classification of the approximation complexity of degree-$2$ $\sharpcspstar$s. 
The first new technical tool is  ``symmetrization'' of 
arbitrary signatures. 
Another new technical tool is ``constructibility'' that bridges between symmetrization and degree-$2$ $\sharpcspstar$s. 
Throughout this section, let $f$ denote any ternary signature with complex components; in particular, we assume that $f=(a,b,c,d,x,y,z,w)$.  
Here, we introduce a simple form of {\em symmetrization} of $f$, denoted $Sym(f)$, as follows:  
\begin{equation}\label{Sym(f)-def}
Sym(f)(x_1,y_1,z_1)
= \sum_{x_2,y_2,z_2\in\{0,1\}} f(x_1,x_2,z_2)f(y_1,y_2,x_2)f(z_1,z_2,y_2).
\end{equation}
This symmetrization $Sym(f)$ plays a key role in the description of 
our main theorems. 
As its name suggests, the symmetrization transforms any signature into a symmetric signature.   

\begin{lemma}
For any ternary signature $f$, $Sym(f)$ is a symmetric signature.
\end{lemma}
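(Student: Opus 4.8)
The plan is to verify directly that $Sym(f)$ is invariant under every permutation of its three arguments $(x_1,y_1,z_1)$. Since the symmetric group $S_3$ is generated by the transposition $(1\,2)$ and the $3$-cycle $(1\,2\,3)$, it suffices to check invariance under these two generators, after which invariance under all of $S_3$ follows by composition.

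First I would examine the $3$-cycle. Look at the right-hand side of (\ref{Sym(f)-def}): the three factors are $f(x_1,x_2,z_2)$, $f(y_1,y_2,x_2)$, $f(z_1,z_2,y_2)$, and one sees that the internal summation variables $(x_2,y_2,z_2)$ are threaded cyclically through the three copies of $f$, matched with the external variables $(x_1,y_1,z_1)$ in the same cyclic pattern. Concretely, if I cyclically relabel the external variables $x_1\to y_1\to z_1\to x_1$ and simultaneously cyclically relabel the dummy variables $x_2\to y_2\to z_2\to x_2$, the product of the three $f$-factors is carried to itself (the three summands are just permuted among themselves), and since $(x_2,y_2,z_2)$ is a bound triple summed over all of $\{0,1\}^3$, this relabeling does not change the value of the sum. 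Hence $Sym(f)(x_1,y_1,z_1)=Sym(f)(y_1,z_1,x_1)$, giving invariance under the $3$-cycle.

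The transposition is the step I expect to be the main obstacle, because the cyclic wiring in (\ref{Sym(f)-def}) is manifestly $3$-cyclically symmetric but not obviously symmetric under a swap of two coordinates. I would handle it by rewriting $Sym(f)$ in a coordinate-free form. Viewing $f$ as a tensor, (\ref{Sym(f)-def}) contracts three copies of $f$ around a triangle: each copy contributes one external leg and has its remaining two legs glued to neighbouring copies along the cycle. Because the triangle as an abstract graph has a reflection symmetry, reversing the orientation of the cycle exchanges two of the external legs while fixing the third; one then checks that reversing the cycle corresponds exactly to relabeling the dummy variables in a way that again leaves the total sum unchanged (possibly after also using the $3$-cyclic invariance just established to line things up). Thus $Sym(f)$ is invariant under a transposition as well. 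Combining the two generators, $Sym(f)$ is invariant under all of $S_3$, so by definition it is a symmetric signature, completing the proof. I would present the argument at the level of the two explicit index substitutions rather than grinding through all $2^3\cdot 2^3$ terms, since each substitution is a bijection of the summation domain that fixes the summand.
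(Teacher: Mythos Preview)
Your handling of the $3$-cycle is correct and is essentially the paper's argument. The transposition step, however, has a real gap. The wiring in (\ref{Sym(f)-def}) is an \emph{oriented} triangle: each copy of $f$ has its second slot linked to the \emph{third} slot of the next copy. A reflection of the triangle reverses this orientation, which amounts to swapping slots $2$ and $3$ at every vertex; that is not a relabeling of the dummy variables but a change of the summand itself. Concretely, if you try to match $Sym(f)(x_1,y_1,z_1)$ with, say, $Sym(f)(z_1,y_1,x_1)$ factor by factor according to the fixed first argument, you are forced into conditions like $b=z_2$ and $b=y_2$ simultaneously, so no bijection of $(x_2,y_2,z_2)$ carries one product to the other term-by-term. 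Your escape clause ``possibly after also using the $3$-cyclic invariance'' does not help: composing a transposition with cyclic shifts still lands in the odd coset of $S_3$, and the same obstruction persists.

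The fix is simple and actually makes the transposition step unnecessary. For a \emph{ternary} signature on the Boolean domain, cyclic invariance alone already forces full $S_3$ symmetry: the $3$-cycle acts on $\{0,1\}^3$ with orbits $\{000\}$, $\{100,010,001\}$, $\{110,011,101\}$, $\{111\}$, which are exactly the Hamming-weight classes. Hence a cyclic-invariant function of three Boolean arguments depends only on the Hamming weight and is therefore symmetric. Once you have established $Sym(f)(x_1,y_1,z_1)=Sym(f)(y_1,z_1,x_1)$, you are done. The paper proves just this cyclic identity and then gestures at ``the other remaining cases,'' which, read charitably, is the same observation.
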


\begin{proof}
Let $x_1,y_1,z_1$ be any three variables. First, we want to show that the value  $Sym(f)(x_1,y_1,z_1)$ coincides with $Sym(f)(y_1,z_1,x_1)$.  Let us focus on $Sym(f)(x_1,y_1,z_1)$, which is calculated according to Eq.(\ref{Sym(f)-def}).  To terms inside the summation of Eq.(\ref{Sym(f)-def}), we  apply the following map: $x_2\mapsto z_2$, $z_2\mapsto  y_2$, and $y_2\mapsto x_2$. Although this map does not change the actual value of  $Sym(f)(x_1,y_1,z_1)$, exchanging the order of three $f(\cdot)$'s 
inside the summation immediately produces the valid definition of $Sym(f)(y_1,z_1,x_1)$. Thus, $Sym(f)(x_1,y_1,z_1)$ equals $Sym(f)(y_1,z_1,x_1)$.   Similarly, we can handle the other remaining cases.  Since the signature $Sym(f)$ is independent of the input-variable order, it should be symmetric. 
\end{proof}

Although most of the fundamental properties will be provided in Section \ref{sec:property-h}, here we present a significant nature of the symmetrization: $Sym(\cdot)$ behaves quite differently on $Sig^{(1)}$ and $Sig^{(2)}$. 

\begin{lemma}\label{Sig1-Sym(f)}
Let $f=(a,b,c,d,x,y,z,w)$ be any ternary symmetric signature. 
(1) If $f\in Sig^{(1)}$, then $Sym(f)$ 
is in $\DG$. (2) Assume that $f\in Sig^{(2)}$ with binding coefficients  $(\alpha,\beta)$. If either $\alpha\beta=0$ or $\alpha\beta\neq0\wedge \left(\beta/\alpha+a/b\right)^2=-1$, then $Sym(f)$ is in $Sig^{(2)}$. 
\end{lemma}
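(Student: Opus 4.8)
The plan is to exploit that a symmetric ternary signature lying in $Sig^{(1)}$ or $Sig^{(2)}$ is a linear combination of at most two ``tensor cubes'' of unary signatures, and to see how $Sym(\cdot)$ — whose wiring is a triangle connecting three copies of $f$ — acts on such combinations. Write $f=[f_0,f_1,f_2,f_3]$, so $a=f_0$ and $b=f_1$ in the notation of the statement. For (1): any $f\in Sig^{(1)}$, having the shape $[a,b,-a,-b]$, can be written as $f=\lambda\,(1,i)^{\otimes 3}+\mu\,(1,-i)^{\otimes 3}$ (with $i^2=-1$) by solving the $2\times2$ system coming from the weight-$0$ and weight-$1$ entries; the other two entries are then automatically consistent because $i^2=(-i)^2=-1$. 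Substituting into (\ref{Sym(f)-def}) and expanding the product of the three copies of $f$, $Sym(f)$ becomes a sum of eight terms, one per colouring $(u_1,u_2,u_3)\in\{(1,i),(1,-i)\}^3$ of the three copies; in each term the three internal summation variables $x_2,y_2,z_2$ are each shared by exactly two copies (the pairs being $\{1,2\},\{2,3\},\{1,3\}$), so summing them out yields three scalar factors $\langle u,u'\rangle:=\sum_{t\in\{0,1\}}u(t)u'(t)=1+u(1)u'(1)$ indexed by $(u_1,u_2),(u_2,u_3),(u_1,u_3)$. Since $u(1)u'(1)\in\{-1,1\}$ is $-1$ exactly when $u=u'$, this factor vanishes on monochromatic pairs and equals $2$ otherwise, and a two-colouring of a triangle cannot make all three pairs bichromatic; hence every term vanishes, $Sym(f)$ is the zero signature, and the zero signature lies in $\DG$.

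For (2) I would run the identical expansion with the two colours now being $(1,x_+)$ and $(1,x_-)$, the roots of the characteristic polynomial $X^2-(\beta/\alpha)X-1$ attached to the binding recurrence $f_{k+2}=(\beta/\alpha)f_{k+1}+f_k$ (this presumes $\alpha\neq0$; assume first the roots are distinct, so $f=A\,(1,x_+)^{\otimes 3}+B\,(1,x_-)^{\otimes 3}$ for suitable $A,B$). The crucial identity is $x_+x_-=-1$: the bichromatic pairing is then $1+x_+x_-=0$, while the monochromatic pairings $1+x_\pm^2$ are nonzero (indeed $\pm i$ is a root of $X^2-(\beta/\alpha)X-1$ precisely when $(\beta/\alpha)^2=-4$, i.e., in the repeated-root case). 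Hence only the two monochromatic terms of $Sym(f)$ survive, each a scalar multiple of $(1,x_+)^{\otimes 3}$, resp.\ $(1,x_-)^{\otimes 3}$; so $Sym(f)$ is again a combination of $(1,x_+)^{\otimes 3}$ and $(1,x_-)^{\otimes 3}$, its entries obey the same recurrence, and thus $Sym(f)\in Sig^{(2)}$ with the same binding coefficients $(\alpha,\beta)$.

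What remains are the boundary cases singled out by the two hypotheses. If $\alpha\beta=0$: when $\alpha=0$ the binding relations force $f=[f_0,0,0,f_3]$, and a one-line enumeration of the eight summands of (\ref{Sym(f)-def}) gives $Sym(f)=[f_0^3,0,0,f_3^3]$, which lies in $Sig^{(2)}$ (binding coefficients $(0,1)$); when $\beta=0$ the characteristic roots are $\pm1$, still distinct, and the argument above applies. If $\alpha\beta\neq0$ and $(\beta/\alpha+a/b)^2=-1$: using $f_2=(\beta/\alpha)f_1+f_0$ this condition rewrites as $(f_2/f_1)^2=-1$, i.e.\ $f_1^2+f_2^2=0$, which pins $f$ down to the form $f=f_1\cdot[\pm i-\gamma,\,1,\,\pm i,\,\pm i\gamma+1]$ with $\gamma=\beta/\alpha$; the generic argument covers the distinct-root sub-case, and for the remaining sub-case $(\beta/\alpha)^2=-4$ I would substitute this (now very special) $f$ into the explicit cubic formulas for the four entries of $Sym(f)$ read off from (\ref{Sym(f)-def}) and verify the defining quadratic relation $h_0h_2+h_1h_3=h_1^2+h_2^2$ of $Sig^{(2)}$ directly. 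I expect this last brute-force check for repeated characteristic roots to be the only delicate point: there the clean eigen-expansion into two distinct tensor cubes is unavailable, so one is thrown back on direct substitution — but the hypotheses are precisely what force $f$ (hence $Sym(f)$) to be simple enough there that the computation is short, and everything else (deriving the explicit entries of $Sym(f)$ and the elementary algebra) is mechanical.
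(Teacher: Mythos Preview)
Your proposal is correct and takes a genuinely different route from the paper. The paper's proof is a direct symbolic computation: it plugs the special forms of $f$ (namely $[a,b,-a,-b]$ for part~(1), and $[a,0,0,d]$, $[a,b,a,b]$, $[a,b,\delta b,-\delta a]$ for the three sub-cases of part~(2)) into the explicit cubic formulas (\ref{h-123-zero})--(\ref{h-123-three}) for the entries of $Sym(f)$ and checks the required identities term by term. Your argument instead exploits the eigen-decomposition $f=A(1,x_+)^{\otimes 3}+B(1,x_-)^{\otimes 3}$ coming from the characteristic polynomial of the binding recurrence, together with the observation that $Sym(\cdot)$ is the triangle tensor network; the key point $x_+x_-=-1$ (resp.\ $x_\pm=\pm i$ in part~(1)) then kills all mixed colourings via the bilinear pairing, reducing the computation to a parity argument on $K_3$. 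This is more conceptual and in fact proves more than the lemma states: your generic argument shows that \emph{every} $f\in Sig^{(2)}$ with $\alpha\neq 0$ and distinct characteristic roots has $Sym(f)\in Sig^{(2)}$ with the \emph{same} binding coefficients, without using the extra hypothesis $(\beta/\alpha+a/b)^2=-1$. The paper's brute-force method, on the other hand, requires no case split on the discriminant and gives the exact entries of $Sym(f)$, at the cost of opacity. Your identification of the repeated-root case $(\beta/\alpha)^2=-4$ as the only place where the tensor decomposition degenerates is correct, and a short direct check there (the hypotheses then pin $f$ down to a one-parameter family, and one finds $Sym(f)$ is a scalar multiple of $(1,\pm i)^{\otimes 3}$, hence degenerate and trivially in $Sig^{(2)}$) closes the argument.
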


\begin{proof}
Let us consider any ternary symmetric signature $f=(a,b,c,d,x,y,z,w)$. 
When $f\in Sig^{(1)}$, $f$ can be expressed as $[a,b,-a,-b]$. Hence, it follows that (1') $a+d=x+w=0$ and (2') $a^2+bc = bc+d^2 = a^2+b^2$. Using  these equations, the value $h_1$ described in Eq.(\ref{h-123-one}) can be simplified to $(a^2+b^2)x + (a^2+b^2)w$, which obviously equals $0$.  Similarly, with a help of (1')--(2'), Eq.(\ref{h-123-zero})\&(\ref{h-123-two})--(\ref{h-123-three}) imply $h_0=h_2=h_3=0$. Therefore, we obtain $Sym(f) = [0,0,0,0]$, and thus $Sym(f)$ is degenerate. 

Next, assume that $f\in Sig^{(2)}$ with binding coefficients  $(\alpha,\beta)$, which satisfy two equations, (3') $\alpha(a-z)+\beta b=0$ and (4') $\alpha(b-w)+\beta z=0$. Notice that $\alpha$ and $\beta$ cannot be both zero. For simplicity, write $\delta = \frac{\beta}{\alpha}+\frac{b}{a}$. Henceforth, we consider two separate cases. 

\s
[Case: $\alpha\beta=0$] First, assume that $\alpha=0$ and $\beta\neq0$.  {}From (3')--(4'), it follows that 
$f$ should have the form $[a,0,0,d]$.  By a direct calculation of Eq.(\ref{h-123-zero})--(\ref{h-123-three}), we obtain  $Sym(f)=[a^3,0,0,d^3]$. 
Next, assume that $\alpha\neq 0\wedge \beta=0$. Since $f$ must have the form $[a,b,a,b]$ by (3')--(4'), Eq.(\ref{h-123-zero})--(\ref{h-123-three}) imply that $Sym(f) = [A,B,A,B]$, where $A=2a(a^2+3b^2)$ and $B=2b(3a^2+b^2)$. 
In both cases, we conclude that $Sym(f)\in Sig^{(2)}$. 

\s
[Case: $\alpha \beta\neq0\wedge \delta^2=-1$] Since $f$ is symmetric, we can assume that $f=[a,b,z,w]$. Since $\alpha\beta\neq0$, the determinant $\det\tinymatrices{a-z}{b}{b-w}{z}$ equals zero; thus, (5') $z(a-z)=b(b-w)$ follows. 
Now, we set $\gamma =\frac{z}{b}$. It is not difficult to show that (3') implies $\gamma = \frac{\beta}{\alpha}+\frac{b}{a}$, which clearly equals $\delta$.  Now, using (5'), we instantly obtain $z=\delta b$ and $w=-\delta a$. In short, $f=[a,b,\delta b,-\delta a]$ holds. A vigorous calculation of Eq.(\ref{h-123-zero})--(\ref{h-123-three}) shows the following: $h_0=a^3+3ab^2+2\delta b^3$, $h_1=-b(b-\delta a)^2$, $h_2=\delta b(b-\delta a)^2$, and $h_3=\delta(a^3+3ab^2+2\delta b^3)$. Therefore, we conclude that $Sym(f) = [h_0,h_1,\delta' h_1,-\delta' h_0]$, where $\delta'= -\delta$. By its similarity to $f$,  $Sym(f)$ belongs to $Sig^{(2)}$. 
\end{proof}

Concerning the aforementioned signature sets $Sig^{(1)}$ and $Sig^{(2)}$, we define a unique signature set, called $SIG$. To describe this set, we  introduce a new notation $f_{\sigma}$ as follows. Given any ternary signature $f$ and any permutation $\sigma\in S_3$, the notation $f_{\sigma}$ expresses the 
signature $g$ defined by $g(x_1,x_2,x_3) = f(x_{\sigma(1)},x_{\sigma(2)},x_{\sigma(3)})$ for any values $x_1,x_2,x_3\in\{0,1\}$. The $SIG$ is then defined as 
\[
SIG =\{f\mid \forall \sigma\in S_3 [  Sym(f_{\sigma})\not\in\DG \implies Sym(f_{\sigma})\in Sig^{(1)}\cup Sig^{(2)} ] \}.
\]
Our first theorem, Theorem \ref{outside-SIG}, gives a complete classification of the approximation complexity of degree-$2$ $\sharpcspstar$s 
when their signatures fall into {\em outside} of $SIG$.    

\begin{theorem}\label{outside-SIG}
For any ternary signature $f$, if $f\not\in SIG$, then  $\#\mathrm{SAT}_{\complex}\APreduces \sharpcspstar_{2}(f)$. 
\end{theorem}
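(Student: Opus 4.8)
The plan is to leverage the symmetrization machinery together with Cai--Lu--Xia's dichotomy for symmetric Holant$^*$ problems. Suppose $f\notin SIG$. By the definition of $SIG$, there exists a permutation $\sigma\in S_3$ such that $Sym(f_\sigma)\notin\DG$ and yet $Sym(f_\sigma)\notin Sig^{(1)}\cup Sig^{(2)}$. Since relabeling the three arguments of a ternary signature is a trivial operation that does not affect which degree-$2$ $\sharpcspstar$ problem we obtain (it merely permutes variables in each constraint), we have $\sharpcspstar_2(f)\APequiv\sharpcspstar_2(f_\sigma)$, so we may assume from the outset that $Sym(f)$ itself is non-degenerate and lies outside $Sig^{(1)}\cup Sig^{(2)}$. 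The strategy is then a three-link chain of AP-reductions: $\#\mathrm{SAT}_\complex\APreduces\holantstar(g_1|g_2)$ for some $(g_1|g_2)\in\BB$; $\holantstar(g_1|g_2)\APreduces\holantstar(EQ_2|Sym(f))$; and $\holantstar(EQ_2|Sym(f))\APreduces\sharpcspstar_2(f)$.

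For the middle link, I would invoke Lemma~\ref{cai-lemmas}(\ref{cai-third-case}): since $Sym(f)$ is a ternary non-degenerate symmetric signature with $Sym(f)\notin Sig^{(1)}\cup Sig^{(2)}$, there is a pair $(g_1|g_2)\in\BB$ with $\holantstar(g_1|g_2)\APreduces\holantstar(EQ_2|Sym(f))$. For the last link, the key fact (to be established via the symmetrization/T$_2$-constructibility apparatus of Sections~\ref{sec:asymmetric-main}--\ref{sec:SymL(f)}) is that $Sym(f)$ is T$_2$-constructible from $f$ together with finitely many unary signatures---this is exactly the content promised by the definition of $Sym(f)$ in Eq.~(\ref{Sym(f)-def}), which builds the symmetrized signature out of three copies of $f$ wired together---so Lemma~\ref{T-const-reduction} yields $\holantstar(EQ_2|Sym(f))\APreduces\holantstar(EQ_2|f,\UU)=\holantstar(EQ_2|f)\APequiv\sharpcspstar_2(f)$, where the last AP-equivalence is the stated relation $\sharpcspstar_2(\FF)\APequiv\holantstar(EQ_2|\FF)$ from \cite{Yam10b}. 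For the first link, I would appeal to Proposition~\ref{SAT-ONE3-EQ2}, which (as flagged right after Lemma~\ref{cai-lemmas}) upgrades Cai--Lu--Xia's Turing reduction $\#\mathrm{SAT}_\complex\Treduces\holantstar(g_1|g_2)$ to a genuine AP-reduction for each $(g_1|g_2)\in\BB$. Composing the three AP-reductions by transitivity gives $\#\mathrm{SAT}_\complex\APreduces\sharpcspstar_2(f)$.

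The main obstacle is the last link: verifying that $Sym(f)$ is genuinely T$_2$-constructible from $f$ (plus free unary signatures) within the degree-$2$ regime. The wiring in Eq.~(\ref{Sym(f)-def}) contracts the variables $x_2,y_2,z_2$ in a cyclic pattern across three copies of $f$, and each of these internal variables must appear exactly twice so that the construction respects the degree-$2$ constraint; one must check that the external variables $x_1,y_1,z_1$ likewise each occur only once in the gadget and that the T$_2$-construction formalism of Section~\ref{sec:T-const} actually certifies this gadget. A secondary subtlety is purely bookkeeping: ensuring that the permutation normalization $\sharpcspstar_2(f)\APequiv\sharpcspstar_2(f_\sigma)$ is legitimate under AP-reductions (it is, since it is a syntactic relabeling, but it should be stated), and that Proposition~\ref{SAT-ONE3-EQ2} indeed covers \emph{all} of $\BB$ rather than one representative. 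Everything else is routine composition of already-established reductions.
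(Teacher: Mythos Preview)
Your proposal is correct and follows essentially the same route as the paper's own proof: unpack the definition of $SIG$ to get a permutation $\sigma$ with $Sym(f_\sigma)$ non-degenerate and outside $Sig^{(1)}\cup Sig^{(2)}$, apply Lemma~\ref{cai-lemmas}(\ref{cai-third-case}) together with Proposition~\ref{SAT-ONE3-EQ2} to reduce $\#\mathrm{SAT}_\complex$ to $\holantstar(EQ_2|Sym(f_\sigma))$, then use the T$_2$-constructibility of $Sym(f_\sigma)$ from $f_\sigma$ and the equivalence $\holantstar(EQ_2|\cdot)\APequiv\sharpcspstar_2(\cdot)$ to land in $\sharpcspstar_2(f)$. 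Your ``main obstacle'' is in fact already dispatched by Lemma~\ref{Sym(f)-reduce-f} (three copies of $f$ joined by {\sc Exclusive Multiplication} and then three {\sc Linked Projection}s), so there is no gap.
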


Since the proof of Theorem \ref{outside-SIG} requires a new notion of T$_{2}$-constructibility, it is postponed until Section \ref{sec:SAT-reduction}.
The theorem makes it sufficient to concentrate only on signatures 
residing within $SIG$. 
To analyze those signatures, we roughly partition $SIG$ into three parts. 
Firstly, we let $SIG_0$ denote the set of all ternary signatures $f$ for which $Sym(f_{\sigma})$ is always 
degenerate for every permutation $\sigma\in S_3$. By Lemma \ref{Sig1-Sym(f)} follows the inclusion $Sig^{(1)}\subseteq SIG_0$. 
Secondly, for each index $i\in\{1,2\}$, let 
$SIG_{i}$ denote the set of all ternary signatures $f$ such that, 
for a certain permutation $\sigma\in S_3$, both 
$Sym(f_{\sigma})\in Sig^{(i)}$ and $Sym(f_{\sigma})\not\in \DG$ hold.  It is obvious that 
$SIG \subseteq SIG_{0}\cup SIG_{1} \cup SIG_{2}$.
Therefore, if we successfully 
classify all degree-$2$ $\sharpcspstar$s whose signatures belong to each of $SIG_{i}$'s,  then we immediately obtain the desired 
complete classification of all degree-$2$ $\sharpcspstar$s. 
Since a whole analysis of $SIG$ seems quite lengthy, this paper 
is focused only on the signature set $SIG_1$, which can be rewritten as  
\[
SIG_{1} = \{ f\mid \exists \sigma\in S_3 \exists a,b\in \complex \;\text{s.t.}\; Sym(f_{\sigma}) = [a,b,-a,-b] \;\&\; a^2+b^2\neq0 \}, 
\]
where the condition $a^2+b^2\neq0$ indicates that $Sym(f_{\sigma})$ is non-degenerate because $rank\tinymatrixthree{a}{b}{-a}{b}{-a}{-b} =rank\tinymatrices{a}{b}{b}{-a} =2$. In what follows, we will describe   
a dichotomy theorem for the associated degree-$2$ $\sharpcspstar$s. 
For ease of notational complication in later sections, we introduce the following useful terminology: a ternary signature $f$ is said to be {\em $SIG_1$-legal} if $Sym(f)$ has the from $[a,b,-a,-b]$ for certain numbers $a,b$ satisfying $a^2+b^2\neq0$. Using this terminology, it follows that $f$ is in $SIG_1$ iff  $f_{\sigma}$ is in $SIG_1$-legal for a certain $\sigma\in S_3$.  

%%%

The second theorem---Theorem \ref{main-theorem}---deals with all 
signatures residing within $SIG_1$. 
To state the theorem, however, we need to introduce 
another signature set $\DUP$. 
For our purpose, we begin with a quick explanation of 
the following abbreviation.  For any two ternary signatures $f_0,f_1$, 
the notation $(f_0,f_1)$ expresses the signature $f$ defined as follows: $f(0,x_2,x_3) =f_0(x_2,x_3)$ and $f(1,x_2,x_3)= f_1(x_2,x_3)$ for all pairs  $(x_2,x_3)\in\{0,1\}^{2}$. A vector expression of $f$ makes this definition  simpler; when $f_0=(a,b,c,d)$ and $f_1=(x,y,z,w)$, we obtain $(f_0,f_1)=(a,b,c,d,x,y,z,w)$. 
At last, the basic signature set $\DUP$ is defined as the set of all ternary signatures $f$ such that, after appropriate permutations $\sigma$ of variables, $f_{\sigma}$ becomes of the form $u(x_{\sigma(1)})\cdot (f_0,f_0)$, where $u\in\UU$, and $f_0$ is a certain binary signature. 
We note that $SIG_1\cap\DUP$ is not empty; for instance, 
the signature $f=(1,0,-1,0,i,-2,-i,2)$ is not symmetric but it belongs to both $\DUP$ and $SIG_1$, because $f_{\sigma} = [1,-i](x_1)\cdot (1,0,i,-1,1,0,i,-1)$ and $Sym(f_{\sigma}) = 7\cdot[1,-1,-1,1]$ for $\sigma=(x_2x_1x_3)$, where $i=\sqrt{-1}$. 
Two examples of important signatures in $\DUP$ include: $f=(0,0,0,0,x,y,z,w)$ and $f=(a,b,c,d,0,0,0,0)$. 

Finally, the second classification theorem is stated as follows. 

\begin{theorem}\label{main-theorem}
Let $f$ be any ternary signature in $SIG_1$. If 
$f$ is in $\DUP$, then $\sharpcspstar_{2}(f)$ is in $\fp_{\complex}$. Otherwise, $\#\mathrm{SAT}_{\complex}$ is AP-reducible to $\sharpcspstar_{2}(f)$. 
\end{theorem}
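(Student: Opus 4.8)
The theorem splits into a tractability claim and a hardness claim, and I would prove them in that order.

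\emph{Tractability.} Suppose $f\in\DUP$ (I would in fact prove tractability for every such $f$, not only those in $SIG_1$). By definition there is a permutation $\sigma\in S_3$ with $f_{\sigma}=u\otimes f_0$ for a unary signature $u$ and a binary signature $f_0$; equivalently, there is a fixed coordinate so that every application of $f$ to a triple of variables factors as $u$ applied to the variable in that coordinate, times $f_0$ applied (in the appropriate order) to the other two. Given any instance of $\sharpcspstar_{2}(f)$, the plan is to replace each application of $f$ by the corresponding pair consisting of one unary application of $u$ and one binary application of $f_0$; this leaves the computed value unchanged and produces an instance of a degree-$2$ \#CSP in which only signatures of arity at most $2$ occur. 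Such an instance (a signature grid over $EQ_2$, $f_0$, and unary signatures) has, after contracting the equality vertices, an underlying multigraph that is a disjoint union of paths and cycles, and its value is obtained exactly in time polynomial in the input size by forming products and traces of the associated $2\times2$ matrices. Hence $\sharpcspstar_{2}(f)\in\fp_{\complex}$; this is Proposition~\ref{computability-result}.

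\emph{Hardness.} Suppose $f\in SIG_1\setminus\DUP$. Since the properties ``$f_{\sigma}$ is $SIG_1$-legal for some $\sigma$'' and ``$f\in\DUP$'' are both invariant under permutations of the arguments of $f$, and $\sharpcspstar_{2}(f)=\sharpcspstar_{2}(f_{\sigma})$ for every $\sigma$, I may assume without loss of generality that $f$ itself is $SIG_1$-legal, so that $Sym(f)=[a,b,-a,-b]$ with $a^{2}+b^{2}\neq0$; in particular $Sym(f)$ is a non-degenerate symmetric signature lying in $Sig^{(1)}$. Using the parametrized symmetrization scheme together with T$_{2}$-constructibility (Lemma~\ref{T-const-reduction}), I would T$_{2}$-construct from $\UU\cup\{f\}$ both $Sym(f)$ and, for every value $t$ of the parameter, the binary symmetric signature $SymL_{t}(f)=[c_0(t),c_1(t),c_2(t)]$, which yields
\[
\sharpcspstar_{2}(Sym(f),SymL_{t}(f))\ \APreduces\ \sharpcspstar_{2}(f)\qquad\text{for every }t.
\]
The crux, which is Proposition~\ref{h0+h2=h1+h3=0}, is to show that, precisely because $f\notin\DUP$, some parameter $t$ makes $SymL_{t}(f)$ non-degenerate, not of the form $[\lambda,0,\lambda]$ (a nonzero scalar multiple of $EQ_2$), and such that $c_0(t)+c_2(t)\neq0$. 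Fixing such a $t$, Lemma~\ref{cai-lemmas}(2) applies with $Sym(f)\in Sig^{(1)}$ and $g=SymL_{t}(f)$ and produces a pair $(g_1|g_2)\in\BB$ with $\holantstar(g_1|g_2)\APreduces\holantstar(EQ_2|Sym(f),SymL_{t}(f))$. Combining this with the equivalence $\holantstar(EQ_2|Sym(f),SymL_{t}(f))\APequiv\sharpcspstar_{2}(Sym(f),SymL_{t}(f))$ recalled above, the displayed reduction, and $\#\mathrm{SAT}_{\complex}\APreduces\holantstar(g_1|g_2)$ from Proposition~\ref{SAT-ONE3-EQ2}, transitivity of $\APreduces$ gives $\#\mathrm{SAT}_{\complex}\APreduces\sharpcspstar_{2}(f)$, as required.

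\emph{The main obstacle.} Everything above is routine once the machinery of the earlier sections is in place; the real work is Proposition~\ref{h0+h2=h1+h3=0}, namely the existence of a good parameter $t$. I would argue by contraposition: if no $t$ worked, then for each $t$ at least one of the three requirements fails, and since each failure confines $t$ to the zero set of a polynomial while $\complex$ is infinite, one of those polynomials must vanish identically, so one of the identities ``$SymL_{t}(f)$ degenerate for all $t$'', ``$c_1(t)\equiv0$ and $c_0(t)\equiv c_2(t)$'', or ``$c_0(t)+c_2(t)\equiv0$'' holds. Writing $c_0(t),c_1(t),c_2(t)$ explicitly as low-degree polynomials in the entries of $f$ and of $Sym(f)$, each identity becomes a small system of polynomial equations. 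Proposition~\ref{SymL-non-degenerate} disposes of the first alternative (for $SIG_1$-legal $f\notin\DUP$ the signature $SymL_{t}(f)$ is non-degenerate for all but finitely many $t$), and Propositions~\ref{prop-type-I}--\ref{prop-type-III} handle the remaining two by a split into Types I, II, III according to the vanishing pattern of the relevant entries, solving in each case the (elementary, low-degree) equations to force $f\in\DUP$, contradicting the hypothesis. I expect the delicate points to be (a) choosing the parametrization of $SymL$ so that these polynomial systems admit no solutions outside $\DUP$, and (b) threading the constraint $Sym(f)\in Sig^{(1)}$ cleanly through the whole case analysis.
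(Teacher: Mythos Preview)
Your proposal is correct and follows essentially the same route as the paper: tractability via Proposition~\ref{computability-result}, and hardness by T$_2$-constructing both $Sym(f)$ and a suitable $SymL_t(f)$, then invoking Lemma~\ref{cai-lemmas}(\ref{cai-first-case}) together with Proposition~\ref{SAT-ONE3-EQ2}. Your chain is in fact a touch cleaner than the paper's written proof, which applies Lemma~\ref{cai-lemmas}(\ref{cai-first-case}) to $\holantstar(EQ_2|f,g)$ where strictly speaking the lemma delivers $\holantstar(EQ_2|Sym(f),g)$; your explicit insertion of $Sym(f)$ is the right reading.

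One point in your sketch of Proposition~\ref{h0+h2=h1+h3=0} deserves sharpening: you vary only the parameter $t$, but the paper's argument (and the very statements of Propositions~\ref{prop-type-I}--\ref{prop-type-III}) crucially ranges over all permutations $\sigma\in S_3$ as well, taking $g=SymL(f_\sigma)_\varepsilon$. Negating the conclusion gives Statement~$(*)$---for \emph{every} $\sigma$ and almost all $\varepsilon$, either $g_0^\sigma+g_2^\sigma=0$ or $(g_0^\sigma=g_2^\sigma\wedge g_1^\sigma=0)$---and the three ``Types'' are not three polynomial identities in $t$ alone but the three patterns across permutations (second alternative for all $\sigma$; mixed; first alternative for all $\sigma$). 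A single permutation satisfying one of your identities need not force $f\in\DUP$; it is only the conjunction over all $\sigma$ that does so, and this is exactly what the case analyses in Sections~\ref{sec:type-I}--\ref{sec:type-III} exploit.
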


Theorem \ref{main-theorem} follows from three key propositions,  Propositions \ref{SAT-ONE3-EQ2}--\ref{h0+h2=h1+h3=0}, which  
will be explained in Section \ref{sec:T-const-tech}, and the proof of Theorem \ref{main-theorem} will be presented in Section \ref{sec:key-props}. 

%%%%%%
%%%%%%
\section{T$_{2}$-Constructibility Technique}\label{sec:T-const-tech}

To prove our main theorems stated in Section \ref{sec:main-theorem},  we intend to employ two new technical tools. In this section, 
we will introduce the first technical tool, called {\em T$_{2}$-constructibility}.  
Applying this technical tool to degree-$2$ $\sharpcspstar$s with a help of three supplemental propositions, Propositions \ref{SAT-ONE3-EQ2}--\ref{h0+h2=h1+h3=0}, we will be able to give the proof of the main theorems.

%%%%%
\subsection{T$_{2}$-Constructibility}\label{sec:T-const}

When we wish to calculate approximate solutions of degree-$2$ $\sharpcspstar$s, in place of the exact solutions, 
standard tools like ``polynomial interpolation'' are no longer applicable.  A useful tool in  determining the approximation complexity of unbounded-degree $\sharpcspstar$'s used in \cite{Yam10a} is the notion of {\em T-constructibility}.  
Because degree-$2$ $\sharpcspstar$s are quite different from 
unbounded-degree $\sharpcspstar$s, 
its appropriate modification is needed to meet our requirement. 

\sloppy To pursue notational succinctness, we use the following notations. 
For any index $i\in[k]$ and any bit $c\in\{0,1\}$, 
the notation $f^{x_i=c}$ denotes the function $g$ satisfying that $g(x_1,\ldots,x_{i-1},x_{i+1},\ldots,x_k) = f(x_1,\ldots,x_{i-1},c,x_{i+1},\ldots,x_k)$.  Similarly, 
let $f^{x_i=*}$ express the function $g$ defined as  $g(x_1,\ldots,x_{i-1},x_{i+1},\ldots,x_k) = \sum_{x_i\in\{0,1\}}f(x_1,\ldots,x_{i-1},x_i,x_{i+1},\ldots,x_k)$.
When two indices $i,j\in[k]$ satisfy $i<j$, we write $f^{x_i=x_j=*}$ 
for the function $g$ defined as  $g(x_1,\ldots,x_{i-1},x_{i+1},\ldots, x_{j-1},x_{j+1},\ldots,x_k) = \sum_{x_i\in\{0,1\}} f(x_1,\ldots,x_{i-1},x_i,x_{i+1},\ldots, x_{j-1},x_{i},x_{j+1},\ldots,x_k)$, where the second $x_i$ appears at the $j$th position. 
Moreover, let $(g_1\cdot g_2)(x_1,\ldots,x_k,y_1,\ldots,y_{k'}) = g_1(x_1,\ldots,x_k) g_2(y_1,\ldots,y_{k'})$ whenever $g_1$ and $g_2$ take ``disjoint'' sets of variables $\{x_1,\ldots,x_k\}$ and $\{y_1,\ldots,y_{k'}\}$, respectively. 
In a similar way, $\lambda\cdot g$ is defined as $(\lambda \cdot g)(x_1,\ldots,x_k) = \lambda\cdot g(x_1,\ldots,x_k)$. 

We say that a signature $f$ of arity $k$ is {\em T$_{2}$-constructible} (or {\em T$_{2}$-constructed}) from a set $\GG$ of signatures if $f$ can be obtained, initially from signatures in $\GG$, by recursively applying a finite number (possibly zero) of operations described below.
\begin{enumerate}
\item {\sc Permutation:} for two indices $i,j\in[k]$ with $i<j$, 
by exchanging two columns $x_i$ and $x_j$, we transform $g$ into $g'$ that  is defined by $g'(x_1,\ldots,x_i,\ldots,x_j,\ldots,x_k) = g(x_1,\ldots,x_j,\ldots,x_i,\ldots,x_k)$.
\vs{-2}
\item {\sc Pinning:} for an index $i\in[k]$ and a bit $c\in\{0,1\}$, 
we build $g^{x_i=c}$ from $g$.
\vs{-2}
\item {\sc Projection:} for an index $i\in[k]$,  
we build $g^{x_i=*}$ from $g$.
\vs{-2}
\item {\sc Linked Projection:} for two indices $i,j\in[k]$ with $i<j$, we build $g^{x_i = x_j =*}$ from $g$.
\vs{-2}
\item {\sc Expansion:}  for an index $i\in[k]$, we introduce a new ``free'' variable, say, $y$ and  transform $g$ into $g'$, which is defined by $g'(x_1,\ldots,x_i,y,x_{i+1},\ldots,x_k) =  g(x_1,\ldots,x_{i},x_{i+1},\ldots,x_k)$.  
\vs{-2}
\item {\sc Exclusive Multiplication:}  from two signatures $g_1$ of arity $k$ and  $g_2$ of arity $k'$, if $g_1$ and $g_2$ take disjoint variable sets, then we build $g_1\cdot g_2$ from $\{g_1,g_2\}$.  
\vs{-2}
\item {\sc Normalization:}  for a constant $\lambda\in\complex-\{0\}$, 
we build $\lambda\cdot g$ from $g$. 
\end{enumerate}
Main features of T$_{2}$-constructibility are two special operations:  linked projection and exclusive multiplication. These operations 
reflect the structure of a signature grid, and therefore they are quite different from their associated operations used for the T-constructibility. 
When $f$ is T$_{2}$-constructible from $\GG$, we use the notation  $f\leq_{con}^{*}\GG$; in particular, when $\GG=\{g\}$, 
we simply write $f\leq_{con}^{*}g$ 
instead of $f\leq_{con}^{*}\{g\}$. 

The most useful claim at this moment is  the T$_{2}$-constructibility of $Sym(f)$ from $f$, and we state this claim as a lemma for later referencing. 

\begin{lemma}\label{Sym(f)-reduce-f}
For any ternary signature $f$, it holds that $Sym(f)\leq_{con}^{*}f$. 
\end{lemma}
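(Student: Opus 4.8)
The plan is to unfold the definition of $Sym(f)$ in Eq.~(\ref{Sym(f)-def}) and realize it as a signature grid, then check that every move used in assembling that grid is one of the seven T$_{2}$-constructibility operations (permutation, pinning, projection, linked projection, expansion, exclusive multiplication, normalization). Concretely, $Sym(f)(x_1,y_1,z_1) = \sum_{x_2,y_2,z_2\in\{0,1\}} f(x_1,x_2,z_2)\, f(y_1,y_2,x_2)\, f(z_1,z_2,y_2)$ is exactly a Holant-style contraction: take three disjoint copies of $f$ on variable triples $(x_1,x_2,z_2)$, $(y_1,y_2,x_2)$, $(z_1,z_2,y_2)$, glue the copies along the shared variables $x_2$, $y_2$, $z_2$, and sum out those glued variables. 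The external (dangling) variables that survive are $x_1,y_1,z_1$, so the result is a ternary signature, as required.

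First I would form the product of the three copies of $f$ on $9$ formally distinct variables via \textsc{Exclusive Multiplication} (applied twice), obtaining a signature of arity $9$ of the form $f\cdot f\cdot f$. Next I would use \textsc{Permutation} to line the columns up in a convenient order, and then identify the three pairs of columns that are meant to be the same variable. The identification-then-sum-out operation — replacing two columns $x_i,x_j$ by a single summed variable $\sum_{x_i} g(\ldots,x_i,\ldots,x_i,\ldots)$ — is precisely \textsc{Linked Projection}, which I would apply three times: once for the pair carrying $x_2$ (appearing as the second argument of the first copy and the third argument of the second copy), once for $y_2$, and once for $z_2$. After these three linked projections the surviving arity is $9 - 2\cdot 3 = 3$, with free variables $x_1,y_1,z_1$, and the resulting output values coincide with Eq.~(\ref{Sym(f)-def}) by inspection of the defining sum. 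A final \textsc{Permutation} (if needed) puts the variables in the order $(x_1,y_1,z_1)$. No pinning, projection, expansion, or normalization is actually needed, though they are harmless if one prefers a slightly different bookkeeping.

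The only real care point — and thus the step I expect to be the main (minor) obstacle — is the bookkeeping of which column of which copy of $f$ gets linked to which, because \textsc{Linked Projection} is defined only for two indices $i<j$ within a \emph{single} signature, so one must first merge the three copies into one arity-$9$ object via \textsc{Exclusive Multiplication} before any linking can occur; linking across two separate signatures is not a primitive operation. Once the three copies sit inside one signature, each of the three shared variables $x_2,y_2,z_2$ occupies two column positions of that arity-$9$ signature, and applying \textsc{Linked Projection} to each such pair (taking care that the index order is consistent, using \textsc{Permutation} if necessary) yields the sum-over-$x_2,y_2,z_2$ in Eq.~(\ref{Sym(f)-def}) exactly. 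Verifying the equality of the resulting signature with $Sym(f)$ is then a direct match of summation indices and is entirely routine, so I would simply assert it after exhibiting the construction.
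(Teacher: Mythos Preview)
Your proposal is correct and matches the paper's own proof essentially line for line: form the product of three disjoint copies of $f$ via \textsc{Exclusive Multiplication}, then apply \textsc{Linked Projection} three times to identify the paired variables $x_2,y_2,z_2$. The paper's version is simply terser (it omits the explicit mention of \textsc{Permutation} and the arity bookkeeping you spell out), but the construction is identical.
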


\begin{proof}
To T$_{2}$-construct $Sym(f)$ from $f$, we first generate a product of 
$f(x_1,x_2,z_2)$, $f(y_1,y_2,x'_2)$, and $f(z_1,z'_2,y'_2)$ using  {\sc Exclusive Multiplication} with all distinct variables. We then apply {\sc Linked Projection} by identifying $x'_2,y'_2,z'_2$ with $x_2,y_2,z_2$, respectively. 
\end{proof}

The following lemma bridges between the T$_{2}$-constructibility and the AP-reducibility. 

\begin{lemma}\label{T-const-reduction}
Let $f$ be any signature and let $\FF,\GG$ be any two signature sets. 
If $f\leq_{con}^{*}\GG$, then $\sharpcspstar_{2}(f,\FF)\APreduces \sharpcspstar_{2}(\GG,\FF)$.
\end{lemma}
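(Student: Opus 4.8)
The statement to establish is Lemma~\ref{T-const-reduction}: if $f\leq_{con}^{*}\GG$, then $\sharpcspstar_{2}(f,\FF)\APreduces \sharpcspstar_{2}(\GG,\FF)$. The plan is to prove this by induction on the number of T$_{2}$-construction operations used to obtain $f$ from $\GG$. The base case is $f\in\GG$, where the reduction is the identity (or follows from the basic monotonicity $\sharpcspstar_{2}(\FF')\APreduces\sharpcspstar_{2}(\GG')$ when $\FF'\subseteq\GG'$, already noted in the excerpt). For the inductive step, it suffices to show that each single operation ({\sc Permutation, Pinning, Projection, Linked Projection, Expansion, Exclusive Multiplication, Normalization}) preserves AP-reducibility: if $g$ is obtained from $g_1$ (or $g_1,g_2$) by one operation, then $\sharpcspstar_{2}(g,\FF)\APreduces\sharpcspstar_{2}(g_1,\FF)$ (resp.\ $\APreduces\sharpcspstar_{2}(g_1,g_2,\FF)$). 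Since AP-reducibility is transitive and closed under adding the fixed auxiliary set $\FF$ on both sides, chaining these single-step reductions along the construction sequence yields the lemma.

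First I would handle the operations that correspond directly to local gadget substitutions inside a signature grid. Given an input instance $\Omega$ to $\sharpcspstar_{2}(g,\FF)$, I would rewrite every occurrence of the signature $g$ in $\Omega$ by the appropriate gadget built from $g_1$ (and the free unary signatures available in the ``$*$'' setting), leaving all $\FF$-signatures and all $EQ_2$ equality vertices untouched, thereby producing an instance $\Omega'$ of $\sharpcspstar_{2}(g_1,\FF)$ with $\holant_{\Omega}=\holant_{\Omega'}$ (up to a known nonzero scalar for {\sc Normalization}). Concretely: {\sc Permutation} just relabels the edge-order at a vertex; {\sc Pinning} $g^{x_i=c}$ attaches to the $i$-th dangling edge of $g_1$ the unary signature $\delta_c=[1,0]$ or $[0,1]$, which is in $\UU$; {\sc Projection} $g^{x_i=*}$ attaches the unary $[1,1]$; {\sc Expansion} attaches $[1,1]$ as well but keeps $y$ as a genuine free edge (so the new variable genuinely exists); {\sc Exclusive Multiplication} places $g_1$ and $g_2$ on disjoint edge-sets, which is exactly how a product signature sits in a grid; {\sc Normalization} multiplies $\holant_{\Omega}$ by $\lambda^{-m}$ where $m$ is the number of $g$-vertices, and since $\lambda\neq0$ this is an exact post-processing step compatible with the multiplicative error tolerance of a randomized approximation scheme. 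In each case the number of oracle queries is one, the query size is polynomial in $|\Omega|$, and the error parameter passed to the oracle can be taken equal to $\varepsilon$ (or $\varepsilon$ divided by a fixed constant to absorb the scalar), so conditions (i)--(iii) in the definition of AP-reduction hold.

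The one operation that genuinely requires care, and which I expect to be the main obstacle, is {\sc Linked Projection}: $g^{x_i=x_j=*}$ sums $g$ over the common value of the $i$-th and $j$-th arguments. In a signature grid this is realized by \emph{merging} the two dangling edges $x_i,x_j$ of the $g_1$-vertex into a single edge, i.e.\ by connecting them through a degree-$2$ equality vertex $EQ_2$ and then contracting. The subtlety is twofold: (a) one must verify that the resulting local gadget is a legitimate piece of a $\sharpcspstar_{2}$ instance --- in particular that merging two edges at one vertex does not inadvertently raise the degree of any \emph{variable} vertex on the left-hand side beyond $2$ (it does not: the merge happens on the right-hand, signature side, and an $EQ_2$ used to splice has degree exactly $2$); and (b) one must confirm that the identity $\sum_{\sigma}\prod_v f_v(\sigma|E(v))$ is unchanged, which is the standard fact that inserting $EQ_2$ on an edge is the identity operation on Holant values and contracting two edges joined by $EQ_2$ computes exactly the linked sum. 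This is precisely the feature the authors flag when they say linked projection and exclusive multiplication ``reflect the structure of a signature grid,'' and it is the place where the degree-$2$ bound must be respected; the argument is routine once one draws the gadget, but it is the only step where the $\sharpcspstar_{2}$ (rather than $\sharpcspstar$) restriction is nontrivially used. After dispatching all seven operations this way, transitivity of $\APreduces$ closes the induction.
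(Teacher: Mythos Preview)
Your proposal is correct and takes essentially the same approach as the paper: induct on the T$_2$-construction steps and realize each operation by a local gadget substitution in the signature grid, checking that the degree-$2$ bound on variable vertices is preserved. The paper defers {\sc Permutation}, {\sc Pinning}, {\sc Projection}, and {\sc Normalization} to the earlier T-constructibility argument in \cite{Yam10a} and only writes out {\sc Expansion}, {\sc Linked Projection}, and {\sc Exclusive Multiplication}; your treatment of these three matches the paper's (for {\sc Linked Projection} the paper phrases the gadget as adding a self-loop $(v,v)$ at the signature node, which in the bipartite $\sharpcspstar_{2}$ picture is exactly your insertion of a fresh $EQ_2$ variable vertex with both edges going to $v$).
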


\begin{proof}
Our proof is similar in nature to the T-constructibility proof of \cite[Lemma 5.2]{Yam10a}. All operations except for {\sc Expansion}, {\sc Linked Projection}, and {\sc Exclusive Multiplication} can be handled in such a way similar to 
the case of the T-constructibility. 
Therefore, in what follows, we will show the lemma for those three exceptional operations. Now, let $\FF$ denote any signature set and let $\Omega=(G,\FF',\pi)$ express any signature grid given as input instance to $\sharpcspstar_{2}(f,\FF)$. 

\s
{\sc [Expansion]}\hs{1} 
For simplicity, let $f(y,x_1,\ldots,x_k) = g(x_1,\ldots,x_k)$, where $y$ is a new free variable. Let us consider a subgraph $G'$ of $G$ such that it consists of node $v$ labeled $f$ and node $w$ adjacent to $v$ by an edge labeled $y$. Now, we want to define a new subgraph $\tilde{G}'$ to replace $G'$.  
First, we remove the edge $y$ so that we split $G'$ into two disconnected subgraphs. Second, we replace the node $v$ by a new node $v'$ whose label is $g$. Third, we insert a new node $u$ with label $[1,1]$ between the two nodes $v'$ and $w$ by two new edges. Let $\Omega'$ be obtained from $\Omega$ by applying this modification to all nodes with the label $f$. It thus holds that $\holant_{\Omega} = \holant_{\Omega'}$.  This leads to $\sharpcspstar_{2}(f,\FF)\APreduces \sharpcspstar_{2}(g,\FF)$. 

\s
{\sc [Linked Projection]}\hs{1}
Let $f= g^{x_i=x_j=*}$. To improve readability, we assume that $i=1$ and $j=2$; that is, $f(x_3,\ldots,x_k) = \sum_{x_1\in\{0,1\}} g(x_1,x_2,x_3,\ldots,x_k)$. We are focused on node $v$ labeled $f$ in $G$. Let us consider a subgraph $G'$ consisting of this node $v$ and all the other nodes adjacent to $v$. We replace $G'$ by another graph $\tilde{G}'$  that is defined as follows. First, we replace the label $f$ of the node $v$ with $g$. Second, we add a new edge $(v,v)$. Now, define $\Omega'$ as the signature grid obtained by replacing $G'$ with $\tilde{G}'$. It is not difficult to show that $\holant_{\Omega} = \holant_{\Omega'}$. Therefore, if we recursively replace all nodes labeled $f$, we finally obtain an AP-reduction:  $\sharpcspstar_{2}(f,\FF)\APreduces \sharpcspstar_{2}(g,\FF)$. 

\s
{\sc [Exclusive Multiplication]}\hs{1} 
For two disjoint sets of variables $\{x_1,x_2,\ldots,x_k\}$ and $\{y_1,\ldots,y_{k'}\}$, we  assume that $g_1$ and $g_2$ take variable series  $(x_1,\ldots,x_k)$ and $(y_1,\ldots,y_{k'})$, respectively, and let $f = g_1\cdot g_2$. Now, we consider a subgraph $G'$ that contains node $v$ labeled $f$ and all the other nodes adjacent to $v$. We wish to define a new subgraph $\tilde{G}'$ as follows. 
First, we split $G'$ into two subgraphs $G'_1$ and $G'_2$, where $G'_1$ (resp., $G'_2$) is obtained from $G'$ by deleting the edges $y_1,\ldots,y_{k'}$ (resp., $x_1,\ldots,x_k$) as well as all nodes, except  for $v$, attached to those edges.  In the subgraph $G'_1$ (resp., $G'_2$), we replace the node $v$ by a new node $v'_1$ (resp. $v'_2$) with the label $g_1$ (resp., $g_2$). After eliminating all nodes with the label $f$ in this way, we finally obtain  from $\Omega$  a signature grid, say, $\Omega'$. The equation  $\holant_{\Omega}=\holant_{\Omega'}$ easily follows, and we then obtain 
$\sharpcspstar_{2}(f,\FF)\APreduces \sharpcspstar_{2}(g_1,g_2,\FF)$. 
\end{proof}

By a direct application of Lemma \ref{T-const-reduction} with Lemma \ref{Sym(f)-reduce-f} to $Sym(f)$, it immediately follows that   $\sharpcspstar_{2}(Sym(f),\FF)\APreduces \sharpcspstar_{2}(f,\FF)$ for any signature set $\FF$. This simple fact is actually a key to our main theorems, which will be proven in the subsequent subsections.  

%%%%
\subsection{\#SAT$_{\complex}$-Hardness under AP-Reducibility}\label{sec:SAT-reduction}

When dealing with all complex numbers, Turing reducibility does not always induce AP-reducibility; as a result,  
the computational hardness of a counting problem under Turing reducibility may not immediately result in its computational hardness under AP-reducibility. 
Since there has been little work on the approximation complexity of Holant problems, there is no written proof for the fact that $\#\mathrm{SAT}_{\complex}\APreduces \holantstar(g_1|g_2)$ for every $(g_1|g_2)\in\BB$.  
To use Lemmas \ref{cai-lemmas} in our setting of approximation complexity, we first need to establish this hardness result of $\holantstar(g_1|g_2)$   under AP-reductions. 

\begin{proposition}\label{SAT-ONE3-EQ2}
For every pair $(g_1|g_2)\in\BB$, it holds that $\#\mathrm{SAT}_{\complex}\APreduces \holantstar(g_1|g_2)$. 
\end{proposition}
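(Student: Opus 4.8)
The plan is to establish, for each of the three pairs $(g_1|g_2)\in\BB=\{(EQ_3|OR),(EQ_3|NAND),(ONE_3|EQ_2)\}$, an AP-reduction $\#\mathrm{SAT}_{\complex}\APreduces\holantstar(g_1|g_2)$ by reducing through the complex-weighted $\sharpcsp^{*}$ framework. First I would observe that $\#\mathrm{SAT}_{\complex}$ is, essentially by definition, a complex-weighted $\sharpcsp^{*}$ over the Boolean domain: a propositional formula $\phi$ in CNF (or an arbitrary formula, after introducing auxiliary variables via Tseitin-style gadgets, which is itself an AP-reduction) together with nonzero node-weight functions $w_x$ is nothing but an instance of $\holant^{*}(\{EQ_k\}_{k\ge1}\,|\,\FF)$ where $\FF$ consists of clause signatures and the free unary signatures absorb the $w_x$. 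So the real content is to show $\#\mathrm{SAT}_{\complex}\APreduces\holant^{*}(EQ_3\,|\,g_2)$ for $g_2\in\{OR,NAND\}$, and separately for $(ONE_3|EQ_2)$.

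For $(EQ_3|OR)$: using Lemma \ref{holographic-transform}, a holographic transformation by a suitable $2\times2$ nonsingular $M$ turns $\holant^{*}(EQ_3|OR)$ into a Holant$^{*}$ problem that already contains, or T$_2$-constructs together with free unary signatures, the signature $Implies$ or an $OR$ of higher arity; more directly, I would recall from \cite{CLX09x} that $\holant(\{EQ_k\}_k|OR)$ realizes $\sharpcsp(OR)$ once equalities of all arities are available, and that equalities $EQ_k$ for $k>3$ are T$_2$-constructible from $EQ_3$ by chaining copies of $EQ_3$ through linked projections. Since $\sharpcsp(OR)$ (indeed $\#\mathrm{SAT}$) is $\sharpp$-hard and $\sharpcsp^{*}(OR)\APequiv\#\mathrm{SAT}_{\complex}$ by the classification of \cite{Yam10a}, the chain
$$\#\mathrm{SAT}_{\complex}\APequiv\sharpcsp^{*}(OR)\APequiv\holant^{*}(EQ_3|OR)$$
closes the case; $NAND$ is symmetric to $OR$ under flipping all edge bits (the $2\times2$ bit-swap matrix), so Lemma \ref{holographic-transform} reduces it to the previous case. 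For $(ONE_3|EQ_2)$, the idea is that $EQ_2$ on the right together with $ONE_3$ on the left builds, via T$_2$-constructibility and free unaries, the perfect-matching / parity-type signatures that are known to give $\#\mathrm{SAT}_{\complex}$-hardness; concretely I would again invoke the holographic transformation in \cite{CLX09x} that connects $(ONE_3|EQ_2)$ to a standard hard counting problem, and then verify that every reduction step is AP-preserving rather than merely Turing.

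The main obstacle — and the whole reason the proposition needs a proof at all — is the last verification: the hardness proofs in \cite{CLX09x} are phrased under polynomial-time Turing reducibility, which permits adaptive queries and polynomial-interpolation arguments that need not respect the error-tolerance bookkeeping of an AP-reduction. So the crux is to re-examine each invoked reduction and confirm that it is in fact a (possibly probabilistic, but non-adaptive or boundedly adaptive) substitution-of-gadgets argument of the kind that AP-reductions allow: replacing each occurrence of one signature by a fixed gadget over the target signatures, scaling by a computable nonzero constant, and never branching on the oracle's noisy answer. Interpolation-based steps, if any survive in the relevant direction, must be replaced; fortunately the direction we need ($\#\mathrm{SAT}_{\complex}$ reducing \emph{to} the Holant problem, i.e. hardness) is typically the gadget direction rather than the interpolation direction, so I expect this to go through, but checking it carefully for all three pairs — especially $(ONE_3|EQ_2)$, whose hardness proof in \cite{CLX09x} is the most involved — is where the work lies. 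Throughout, I would lean on Lemma \ref{holant-reduction}, Lemma \ref{holographic-transform}, Lemma \ref{T-const-reduction}, and the $\sharpcsp^{*}$ trichotomy of \cite{Yam10a} to keep every step inside the AP-reduction world.
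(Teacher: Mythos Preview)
Your treatment of $(EQ_3|OR)$ and $(EQ_3|NAND)$ is essentially the paper's: route through $\sharpcspstar(OR)$ using the known AP-equivalence $\sharpcspstar(OR)\APequiv\holantstar(EQ_3|OR)$ from \cite{Yam10b} and the hardness result from \cite{Yam10a}; the bit-flip transformation handles $NAND$.

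The gap is in $(ONE_3|EQ_2)$. You correctly isolate the obstacle---the \cite{CLX09x} hardness arguments are Turing reductions that may involve interpolation---but you then propose only to ``invoke the holographic transformation in \cite{CLX09x}'' and ``verify'' it is AP-preserving, while conceding that this case ``is where the work lies.'' That is not a proof; it is a hope, and in fact the paper does \emph{not} proceed by auditing \cite{CLX09x}. It builds a fresh AP-safe chain. The key missing idea is to apply the paper's own symmetrization on the \emph{left} side: compute $Sym(ONE_3)=[4,2,1,1]$, so that by T$_2$-constructibility $\holantstar([4,2,1,1]\,|\,EQ_2)\APreduces\holantstar(ONE_3|EQ_2)$. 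Now one needs a holographic transformation sending $[4,2,1,1]$ to $EQ_3$; solving $EQ_3 M^{\otimes 3}=[4,2,1,1]$ yields the explicit matrix $M=(1/2)^{1/3}\tinymatrices{2}{1}{0}{1}$, and the image of $EQ_2$ under $M^{\otimes 2}$ is a scalar multiple of $g=[5,1,1]$. This lands us in $\sharpcspstar(g)$, where a single two-step gadget with the unary $[1,-25]$ produces $h=[0,5,6]$, and \cite[Lemma~6.4]{Yam10a} gives $\sharpcspstar(OR)\APreduces\sharpcspstar([0,u,v])$ for nonzero $u,v$. Every step is a fixed gadget substitution or a holographic transformation, hence AP-preserving by Lemmas~\ref{holographic-transform} and~\ref{T-const-reduction}.

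In short, the $(ONE_3|EQ_2)$ case is not a verification exercise but requires a new construction, and the construction the paper uses---pushing $ONE_3$ through $Sym(\cdot)$ and then solving for a holographic basis that converts the resulting left signature to $EQ_3$---is precisely the ingredient your proposal lacks.
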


\begin{proof}
First, we show that $\#\mathrm{SAT}_{\complex}\APreduces \holantstar(EQ_3|OR)$. 
Now, let us recall a few known results from \cite{Yam10a,Yam10b}.  
It is known that $\#\mathrm{SAT}_{\complex} \APreduces \sharpcspstar(OR)$ \cite{Yam10a} and that  $\sharpcspstar(OR)  \APequiv \sharpcspstar_{3}(OR) \APequiv \holantstar(EQ_3|OR)$ \cite{Yam10b}. 
Combining these results, we conclude that $\#\mathrm{SAT}_{\complex} \APreduces \holantstar(EQ_3|OR)$. 

Next, we show that $\#\mathrm{SAT}_{\complex} \APreduces \holantstar(ONE_3|EQ_2)$. Let $f=Sym(ONE_3)$ for brevity.  Our proof is made up of five steps. Recall that all signatures in this paper are represented as row vectors. 

(1) By a simple calculation, we obtain $f=[4,2,1,1]$.  Since $f\leq_{con}^{*}ONE_3$, By Lemma \ref{T-const-reduction} implies that $\holantstar(f|EQ_2)\APreduces \holantstar(ONE_3|EQ_2)$. 

(2) Let $M=\tinymatrices{a}{b}{c}{d}$, where $a,b,c,d\in\complex$ are defined later. We consider a holographic transformation from $\holantstar(f|EQ_2)$ to $\holant(EQ_3|g)$ for a certain binary signature $g$. To make this transformation possible, $M$ needs to satisfy that $f = EQ_3 M^{\otimes 3}$ and $g^{T} = M^{\otimes 2} EQ_2^{T}$. With this $M$, Lemma \ref{holographic-transform} establishes the AP-equivalence:  $\holantstar(f|EQ_2)\APequiv \holantstar(EQ_3|g)$. 
Note that $EQ_3 M^{\otimes 3} = [a^3+c^3,a^2b+c^2d,ab^2+cd^2,b^3+d^3]$. Since $f=[4,2,1,1]$, we obtain $a^3+c^3=4$, $a^2b+c^2d=2$, $ab^2+cd^2=1$, and $b^3+d^3=1$. Here, we consider the case of $a=2b$. Since  $a^3+c^3=4$,  we obtain $a^3+c^3=2(a^2b+c^2d)$, which implies $c^2(c-2d)=0$. Now, 
we claim that $c=0$. Assuming otherwise, we obtain $c=2d$, which yields $a^3+c^3=8(b^3+d^3)=4$. Thus,  $b^3+d^3\neq 1$ follows; this is a contradiction. Hence, it must hold that $c=0$. With this $c$,  $a^3+c^3=4$ implies $b^3=1/2$, and $b^3+d^3=1$ also implies $d^3=1/2$. Overall, it suffices to we define  $M$ as $\gamma\tinymatrices{2}{1}{0}{1}$, where   $\gamma =(1/2)^{1/3}$. 

(3) Since $g^{T} = M^{\otimes 2} EQ_2^{T}$, $g$ equals $\gamma^2 \cdot (5,1,1,1)$. As discussed in Section \ref{sec:randomized-scheme}, it holds that $\holantstar(EQ_3|g)\APequiv \sharpcspstar(g)$; thus, we obtain  $\sharpcspstar(g)\APreduces \holantstar(ONE_3|EQ_2)$.  

(4) We want to show that $\sharpcspstar(OR)\APreduces \sharpcspstar(g)$. 
In this step, we use the notion of T-constructibility \cite{Yam10a}. Let $g'=[5,1,1]$ so that $g'\leq_{con}^{*}g$. Now, define $h(x,y) = -(1/4)\sum_{z\in\{0,1\}} g'(x,z)g'(z,y)u(z)$, where $u=[1,-25]$. It is not difficult to show that $h=[0,5,6]$. Since $h$ is T-constructible from 
$\{g',u\}$, by applying a result of \cite[Lemma 5.2]{Yam10a}, we obtain $\sharpcspstar(h)\APreduces \sharpcspstar(g')$. It is also shown in \cite[Lemma 6.4]{Yam10a} that $\sharpcspstar(OR)\APreduces \sharpcspstar([0,u,v])$ for any constants $u,v\in\complex-\{0\}$. Hence, we conclude that $\sharpcspstar(OR)\APreduces \sharpcspstar(h)$.   

(5) Since $\#\mathrm{SAT}_{\complex}\APreduces \sharpcspstar(OR)$, we finally establish the desired AP-reduction: $\#\mathrm{SAT}_{\complex}\APreduces \holantstar(ONE_3|EQ_2)$.  
\end{proof}

We are now ready to prove the first main theorem, Theorem  \ref{outside-SIG}. 
Proposition \ref{SAT-ONE3-EQ2} greatly simplify the proof of the theorem.  

\begin{proofof}{Theorem \ref{outside-SIG}}
Let $f$ be any ternary signature not in $SIG$; namely, there exists a permutation $\sigma\in S_3$ for which  $Sym(f_{\sigma})\not\in Sig^{(1)}\cup Sig^{(2)}$ and $Sym(f_{\sigma})\not\in \DG$. With the help of Proposition \ref{SAT-ONE3-EQ2}, Lemma \ref{cai-lemmas}(\ref{cai-third-case}) leads to the conclusion that 
$\#\mathrm{SAT}_{\complex}\APreduces \holantstar(EQ_2|Sym(f_{\sigma}))$. 
By Lemma \ref{holant-reduction}(2), it follows that $\holantstar(EQ_2|Sym(f_{\sigma})) \APreduces \sharpcspstar_{2}(Sym(f_{\sigma}))$. 
Since $Sym(f_{\sigma})\leq_{con}^{*} f_{\sigma}$ by Lemma \ref{Sym(f)-reduce-f},  Lemma \ref{T-const-reduction} implies that  $\sharpcspstar_{2}(Sym(f_{\sigma}))\APreduces 
\sharpcspstar_{2}(f_{\sigma})$. 
Finally, because $\sharpcspstar_{2}(f_{\sigma})$ and 
$\sharpcspstar_{2}(f)$ are AP-equivalent to each other, we immediately obtain $\#\mathrm{SAT}_{\complex}\APreduces \sharpcspstar_{2}(f)$, as required. 
\end{proofof}

%%%
\subsection{Two Key Propositions}\label{sec:key-props} 

The proof of Theorem \ref{main-theorem} is composed of three propositions.  The first proposition---Proposition \ref{SAT-ONE3-EQ2}---has already proven in Section \ref{sec:SAT-reduction}. The second proposition below concerns the computability result of degree-$2$ $\sharpcspstar$s whose signatures are all drawn from $\DUP$. For completeness, we include the proof of this proposition. 

\begin{proposition}\label{computability-result}
For any subset $\FF\subseteq \DUP$, it holds that $\sharpcspstar_{2}(\FF)$ is in $\fp_{\complex}$. 
\end{proposition}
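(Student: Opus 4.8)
The plan is to show that every $\sharpcspstar_{2}(\FF)$ with $\FF\subseteq\DUP$ can be evaluated in polynomial time by exploiting the degree-$2$ restriction together with the special ``duplicated'' shape of the signatures. The key structural observation is this: in the Holant picture, an instance of $\sharpcspstar_{2}(\FF)$ is a bipartite signature grid $\holantstar(EQ_2|\FF)$, so every variable-vertex on the left has degree exactly $2$ (degree $1$ and degree $0$ are handled by absorbing unary signatures and by trivial normalization). Thus each variable $x_i$ is an edge connecting (at most) two occurrences of signatures from $\UU\cup\FF$. First I would reduce to the case where no unary signatures appear separately, by merging each unary signature into the signature on the other endpoint of its edge (a pinning/normalization-type absorption); since $\DUP$ is closed under multiplication by a unary signature on any one coordinate only \emph{after} permutation, one must be slightly careful, but a unary times one coordinate of $(f_0,f_0)$ is still of the form $(g_0,h_0)$ with $g_0,h_0$ proportional, and that is enough for the argument below.

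Second, I would exploit the defining property of $\DUP$: after a permutation of its arguments, $f_{\sigma}=u(x_{\sigma(1)})\cdot(f_0,f_0)$, i.e.\ the value on the ``first'' coordinate factors out as a unary weight and the remaining binary part $f_0$ does \emph{not} depend on that first bit. Concretely, $\sum_{x}f(x,x_2,x_3)=2u\text{-weighted}\cdot f_0(x_2,x_3)$ up to constants — the key point being that if the distinguished variable of a $\DUP$-signature is \emph{projected out} (summed over), the signature collapses to a unary-scaled binary signature, and if it is \emph{pinned}, it again collapses to a binary signature proportional to $f_0$. So whenever the distinguished edge of a $\DUP$-signature is the variable $x_i$ shared with some other vertex, summing over $x_i$ decouples cleanly. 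I would then set up the computation as a product/sum over the connected components of the ``interaction graph'' on the (at most $3$) non-distinguished coordinates, which — because each signature contributes only a binary relation $f_0$ on two genuine variables, plus isolated unary weights — yields a structure equivalent to evaluating a product of $2\times2$ matrices along paths and cycles, exactly the tractable ``$\#$ of edge-assignments weighted by matrices on a graph of max-degree $2$'' situation. Evaluating $\prod$ over a disjoint union of simple paths and cycles of $2\times 2$ matrices (traces and matrix products) is clearly polynomial time, and all the arithmetic is in $\complex$ with polynomially many operations, placing the problem in $\fp_{\complex}$.

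The main obstacle I anticipate is bookkeeping the case analysis at a vertex labelled by a $\DUP$-signature $f$ according to \emph{which} of its three incident edges (variables) is the distinguished coordinate $x_{\sigma(1)}$, and whether that edge is shared with another $\DUP$-vertex, a unary vertex, or is pinned; in each case one must verify that the local elimination of that variable keeps the remaining global instance inside (an $\fp_{\complex}$-evaluable subclass of) $\DUP$-type instances, so that the induction on the number of variables goes through. A clean way to organize this is to process the bipartite grid greedily: repeatedly pick a variable $x_i$ and sum it out, updating the two adjacent signatures; the degree-$2$ bound guarantees summing out $x_i$ merges at most two vertices, and the $\DUP$ shape guarantees the merged signature is again of bounded arity and of the required form (or degenerate, hence handled directly). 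One should also double-check the genuinely mixed situation where both endpoints of $x_i$ are $\DUP$-vertices but $x_i$ is the distinguished coordinate of one and a non-distinguished coordinate of the other — here the sum over $x_i$ still factors because the first vertex contributes $u(x_i)\cdot(\text{const in }x_i)$, so the whole thing is a $2$-term sum that leaves a well-defined product of the surviving pieces. Assuming this case analysis closes, the polynomial-time bound is immediate, completing the proof.
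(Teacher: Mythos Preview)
Your proposal is correct and follows essentially the same idea as the paper's proof: exploit the $\DUP$ factorization $f_{\sigma}=u(x_{\sigma(1)})\cdot(f_0,f_0)$ to peel off the distinguished coordinate as a unary weight, thereby reducing every ternary node to binary/unary pieces, after which the degree-$2$ bound forces the remaining structure to be a disjoint union of paths and cycles evaluable by $2\times2$ matrix products. The paper organizes this more cleanly than your variable-by-variable elimination: it first globally replaces each ternary $\DUP$ node by a disconnected pair (a unary node $u$ attached to the neighbor along the distinguished edge, and a binary node $f_0$ on the other two edges), which immediately yields a graph with all signature-nodes of degree $\leq 2$ and makes your anticipated ``mixed situation'' case analysis unnecessary.
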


\begin{proof}
Let $\FF\subseteq \DUP$. 
We demonstrate how to solve the counting problem $\sharpcspstar_{2}(\FF)$ in polynomial time. 
Let $\Omega=(G,\FF',\pi)$ be any input signature grid to $\sharpcspstar_{2}(\FF)$. 
Our proof proceeds by induction on the number of degree-$3$ nodes in $G$. 
We recursively ``break down'' ternary signatures into binary ones. 
Let us consider the base case:  all nodes are of degree $1$.  
We conveniently express a binary signature $f=(a,b,c,d)$ as $\tinymatrices{a}{b}{c}{d}$. 

%\s
[Case 1] Consider the case where all nodes are of degree $1$; thus, $G$ consists of disconnected subgraphs, each of which is composed of two degree-$1$ nodes connected by one edge. For each $G'$ of such subgraphs, let $\Omega'$ denote its associated signature grid. If $G'$ contains two nodes labeled $f=(a,b)$ and $g=(x,y)$, then the value $\holant_{\Omega'}$ equals $(a\;\;b)\tinycomb{x}{y}$. The whole $\holant_{\Omega}$ then 
is calculated as the product of $\holant_{\Omega'}$ over all possible $\Omega'$'s. 
The computation time of $\holant_{\Omega}$ is obviously proportional to the number of $\Omega'$'s.  

%\s
[Case 2] Assume that all nodes are of degrees at most $2$. In a recursive way, we wish to replace nodes of degree $2$ by nodes of degree $1$. In the end, all remaining nodes become degree $1$. This recursive process halts after steps less than or equal to the number of nodes in $G$. 
Now, we choose a node $f_1$ of degree $2$ and assume that node $f_1$ has two edges  $e_1=(f_1,f_2)$ and $e_2=(f_1,f_3)$, where $f_2$ and $f_3$ are nodes of degrees at most $2$. Let $f_1=(a,b,c,d)$. By permuting $e_1$ and $e_2$, without loss of generality, we may assume that an instance to $f_1$ has the form $(e_2,e_1)$. 
Consider a subgraph $G'$ consisting of the nodes $f_1$ and $f_2$ and the  
edge $e_1$.   

(1) Assume that the node $f_2$ has degree $1$ and let $f_2=(x,y)$. We introduce a new signature $f'=\tinymatrices{a}{b}{c}{d}(x\;\;y)$ over the  variable $e_2$. Finally, we replace $G'$ by a node with label $f'$. Let $\Omega'$ be the signature grid obtained from this replacement. It is not difficult to show that $\holant_{\Omega}=\holant_{\Omega'}$. 

(2) Next, we assume that the node $f_2$ is of degree $2$ and assume that   $f_2=(a,y,z,w)$ takes a variable series $(e_1,e_3)$, where $e_3$ is another edge.  A new signature $f'$ is defined as $\tinymatrices{a}{b}{c}{d} \tinymatrices{x}{y}{z}{w}$. We then replace $G'$ by a node labeled $f'$. This replacement does not change the 
value $\holant_{\Omega}$.    

%\s
[Case 3] 
We assume that certain nodes still have degree $3$. We recursively replace each node of degree $3$ by two nodes of degree $2$ and of degree $1$. 
First, choose a node $f_1$ of degree $3$ and assume that $f_1$ has edges $e_1=(f_1,f_2)$, $e_2=(f_1,f_3)$, and $e_3=(f_1,f_4)$. Since $f_1\in \DUP$, $f_1$ has the form $u(x_1)\cdot (f_0,f_0)$, where $f_0$ is of arity $2$. Next, we consider a subgraph $G'$ made up of four nodes labeled  $f_1,f_2,f_3,f_4$ and four edges $e_1,e_2,e_3,e_4$. We then delete the edge $e_1$ from $G'$ and split $G'$ into two disconnected subgraphs, say, $G_1$ and $G_2$. Assume that $G_1$ consists of the node $f_2$ and $G_2$ consists of three nodes $f_1,f_3,f_4$. For $G_1$, we prepare a new node labeled $u$ and attach it to node $f_2$ by a new edge $e'_1$. For $G_2$, we replace the node $f_1$ by the node $f_0$. Let $\Omega'$ be the signature grid obtained by this modification. It is not difficult to show that $\holant_{\Omega} = \holant_{\Omega'}$. 
\end{proof}

Finally, we state the third proposition, which gives a crucial property of signatures in $SIG_1$.   

\begin{proposition}\label{h0+h2=h1+h3=0}
Let $f$ be an arbitrary signature in $SIG_{1}$. 
If $f$ is not in $\DUP$, then 
there exists a non-degenerate symmetric signature $g=[g_0,g_1,g_2]$ such that $g\leq_{con}^{*}\GG\cup\{f\}$, where $\GG$ is a finite subset of $\UU$, and $(g_0\neq g_2\vee g_1\neq 0)\wedge g_0+g_2\neq 0$. 
\end{proposition}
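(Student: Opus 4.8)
The plan is to produce, from any $f \in SIG_1 \setminus \DUP$, a non-degenerate binary symmetric signature $g = [g_0, g_1, g_2]$ satisfying $(g_0 \neq g_2 \vee g_1 \neq 0) \wedge g_0 + g_2 \neq 0$, together with a $\mathrm{T}_2$-construction of $g$ from $f$ augmented by finitely many unary signatures. This is exactly the place where the ``parametrized symmetrization'' tool $SymL(f)$ of Section~\ref{sec:first-symmetrization} should be deployed: unlike $Sym(f)$, which produces a ternary symmetric signature of the rigid shape $[a,b,-a,-b]$ (hence with $a_0 + a_2 = 0$, useless for the condition $g_0 + g_2 \neq 0$), the binary symmetrization $SymL(f)$ is designed precisely to yield a binary signature to be used alongside $Sym(f)$. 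So the first step is to recall the definition of $SymL(f)$ and its $\mathrm{T}_2$-constructibility (the analogue of Lemma~\ref{Sym(f)-reduce-f} for the binary symmetrization), which gives $SymL(f) \leq_{con}^{*} f$ up to unary normalizing factors, hence $SymL(f) \leq_{con}^{*} \GG \cup \{f\}$ for a finite $\GG \subseteq \UU$.

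**The core of the argument** is then: show that if \emph{every} candidate binary signature one can extract from $f$ by $\mathrm{T}_2$-construction fails the two numerical conditions, then $f$ must actually lie in $\DUP$, contrary to hypothesis. Concretely, I would parametrize a family of binary symmetric signatures obtainable from $f$ — these are the ``parametrized'' versions of $SymL(f)$, obtained by inserting a variable unary signature $[1,t]$ (or $[1,t,t^2]$-type) into one of the dangling edges before the linked projection, so that the resulting binary signature $g^{(t)} = [g_0(t), g_1(t), g_2(t)]$ has entries that are low-degree polynomials in the parameter $t \in \complex$. The conditions to be \emph{violated} are: $g^{(t)}$ degenerate (rank of $\tinycomb{g_0(t)}{g_1(t)}{g_1(t)}{g_2(t)}$ is $\leq 1$), $g_0(t) = g_2(t) \wedge g_1(t) = 0$, and $g_0(t) + g_2(t) = 0$. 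If no choice of $t$ makes $g^{(t)}$ a good signature, then for \emph{every} $t \in \complex$ at least one of these polynomial equations in $t$ holds; since each is a fixed low-degree polynomial identity, a pigeonhole/Zariski argument forces one of them to hold \emph{identically} in $t$. Tracing each of these three polynomial identities back to constraints on the components $(a,b,c,d,x,y,z,w)$ of $f$ — using that $f \in SIG_1$, i.e. $Sym(f_\sigma) = [a',b',-a',-b']$ with $a'^2 + b'^2 \neq 0$ for some permutation $\sigma$ — should pin down $f$ to have, after a permutation of variables, the shape $u(x_{\sigma(1)}) \cdot (f_0, f_0)$, i.e. $f \in \DUP$. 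This last implication is the heart of the matter and will presumably be the part quoted from Propositions~\ref{SymL-non-degenerate}--\ref{prop-type-III}: the case analysis (``type~I, II, III'') corresponds to \emph{which} of the three identities holds identically.

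**The main obstacle** I anticipate is precisely the reverse implication: showing that identical vanishing of one of the ``bad'' polynomials in the parameter $t$ forces $f$ into $\DUP$. The $SIG_1$ hypothesis gives a highly structured constraint on $Sym(f_\sigma)$, but $Sym$ is a \emph{cubic} operation in the entries of $f$, so the equation $Sym(f_\sigma) = [a',b',-a',-b']$ already encodes several degree-$3$ polynomial relations among $a,\dots,w$ (these are the relations $h_0 = h_2 = 0$, $h_1 = h_3 \cdot(\text{something})$, referenced via Eq.~(\ref{h-123-zero})--(\ref{h-123-three})). Intersecting this variety with the extra identity coming from the degenerate-or-balanced $SymL$-family, and then \emph{solving} the resulting system to conclude ``$f = u \cdot (f_0,f_0)$ up to permutation'' is the genuinely delicate computation — it is where one must be careful not to miss sporadic solution branches. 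The saving grace, as the paper emphasizes, is that all of these are low-degree polynomial systems in a bounded number of variables, so the analysis, while lengthy, is elementary: one eliminates variables one at a time, splits into the handful of cases forced by factorizations like $c^2(c-2d) = 0$, and checks each branch either violates $a'^2 + b'^2 \neq 0$ (excluded by $SIG_1$-legality) or lands in $\DUP$.

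**To assemble the proof**, then, I would: (i) invoke $SymL(f) \leq_{con}^{*} \GG \cup \{f\}$ and its parametrized variants $g^{(t)} \leq_{con}^{*} \GG_t \cup \{f\}$ with $\GG_t$ finite; (ii) suppose for contradiction that no $g^{(t)}$ meets the required conditions, deduce one of the three polynomial identities holds identically in $t$; (iii) apply Proposition~\ref{SymL-non-degenerate} to rule out the ``always degenerate'' branch (this is why non-degeneracy of $SymL(f)$ is isolated as a separate statement), leaving the ``always balanced'' branches; (iv) apply Propositions~\ref{prop-type-I}--\ref{prop-type-III}, the case analysis on $f \in SIG_1$, to conclude $f \in \DUP$, contradiction; (v) therefore some $t_0$ gives a good $g = g^{(t_0)}$, and $g \leq_{con}^{*} \GG_{t_0} \cup \{f\}$ with $\GG_{t_0}$ finite, which is the claim. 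Finally, for the non-degeneracy and the condition $g_0 + g_2 \neq 0$ I would double-check, using the explicit polynomial form of $g^{(t_0)}$, that these do not accidentally fail even for the ``good'' $t_0$ — but by construction of $t_0$ they do not, since $t_0$ was chosen precisely to avoid all three bad polynomials.
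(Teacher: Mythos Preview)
Your overall strategy---parametrize a family of binary symmetrizations, assume none of them work, derive polynomial identities, and feed those into the case analysis of Propositions~\ref{SymL-non-degenerate}--\ref{prop-type-III}---is the right one and matches the paper. But there is a structural gap in how you set up the contradiction.

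You parametrize only by the continuous parameter $t$ (your stand-in for $\varepsilon$), treating $SymL(f)$ as a single one-parameter family. The paper's candidates are $SymL(f_{\sigma})_{\varepsilon}$ indexed by \emph{both} a permutation $\sigma\in S_3$ \emph{and} the parameter $\varepsilon$. The negation to be contradicted is therefore Statement~$(\ast)$: for \emph{every} $\sigma$ and almost all $\varepsilon$, either (i) $g_0^{\sigma}+g_2^{\sigma}=0$ or (ii) $g_0^{\sigma}=g_2^{\sigma}\wedge g_1^{\sigma}=0$. Proposition~\ref{SymL-non-degenerate} is invoked first, for each $\sigma$ separately, to ensure non-degeneracy for almost all $\varepsilon$; it also shows (i) and (ii) cannot hold simultaneously (else $SymL(f_\sigma)_\varepsilon=[0,0,0]$). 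So after this step there are only \emph{two} bad identities, not three, and for each fixed $\sigma$ exactly one of them holds for almost all $\varepsilon$.

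Consequently, the trichotomy in Propositions~\ref{prop-type-I}--\ref{prop-type-III} is \emph{not} ``which of three identities holds identically in $t$,'' as you wrote. It is a split on the \emph{pattern across permutations}: Proposition~\ref{prop-type-I} treats the case where (ii) holds for every $\sigma$; Proposition~\ref{prop-type-II} treats the mixed case where some $\sigma$ gives (ii) and some other $\tau$ gives (i); Proposition~\ref{prop-type-III} treats the case where (i) holds for every $\sigma$. Each of these propositions genuinely uses equations coming from several permutations at once (see the lists in Section~\ref{sec:SymL(f)}), so your single-$\sigma$ pigeonhole would not supply enough hypotheses to invoke them. Once you enlarge the candidate family to $\{SymL(f_\sigma)_\varepsilon : \sigma\in S_3,\ \varepsilon\in\complex\}$ and restructure the case split accordingly, the rest of your outline goes through exactly as in the paper.
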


With a use of Propositions \ref{SAT-ONE3-EQ2}--\ref{h0+h2=h1+h3=0}, 
 Theorem \ref{main-theorem} can be succinctly proven below. 

\begin{proofof}{Theorem \ref{main-theorem}}
Let $f$ be any ternary signature in $SIG_1$. If $f$ is in $\DUP$, then 
Proposition \ref{computability-result} imposes  $\sharpcspstar_{2}(f)$ to be inside $\fp_{\complex}$. Next, we assume that $f\not\in \DUP$. 
By Proposition \ref{h0+h2=h1+h3=0}, there exists a non-degenerate symmetric  binary signature $g$ such that $g$ is either not of the form $[a,b,-a]$ or not of the form $[a,0,a]$ for any numbers $a,b\in\complex$. 
This $g$ is obviously T$_{2}$-constructed from $\GG\cup\{f\}$, where $\GG$ is  a finite subset of $\UU$.  Hence, it follows by Lemma \ref{T-const-reduction} that  $\sharpcspstar_{2}(f,g)\APreduces \sharpcspstar_{2}(f)$. 
Moreover,  Lemma \ref{cai-lemmas}(\ref{cai-first-case}) ensures the existence of a pair $(g_1|g_2)\in\BB$ satisfying that  $\holantstar(g_1|g_2)\APreduces \holantstar(EQ_2|f,g)$. 
Proposition \ref{SAT-ONE3-EQ2} shows that   $\#\mathrm{SAT}_{\complex}^{*}\APreduces \holant(g_1|g_2)$. By Lemma \ref{holant-reduction}(2),  $\holantstar(EQ_2|f,g)\APreduces \sharpcspstar_{2}(f,g)$ also holds. 
Combining those AP-reductions, 
we conclude that $\#\mathrm{SAT}_{\complex}^{*}\APreduces  \sharpcspstar_{2}(f)$, as requested. 
\end{proofof}

Now, the remaining task is to prove Proposition \ref{h0+h2=h1+h3=0} and the rest of this paper is devoted to giving its proof. For our purpose, we will need another new idea, called {\em parametrized symmetrization}.

%%%%%%%%%%%%%%%%%%%
%%%%%%%%%%%%%%%%%%%
\section{Parametrized Symmetrization Technique}\label{sec:rparametric-sym}

We have shown in Section \ref{sec:asymmetric-main} how to transform arbitrary ternary signatures into symmetric ternary signatures. To prove Proposition \ref{h0+h2=h1+h3=0}, 
we also need to produce symmetric ``binary'' signatures from arbitrary 
``ternary'' signatures so that we can make use of 
Lemma \ref{cai-lemmas}(\ref{cai-first-case}).  
Here, we will introduce the second scheme of symmetrization, which 
is quite different from the first scheme given in Section \ref{sec:asymmetric-main}; 
in fact, this new scheme is  ``parametrized.'' In other words, it is not a fixed symmetrized signature as in Eq.(\ref{Sym(f)-def}); instead, it consists of an ``infinite series'' of symmetrized signatures. In this section, we assume that our target ternary signature $f$ has the form $(a,b,c,d,x,y,z,w)$. 
Later in Section \ref{sec:outline-proof}, we will give the proof of Proposition \ref{h0+h2=h1+h3=0}. 

%%%%%
\subsection{Parametrized Symmetrization Scheme}\label{sec:first-symmetrization}

A {\em parametrized symmetrization scheme} produces a set of degree-$2$ polynomials. This scheme is simple and easy to apply in the proof of Proposition \ref{h0+h2=h1+h3=0}. We first fix an arbitrary unary signature $u$ and we introduce 
$SymL(f)$ as a new signature defined as 
\[
SymL(f)(x_2,y_2) = \sum_{x_1,x_3,y_1\in\{0,1\}} f(x_1,x_2,x_3) f(y_1,y_2,x_3) u(x_1) u(y_1).
\]
It is important to note that $SymL(f)\leq_{con}^{*} \{f,u\}$. 
A simple calculation shows that, in particular, when $u=[0,1]$, 
$SymL(f)$ equals $[x^2+y^2,xz+yw,z^2+w^2]$.
In contrast, when $u=[1,\varepsilon]$ for a complex value $\varepsilon$,  $SymL(f)=[g_{0},g_{1},g_{2}]$ satisfies:   
\begin{enumerate}
\item $g_0 = \varepsilon^2(x^2+y^2) + 2\varepsilon(ax+by) + a^2+b^2$,
\vs{-2}
\item $g_1 = \varepsilon^2(xz+yw) + \varepsilon(az+bw+cx+dy) + ac+bd$, and
\vs{-2}
\item $g_2 = \varepsilon^2(z^2+w^2) + 2\varepsilon(cz+dw) + c^2+d^2$.
\end{enumerate}

In the rest of this paper, we fix $u=[1,\varepsilon]$. To emphasize the parameter $\varepsilon$ inside $u$, we also write $SymL(f)_{\varepsilon}$ and $[g_{0,\varepsilon},g_{1,\varepsilon},g_{2,\varepsilon}]$. 
One of the most important and useful properties is the non-degeneracy of $SymL(f_{\sigma})_{\varepsilon}$. Here, we prove that, when $f$ does not belong to $\DUP$, $SymL(f)$ cannot be a degenerate signature.

\begin{proposition}\label{SymL-non-degenerate}
Let $f$ be any ternary signature. 
If $f\not\in\DUP$, then $SymL(f_{\sigma})_{\varepsilon}$ is non-degenerate for any permutation $\sigma\in S_3$ 
and for all but finitely many numbers $\varepsilon\in\complex$. 
\end{proposition}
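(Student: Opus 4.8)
The statement asks us to show that $SymL(f_\sigma)_\varepsilon = [g_{0,\varepsilon},g_{1,\varepsilon},g_{2,\varepsilon}]$ is non-degenerate for all but finitely many $\varepsilon$, whenever $f \notin \DUP$. By the degeneracy criterion recalled in Section~\ref{sec:preliminaries}, a binary symmetric signature $[g_0,g_1,g_2]$ is degenerate if and only if $\det\tinymatrices{g_0}{g_1}{g_1}{g_2} = g_0 g_2 - g_1^2 = 0$. So I would set
\[
\Phi(\varepsilon) \;=\; g_{0,\varepsilon}\, g_{2,\varepsilon} \;-\; g_{1,\varepsilon}^2,
\]
which, from the explicit formulas for $g_{0,\varepsilon},g_{1,\varepsilon},g_{2,\varepsilon}$ given just before the proposition, is a polynomial in $\varepsilon$ of degree at most $4$. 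The whole point then reduces to: \emph{$\Phi$ is not the identically-zero polynomial}. Once $\Phi \not\equiv 0$, it has at most four roots, hence $SymL(f)_\varepsilon$ is non-degenerate for all but finitely many $\varepsilon$; applying the argument to each of the finitely many permuted signatures $f_\sigma$, $\sigma\in S_3$, and taking the (finite) union of the bad sets, finishes the claim.

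\textbf{Key steps.}
First I would fix the permutation $\sigma$ and abbreviate $f_\sigma = (a,b,c,d,x,y,z,w)$, so that $g_0,g_1,g_2$ are the quadratics in $\varepsilon$ displayed above. Second, I examine the top-degree coefficient of $\Phi$, namely the coefficient of $\varepsilon^4$, which is $(x^2+y^2)(z^2+w^2) - (xz+yw)^2$. By the Cauchy--Binet / Lagrange identity this equals $(xw - yz)^2$. If $xw - yz \neq 0$, then $\Phi$ has nonzero leading coefficient, hence $\Phi \not\equiv 0$ and we are done for this $\sigma$. Third, in the remaining case $xw - yz = 0$ I would similarly inspect the lower-order coefficients: the $\varepsilon^0$ coefficient is $(a^2+b^2)(c^2+d^2) - (ac+bd)^2 = (ad-bc)^2$, and the $\varepsilon^2$ coefficient is a bilinear expression in the two $2\times 2$ determinants $ad - bc$, $xw - yz$ and the ``cross'' determinants like $aw - bz$, $cx - dy$. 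The task is to show that all of these coefficients cannot vanish simultaneously unless $f_\sigma$ (and hence $f$, after undoing the permutation) lies in $\DUP$.

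\textbf{The main obstacle.}
The hard part is the case analysis when the leading coefficient $(xw-yz)^2$ vanishes: I must extract from ``all coefficients of $\Phi$ vanish'' the structural conclusion ``$f_\sigma = u(x_1)\cdot(f_0,f_0)$ for some unary $u$ and binary $f_0$,'' i.e.\ that the two ``slices'' $f_0 = (a,b,c,d)$ and $f_1 = (x,y,z,w)$ of $f_\sigma$ are scalar multiples of one another (which is precisely what puts $f_\sigma$ into $\DUP$). The plan is: $\Phi \equiv 0$ says the pencil $g_{0,\varepsilon} g_{2,\varepsilon} = g_{1,\varepsilon}^2$ of quadratic forms is a perfect square for every $\varepsilon$, and each of $g_{i,\varepsilon}$ is itself a quadratic form in $(1,\varepsilon)$ built from the $i$-th ``Hadamard-type'' combination of the two slice-vectors. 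Vanishing of all the determinant-type coefficients forces the two slices $(a,c)$ vs.\ $(x,z)$ and $(b,d)$ vs.\ $(y,w)$ to be proportional with a \emph{common} proportionality constant; one has to check the various degenerate sub-cases (a slice being zero, a coordinate being zero) to see the common constant can always be arranged, yielding $f_1 = \lambda f_0$ and hence membership in $\DUP$. This is the one place where genuine (though elementary and finite) bookkeeping is unavoidable; everything else is the Lagrange-identity observation plus the ``nonzero polynomial has finitely many roots'' principle, finally unioned over $\sigma \in S_3$.
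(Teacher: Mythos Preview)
Your approach is correct and essentially the same as the paper's: both reduce to showing that the degree-$\le 4$ polynomial $\Phi(\varepsilon)=g_{0,\varepsilon}g_{2,\varepsilon}-g_{1,\varepsilon}^2$ is not identically zero unless $f\in\DUP$, and both extract the three relations $ad=bc$, $xw=yz$, $aw+dx=bz+cy$ from the vanishing of the coefficients (your Lagrange-identity observation shows in fact that $\Phi(\varepsilon)=\bigl[(a+\varepsilon x)(d+\varepsilon w)-(b+\varepsilon y)(c+\varepsilon z)\bigr]^2$, so these three relations are exactly the content of $\Phi\equiv 0$).

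One caveat about your sketch of the case analysis: you describe the outcome as ``$f_1=\lambda f_0$'', i.e.\ proportionality of the two $x_1$-slices $(a,b,c,d)$ and $(x,y,z,w)$. That is \emph{not} what the three relations force in general. In the paper's case analysis the $\DUP$ structure emerges along $x_2$ or $x_3$, not $x_1$; for instance when $ax\neq 0$ and $b/a=y/x$ one gets $(b,d,y,w)=\gamma(a,c,x,z)$ (proportionality of the $x_3$-slices), while when $b/a\neq y/x$ one gets $az=cx$ and then proportionality of the $x_2$-slices. If you try to prove the $x_1$-slice proportionality directly you will not succeed. Since $\DUP$ allows any permutation, the overall conclusion $f\in\DUP$ is still correct, but your bookkeeping must track which slicing direction works in each subcase.
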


Since the proof of this proposition demands fundamental properties of $SymL(f)$ that are listed in Section \ref{sec:SymL(f)},  
we postpone the proof until Section \ref{sec:proof-SymL}.

%%%%%%
\subsection{Proof of Proposition \ref{h0+h2=h1+h3=0}}\label{sec:outline-proof}

In Sections \ref{sec:asymmetric-main} and \ref{sec:first-symmetrization}, we have introduced two schemes of symmetrization. These schemes are powerful enough to prove Proposition \ref{h0+h2=h1+h3=0}, 
which is a basis of the proof of Theorem \ref{main-theorem}.  
Henceforth, we will present the proof of Proposition \ref{h0+h2=h1+h3=0}. Our goal is 
to prove that, for a given ternary signature $f$ in $SIG_1$,  if $f\not\in\DUP$, then $SymL(f_{\sigma})_{\varepsilon}$ becomes the desired $g$ stated in the proposition for certain values  of 
$\sigma$ and $\varepsilon$. 
We proceed our argument by way of contradiction. Let us describe this argument in more details.     

\s

Let $f$ be any ternary signature not in $\DUP$. Without loss of generality, we fix a permutation $(x_1x_2x_3)$ and assume that $Sym(f)$ is non-degenerate and is $SIG_1$-legal. 
For any given permutation $\sigma\in S_3$, 
we write $SymL(f_{\sigma})_{\varepsilon} = [g_{0,\varepsilon}^{\sigma},g_{1,\varepsilon}^{\sigma},g_{2,\varepsilon}^{\sigma}]$, as done in Section \ref{sec:first-symmetrization}.   
Hereafter, we want to prove that there exists a permutation $\sigma\in S_3$ 
such that both $g_{0,\varepsilon}^{\sigma}+ g_{2,\varepsilon}^{\sigma}\neq0$ and $g_{0,\varepsilon}^{\sigma}\neq g_{2,\varepsilon}^{\sigma}\vee g_{1,\varepsilon}^{\sigma}\neq0$ hold for all but finitely many values  $\varepsilon\in\complex$. Now, assume otherwise; that is, 
\begin{itemize}
\item[(*)] for every permutation $\sigma$ and for all but finitely many values of $\varepsilon$, either (i) $g_{0,\varepsilon}^{\sigma} +g_{2,\varepsilon}^{\sigma}=0$ or (ii) $g_{0,\varepsilon}^{\sigma} =g_{2,\varepsilon}^{\sigma}\wedge g_{1,\varepsilon}^{\sigma}=0$ holds. 
\end{itemize}
We first note that the above two conditions (i) and (ii) do not hold simultaneously. 
To see this, assume that the two conditions hold together; thus,   
$g_{0,\varepsilon}^{\sigma} =g_{1,\varepsilon}^{\sigma} =g_{2,\varepsilon}^{\sigma} =0$ follows. In short, it holds that  $SymL(f_{\sigma})_{\varepsilon}=[0,0,0]$. This clearly 
indicates the degeneracy of  $SymL(f_{\sigma})_{\varepsilon}$, 
contradicting Proposition \ref{SymL-non-degenerate}. Therefore, exactly one of the two conditions should hold. This fact will be frequently used in Sections \ref{sec:proof-SymL}--\ref{sec:type-III}. 

Our assumption (*) can be nailed down to the following three cases so that each case can be discussed separately. 
First, let us consider the case where the condition (ii) always holds for every permutation $\sigma$ and for almost all values of $\varepsilon$. 
For each fixed $\sigma\in S_3$, since the equations $g_{0,\varepsilon}^{\sigma}= g_{2,\varepsilon}^{\sigma}$ and $g_{1,\varepsilon}^{\sigma} =0$ can be viewed as a set of polynomial equations in $\varepsilon$ of degrees at most two, the condition (ii) fails for at most two values of $\varepsilon$. 
Since $f$ is $SIG_1$-legal, this case obviously contradicts the consequence of Proposition \ref{prop-type-I} given below. For readability, we postpone the proof of this proposition until Section \ref{sec:type-I}. 
\begin{proposition}\label{prop-type-I}
Let $f$ be any ternary signature not in $\DUP$. If $f$ is $SIG_1$-legal, 
then there exists a permutation $\sigma$  such that 
either $g_{0,\varepsilon}^{\sigma} \neq  g_{2,\varepsilon}^{\sigma}$ or  $g_{1,\varepsilon}^{\sigma} \neq 0$ holds for at least three  
distinct values of $\varepsilon$.
\end{proposition}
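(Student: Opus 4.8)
The plan is to argue by contradiction in the spirit already announced in the excerpt: suppose that for \emph{every} permutation $\sigma\in S_3$, the symmetrized binary signature $SymL(f_\sigma)_\varepsilon=[g_{0,\varepsilon}^\sigma,g_{1,\varepsilon}^\sigma,g_{2,\varepsilon}^\sigma]$ satisfies $g_{0,\varepsilon}^\sigma=g_{2,\varepsilon}^\sigma$ \emph{and} $g_{1,\varepsilon}^\sigma=0$ for all but finitely many $\varepsilon$. Since each $g_{i,\varepsilon}^\sigma$ is a polynomial in $\varepsilon$ of degree at most $2$ (as displayed in Section~\ref{sec:first-symmetrization}), ``for all but finitely many $\varepsilon$'' is equivalent to ``identically in $\varepsilon$'': the polynomials $g_{0,\varepsilon}^\sigma-g_{2,\varepsilon}^\sigma$ and $g_{1,\varepsilon}^\sigma$ must be the zero polynomial, hence all their coefficients vanish. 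I would write out these coefficient conditions for each of the six permutations $\sigma$, using the explicit formulas for $g_0,g_1,g_2$ in terms of $f=(a,b,c,d,x,y,z,w)$ (noting $f_\sigma$ just permutes the arguments, so each $g_{i,\varepsilon}^\sigma$ is obtained from the displayed formulas by the corresponding relabeling of the eight components).

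**Extracting the structural consequences.**
The constant-coefficient equation from $g_{0,\varepsilon}^\sigma=g_{2,\varepsilon}^\sigma$ at $\varepsilon^0$ says $a^2+b^2=c^2+d^2$ (for the identity $\sigma$), and the leading ($\varepsilon^2$) coefficient says $x^2+y^2=z^2+w^2$; the middle ($\varepsilon^1$) coefficient gives $ax+by=cz+dw$. Likewise $g_{1,\varepsilon}^\sigma\equiv 0$ forces $ac+bd=0$, $xz+yw=0$, and $az+bw+cx+dy=0$. Running the same extraction over all $\sigma\in S_3$ produces a system of quadratic equations in the eight components, one family per ``slice'' of $f$ obtained by fixing each of the three variables in turn. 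The key point I expect to exploit is that each fixed-variable slice of $f$ is a binary signature, and the equations $p^2+q^2$-type balance plus the off-diagonal vanishing are exactly the conditions saying that the $2\times2$ matrix of that slice is, after normalization, a scalar multiple of an orthogonal (or degenerate) matrix. Combining the conditions coming from different permutations (which mix the three different slicings) should collapse the freedom and force $f$ to be degenerate in a particular structured way.

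**Reaching the contradiction.**
I would then show that this forced structure is precisely membership in $\DUP$ — or leads to $Sym(f)$ being degenerate or not $SIG_1$-legal, contradicting one of the two standing hypotheses ($f\notin\DUP$ and $f$ is $SIG_1$-legal). Concretely: the vanishing of $g_{1,\varepsilon}^\sigma$ for all $\sigma$ and all $\varepsilon$ is a strong ``orthogonality between slices'' statement; together with the norm-balance conditions it should imply that the two binary signatures $f_0=(a,b,c,d)$ and $f_1=(x,y,z,w)$ (obtained by pinning $x_1=0$ and $x_1=1$) are proportional up to sign, or that $f$ factors as $u(x_1)\cdot(f_0,f_0)$ for some unary $u$ after a permutation — which is the definition of $\DUP$. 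Alternatively, if $f\notin\DUP$ survives, one computes $Sym(f)$ from its defining triple-sum (Eq.~(\ref{Sym(f)-def})) using these relations and finds $Sym(f)=[0,0,0,0]$ or a degenerate signature, contradicting $SIG_1$-legality (which requires $Sym(f)=[a',b',-a',-b']$ with $a'^2+b'^2\ne0$).

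**Main obstacle.**
The hard part will be the bookkeeping: assembling the full system of quadratic coefficient equations across all six permutations and all three pinned coordinates without error, and then \emph{organizing} the elimination so that the step ``this system $\Rightarrow$ $f\in\DUP$ (or $Sym(f)$ degenerate/not $SIG_1$-legal)'' is transparent rather than a brute-force resultant computation. I expect the clean way to do this is to phrase everything in terms of the $2\times 2$ matrix slices $A_1=\bigl(\begin{smallmatrix}a&b\\c&d\end{smallmatrix}\bigr)$, $A_2=\bigl(\begin{smallmatrix}a&b\\x&y\end{smallmatrix}\bigr)$-type objects and their row/column norms and inner products, reducing the whole argument to a short linear-algebra statement about when all such pairings are ``balanced and orthogonal.'' Getting that reformulation right — so that the contradiction with $f\notin\DUP$ is immediate — is where the real work lies; the polynomial-degree argument at the front and the explicit $Sym(f)$ computation at the back are routine.
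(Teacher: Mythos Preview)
Your overall strategy matches the paper's: assume $g_{0,\varepsilon}^\sigma=g_{2,\varepsilon}^\sigma$ and $g_{1,\varepsilon}^\sigma=0$ identically in $\varepsilon$ for every $\sigma$, extract the coefficient equations, and contradict one of the hypotheses. The endgame, however, is your second alternative, not the first: the paper does \emph{not} deduce $f\in\DUP$ from the coefficient system (except in the trivial all-zero subcase); it contradicts $SIG_1$-legality by computing $Sym(f)=[h_0,h_1,h_2,h_3]$ explicitly.

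The key intermediate structural fact you should aim for is sharper than ``slices proportional up to sign'': combining the coefficient equations across all six permutations (specifically Eq.~(\ref{123-x2+y2-z2-w2})--(\ref{312-b2+d2-y2-w2})) yields $a^2=d^2=y^2=z^2$ and $b^2=c^2=x^2=w^2$, together with the products $ax=dw$, $by=cz$, $ac=yw$, $bd=xz$, $ab=zw$, $cd=xy$. So every entry of $f$ is $\pm a$ or $\pm w$. The paper then parametrizes $f$ by sign choices $e_i\in\{\pm1\}$, splits on whether $aw\neq0$ or $aw=0$, and in each case evaluates $h_0,h_1$ (or $h_1,h_3$, etc.) directly from Eq.~(\ref{h-123-zero})--(\ref{h-123-three}). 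In the $aw\neq0$ case one further gets $a^2+w^2=0$, and then either $h_0=h_1=0$ or $h_1=\gamma h_0$ with $\gamma\in\{\pm i\}$, violating $h_0^2+h_1^2\neq0$; in the $aw=0$ subcases one gets $h_0+h_2\neq0$ or $h_1+h_3\neq0$. Your hoped-for clean matrix-theoretic shortcut does not obviously materialize; the paper's argument is this concrete sign-parametrization plus a short case analysis, and your proposal would benefit from committing to that route rather than the $\DUP$ one.
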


Next, let us consider the case where two distinct permutations $\sigma$ and $\tau$ satisfy the conditions (i) and (ii), respectively, for almost all values of $\varepsilon$.  
As the following proposition indicates, Statement (*) forces this case to fail.  The proposition will be proven in Section \ref{sec:type-II}. 
\begin{proposition}\label{prop-type-II}
Let $f$ be any ternary signature such that $f$ is $SIG_1$-legal. 
Assume that $f\not\in\DUP$. If Statement (*) holds, then the following property is never satisfied: there are two distinct permutations $\sigma$ and $\tau$ for which  $g_{0,\varepsilon}^{\sigma}=g_{2,\varepsilon}^{\sigma}\wedge g_{1,\varepsilon}^{\sigma}=0$ and $g_{0,\varepsilon}^{\tau}+g_{2,\varepsilon}^{\tau}=0$ for all but finitely many values of $\varepsilon$.   
\end{proposition}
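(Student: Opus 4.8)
The plan is to argue by contradiction: assume, in addition to Statement (*), that two distinct permutations $\sigma,\tau\in S_3$ witness the forbidden pattern, so that $g_{0,\varepsilon}^{\sigma}=g_{2,\varepsilon}^{\sigma}$ and $g_{1,\varepsilon}^{\sigma}=0$ for cofinitely many $\varepsilon$, while $g_{0,\varepsilon}^{\tau}+g_{2,\varepsilon}^{\tau}=0$ for cofinitely many $\varepsilon$. Since each of $g_{0,\varepsilon},g_{1,\varepsilon},g_{2,\varepsilon}$ is, for every permutation, a polynomial in $\varepsilon$ of degree at most two (by the explicit formulas in Section \ref{sec:first-symmetrization}), an identity holding for cofinitely many $\varepsilon$ holds \emph{identically} as a polynomial in $\varepsilon$. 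Hence I would first upgrade both clusters of conditions to polynomial identities in $\varepsilon$ and then compare coefficients of $\varepsilon^{0},\varepsilon^{1},\varepsilon^{2}$. Concretely, writing $f_{\sigma}=(a',b',c',d',x',y',z',w')$ (the appropriately permuted components of $f$), the identity $g_{1,\varepsilon}^{\sigma}\equiv0$ gives three scalar equations $x'z'+y'w'=0$, $a'z'+b'w'+c'x'+d'y'=0$, $a'c'+b'd'=0$, and $g_{0,\varepsilon}^{\sigma}\equiv g_{2,\varepsilon}^{\sigma}$ gives $x'^2+y'^2=z'^2+w'^2$, $a'x'+b'y'=c'z'+d'w'$, $a'^2+b'^2=c'^2+d'^2$; similarly the $\tau$-identity $g_{0,\varepsilon}^{\tau}+g_{2,\varepsilon}^{\tau}\equiv0$ splits into three equations in the $\tau$-permuted components.

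The heart of the argument is to feed these polynomial consequences back into the hypothesis that $f$ is $SIG_1$-legal — i.e.\ $Sym(f)=[A,B,-A,-B]$ with $A^2+B^2\neq0$ — together with $f\notin\DUP$, and derive a contradiction. Recall from Section \ref{sec:asymmetric-main} that $Sym(f)=[h_0,h_1,h_2,h_3]$ where each $h_i$ is an explicit cubic form in $(a,b,c,d,x,y,z,w)$ (Eq.(\ref{h-123-zero})--(\ref{h-123-three})); the $SIG_1$-legality translates into $h_0+h_2=0$, $h_1+h_3=0$, and $h_0^2+h_1^2\neq0$. My plan is to show that the vanishing conditions extracted above from the $\sigma$-cluster force $f_{\sigma}$ to have a special "product-like" structure on each half, and then to combine this with the $\tau$-cluster condition and the $Sym(f)$ constraints. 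I expect the $\sigma$-conditions (the "$g_1\equiv0$ and $g_0\equiv g_2$" package) to rigidly constrain the two binary halves $f_0=(a',b',c',d')$ and $f_1=(x',y',z',w')$ of $f_{\sigma}$ — likely forcing $f_1$ to be a scalar multiple (or an isometric image) of $f_0$, which is exactly the kind of structure that lands $f$ in $\DUP$. Any residual freedom should then be eliminated by the $\tau$-condition $g_{0,\varepsilon}^{\tau}+g_{2,\varepsilon}^{\tau}\equiv0$ and by $A^2+B^2\neq0$; one of these will contradict $f\notin\DUP$, finishing the proof. Throughout, Proposition \ref{SymL-non-degenerate} is available to rule out the all-zero case and to ensure $SymL(f_{\sigma})_{\varepsilon}$ is a genuine rank-two binary signature for all but finitely many $\varepsilon$, so I may freely assume non-degeneracy whenever convenient.

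The main obstacle I anticipate is the bookkeeping of permuted components: the six (or nine) quadratic relations coming from the two clusters are expressed in the \emph{permuted} variables $f_{\sigma},f_{\tau}$, while the $SIG_1$-legality and the $\DUP$-membership criterion are most naturally stated for $f$ itself, and $\sigma,\tau$ are a priori arbitrary distinct elements of $S_3$. So the real work is a finite but fiddly case analysis over the pair $(\sigma,\tau)$ (up to the symmetries already used to normalize the "reference" permutation to $(x_1x_2x_3)$), in each case translating the relations into statements about $(a,b,c,d,x,y,z,w)$ and checking that the conjunction of all constraints plus $A^2+B^2\neq0$ either is infeasible or implies $f\in\DUP$. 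I would organize this by first reducing the number of genuinely distinct $(\sigma,\tau)$-cases using the fact that $Sym(f_\rho)$ and $Sym(f)$ are related by permutation (and similarly for $SymL$), so that the "either (i) or (ii) for every $\rho$" part of Statement (*) can be invoked symmetrically; this should cut the casework down to one or two representative configurations, each of which reduces to an elementary contradiction in low-degree polynomial equations — precisely the style of argument advertised in the introduction.
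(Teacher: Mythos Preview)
Your overall strategy matches the paper's: assume the forbidden pattern, upgrade the cofinite conditions to polynomial identities in $\varepsilon$, extract the coefficient equations listed in Section~\ref{sec:SymL(f)}, and run a case analysis against the $SIG_1$-legality constraints $h_0+h_2=h_1+h_3=0$, $h_0^2+h_1^2\neq0$ and $f\notin\DUP$. So the skeleton is right.

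There is, however, one concrete ingredient you have not identified, and without it the equations you wrote down are too weak to close the argument. The six scalar relations coming from $g_0^\sigma\equiv g_2^\sigma$ and $g_1^\sigma\equiv0$ for a \emph{single} $\sigma$, together with the three from $g_0^\tau+g_2^\tau\equiv0$ for a single $\tau$, do not by themselves force $f_1$ to be a scalar multiple of $f_0$ or land $f$ in $\DUP$; the paper needs more. The missing idea is to use Statement~(*) \emph{actively} on a third permutation. Specifically, if $\sigma=(x_1x_2x_3)$ satisfies clause~(ii), then its companion $\sigma'=(x_1x_3x_2)$ (the one that swaps the last two variables) cannot satisfy clause~(i), because Eq.~(\ref{SymL-x2-a2-ax}) shows the condition $g_0+g_2\equiv0$ is literally the same for $\sigma$ and $\sigma'$---and if $\sigma$ satisfied (i) it would satisfy both (i) and (ii), which Proposition~\ref{SymL-non-degenerate} forbids. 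Hence by~(*), $\sigma'$ must also satisfy clause~(ii), and you now get Eq.~(\ref{132-x2+z2-y2-w2}) and Eq.~(\ref{132-ay+cw+bx+dz}) \emph{in addition to} Eq.~(\ref{123-x2+y2-z2-w2}) and Eq.~(\ref{123-az+bw+cx+dy}). It is the combination of the $\sigma$- and $\sigma'$-equations that yields the rigid relations $a^2=d^2$, $b^2=c^2$, $x^2=w^2$, $y^2=z^2$, $ax=dw$, $by=cz$, which are what actually drive the subsequent case split. Your proposal treats Statement~(*) only as a symmetry-reduction device for the pair $(\sigma,\tau)$; in the paper it is a source of extra equations.

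Two smaller corrections. First, the contradiction is not always ``$f\in\DUP$'': in several branches the paper instead computes $h_0,h_1,h_2,h_3$ explicitly from Eq.~(\ref{h-123-zero})--(\ref{h-123-three}) and shows either $h_0^2+h_1^2=0$ or $h_0+h_2\neq0$, violating $SIG_1$-legality directly. So your expectation that the $\sigma$-cluster alone forces a $\DUP$-type product structure is too optimistic; both exit routes are genuinely needed. Second, the representative $(\sigma,\tau)$ pairs the paper treats are $\bigl((x_1x_2x_3),(x_2x_1x_3)\bigr)$ and $\bigl((x_2x_1x_3),(x_1x_2x_3)\bigr)$; the reduction to these is exactly via the $\sigma\leftrightarrow\sigma'$ pairing just described, not via any relation between $Sym(f_\rho)$ and $Sym(f)$.
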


Finally, we consider the remaining situation that the condition (i) holds  for every permutation $\sigma$ and for almost all values of $\varepsilon$.   Proposition \ref{prop-type-III} implies that $f\in\DUP$; however, this contradicts our assumption that $f\not\in\DUP$. In  Section \ref{sec:type-III}, we will give the proof of this proposition. 
\begin{proposition}\label{prop-type-III}
Let $f$ be any ternary signature that is $SIG_1$-legal. 
Assume that, for every permutation $\sigma\in S_3$ and for all but finitely many $\varepsilon$'s , $g_{0,\varepsilon}^{\sigma}+ g_{2,\varepsilon}^{\sigma}=0$ holds. It then holds that $f\in\DUP$.   
\end{proposition}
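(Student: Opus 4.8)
The plan is to argue by contradiction: assume $f\notin\DUP$ but that the hypothesis of the proposition holds, and derive a clash with the $SIG_1$-legality of $f$ (and, in degenerate branches, with Proposition~\ref{SymL-non-degenerate}). The first step converts ``for all but finitely many $\varepsilon$'' into honest polynomial identities. With $u=[1,\varepsilon]$ fixed as in the paper, the formulas for $SymL(f_\sigma)_\varepsilon=[g_{0,\varepsilon}^\sigma,g_{1,\varepsilon}^\sigma,g_{2,\varepsilon}^\sigma]$ collapse to
\[
g_{0,\varepsilon}^\sigma+g_{2,\varepsilon}^\sigma=\sum_{r,t\in\{0,1\}}\bigl(f_\sigma(0,r,t)+\varepsilon f_\sigma(1,r,t)\bigr)^2,
\]
a polynomial in $\varepsilon$ of degree at most $2$; vanishing for all but finitely many $\varepsilon$, it vanishes identically. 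Its value depends only on which variable of $f$ occupies the first slot of $f_\sigma$, i.e.\ only on $\sigma(1)$, so over $S_3$ we get exactly one identity per axis $j\in\{1,2,3\}$, and the $\varepsilon^0,\varepsilon^1,\varepsilon^2$ coefficients say that $Q(f^{x_j=0})=Q(f^{x_j=1})=B(f^{x_j=0},f^{x_j=1})=0$, where $Q(v_1,v_2,v_3,v_4)=v_1^2+v_2^2+v_3^2+v_4^2$ on $\complex^4$ and $B$ is its polarization; equivalently, the two slices of $f$ transverse to each axis span a totally $Q$-isotropic subspace of $\complex^4$.

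Next I would combine the contradiction hypothesis with a matrix reformulation. Since $f\in\DUP$ precisely when the two slices transverse to some axis are linearly dependent, the assumption $f\notin\DUP$ makes each such pair independent, hence the span of each is a $2$-dimensional \emph{maximal} totally $Q$-isotropic plane. Writing $f=(F_0,F_1)$ with $F_0=\tinymatrices{a}{b}{c}{d}$ and $F_1=\tinymatrices{x}{y}{z}{w}$, a direct computation rewrites the components of $Sym(f)$ (Eqs.~(\ref{h-123-zero})--(\ref{h-123-three})) as power traces, $Sym(f)=[\,\mathrm{tr}(F_0^3),\ \mathrm{tr}(F_0^2F_1),\ \mathrm{tr}(F_0F_1^2),\ \mathrm{tr}(F_1^3)\,]$, and turns the axis-$1$ isotropy equations into $\mathrm{tr}(F_sF_t^{T})=0$ for all $s,t\in\{0,1\}$, with analogous statements for axes $2,3$ in terms of the other two reshapings of the $2{\times}2{\times}2$ array of entries. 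Then $SIG_1$-legality reads $\mathrm{tr}(F_0F_1^2)=-\mathrm{tr}(F_0^3)$, $\mathrm{tr}(F_1^3)=-\mathrm{tr}(F_0^2F_1)$, and $\mathrm{tr}(F_0^3)^2+\mathrm{tr}(F_0^2F_1)^2\neq0$; moreover Proposition~\ref{SymL-non-degenerate}, together with $g_{0,\varepsilon}^\sigma+g_{2,\varepsilon}^\sigma\equiv0$, gives $(g_{0,\varepsilon}^\sigma)^2+(g_{1,\varepsilon}^\sigma)^2\not\equiv0$ for each $\sigma$, which I keep in reserve to dispose of degenerate branches.

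The elimination itself goes as follows. Over $\complex$ a maximal totally $Q$-isotropic plane of $\complex^4$ lies in one of the two rulings of $\{Q=0\}$, so for axis~$1$ the plane $\mathrm{span}(F_0,F_1)$ forces one of a short list of explicit linear relations between $F_0$ and $F_1$, and likewise for the reshaped slices of axes $2,3$. Because the slices for distinct axes pairwise share two of the eight entries of $f$, these relations are heavily coupled: I would fix the ruling type for axis~$1$, solve its relations for $F_1$ in terms of $F_0$ up to a scalar, feed the result into the axis-$2$ and axis-$3$ isotropy conditions to pin $F_0$ down to a couple of free parameters, and then verify that the two $SIG_1$-legality equations together with $\mathrm{tr}(F_0^3)^2+\mathrm{tr}(F_0^2F_1)^2\neq0$ cannot hold simultaneously. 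In each branch this reduces to a handful of low-degree polynomial (in)equations over $\complex$ with no common solution --- exactly the elementary analysis promised in the introduction --- while the genuinely degenerate sub-cases (a zero slice, or a nilpotent slice whose matrix squares to $0$) are ruled out using $f\notin\DUP$ and Proposition~\ref{SymL-non-degenerate}.

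The hard part will be this last step, namely keeping the case split under control: the nine isotropy equations alone cut out a sizeable variety containing many tensors \emph{outside} $\DUP$, so $SIG_1$-legality must be used in an essential way rather than for bookkeeping. One convenient organizing device is a holographic transformation by an \emph{orthogonal} $2{\times}2$ matrix $M$ (with $M^{T}M=I$): unlike a general holographic transformation this one respects the construction of both $Sym(f)$ and $SymL(f)$, because the internal contractions in their definitions run through the form $Q$, so it preserves membership in $\DUP$, the isotropy conditions, and the $SIG_1$-legal shape of $Sym(f)$ up to scalars; applying it to simplify one slice matrix shortens the final check and completes the proof of Proposition~\ref{prop-type-III}.
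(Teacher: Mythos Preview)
Your setup is correct and conceptually cleaner than the paper's. The identity $Sym(f)=[\mathrm{tr}(F_0^3),\mathrm{tr}(F_0^2F_1),\mathrm{tr}(F_0F_1^2),\mathrm{tr}(F_1^3)]$ checks against Eqs.~(\ref{h-123-zero})--(\ref{h-123-three}); the reformulation of the three per-axis identities as total $Q$-isotropy of the two slices is exactly Eq.~(\ref{SymL-x2-a2-ax})--(\ref{SymL-a2-b2-ab}); and the orthogonal-invariance trick is valid because $(1,\pm i)$ are eigenvectors of every complex-orthogonal $2\times2$ matrix, so $Sig^{(1)}=\mathrm{span}\{(1,i)^{\otimes3},(1,-i)^{\otimes3}\}$ is preserved under such transformations, as are $\DUP$ and the isotropy conditions.

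The proposal, however, stops precisely where the work begins: you flag the elimination as ``the hard part'' and then do not carry it out. The paper's proof (Section~\ref{sec:type-III}) does this work directly, without the quadric framing. It first establishes two intermediate claims---that $(a+d)(y+z)(x+w)\neq0$ (Claim~\ref{from-a+d-to-x+w}) and that $xw=yz$ (Claim~\ref{xw-equal-yz})---each by a short case split against Eqs.~(\ref{SymL-x2-a2-ax})--(\ref{azcx-bwdy}), then obtains the permuted analogues $cw=dz$ and $bw=dy$, and closes with a four-way split on whether $x$ and $y$ vanish. Every branch is a few lines of polynomial manipulation ending either in $f\in\DUP$ or in a violation of $h_0+h_2=h_1+h_3=0$ or $h_0^2+h_1^2\neq0$.

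Your ruling-based organization is viable and might make the symmetry among the three axes more visible, but be aware of one point you glossed over: the three maximal isotropic planes live in \emph{different} copies of $\complex^4$ (three different flattenings of the same $2{\times}2{\times}2$ array), so ``fix the ruling for axis~$1$ and feed into axes~$2,3$'' is not a single linear-algebra step---it is exactly where the eight entries of $f$ get entangled, and you should expect a branch count comparable to the paper's. The orthogonal normalization buys you at most a constant factor. In short, the plan is sound and the reformulations are genuinely useful, but as written this is an outline; what is missing is the elimination itself, which in the paper occupies the entirety of Section~\ref{sec:type-III}.
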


Since all the above three cases lead to contradictions, we then conclude that 
Statement (*) does not hold. Hence,   
there exist a permutation $\sigma\in S_3$  and a value 
$\varepsilon\in\complex$ for which  $g_{0,\varepsilon}^{\sigma}+ g_{2,\varepsilon}^{\sigma}\neq0$ and $g_{0,\varepsilon}^{\sigma}\neq g_{2,\varepsilon}^{\sigma}\vee g_{1,\varepsilon}^{\sigma}\neq0$. 
Choose such a pair $(\sigma,\varepsilon)$ and  define the desired $g$ (stated in Proposition \ref{h0+h2=h1+h3=0}) to be $SymL(f_{\sigma})_{\varepsilon}$. 
Notice that, since $f\not\in\DUP$, Proposition \ref{SymL-non-degenerate} guarantees the non-degeneracy of $g$.  
Therefore, the proof is now completed.

%%%%%%%%%%%%%%%%%%%%%%%%
%%%%%%%%%%%%%%%%%%%%%%%%
\section{Fundamental Properties of Symmetrization Schemes}\label{sec:SymL(f)}

To simplify proofs that will be given in Sections \ref{sec:proof-SymL}--\ref{sec:type-III}, we wish to list useful properties, equations, and conditions that fulfill the requirements 
of $Sym(f)$ as well as $SymL(f)$.  Throughout this section, we fix a ternary signature $f=(a,b,c,d,x,y,z,w)$. 

In the subsequent subsections, we will take the following convention. A permutation $\sigma$ in $S_3$ should be formally 
expressed as, \eg  $\sigma= (312)$;  for clarity, we 
slightly abuse this notation and treat it as a permutation over three different variables $x_1,x_2,x_3$. Thus, we write $\sigma=(x_3x_1x_2)$ instead of $\sigma=(312)$ to stress the central roles of those variables.

%%% 
\subsection{Basic Properties of SymL(f)}

Let us consider the parametrized symmetrization $SymL(f)_{\varepsilon} =[g_{0,\varepsilon}^{\sigma},g_{1,\varepsilon}^{\sigma},g_{2,\varepsilon}^{\sigma}]$ of $f$. We want to present necessary conditions for three different situations in which each of the following holds:  (i) $g_{0,\varepsilon}^{\sigma}+g_{2,\varepsilon}^{\sigma} =0$, (ii) $g_{0,\varepsilon}^{\sigma} =g_{2,\varepsilon}^{\sigma} \wedge g_{1,\varepsilon}^{\sigma}=0$, and (iii)  $g_{0,\varepsilon}^{\sigma}g_{2,\varepsilon}^{\sigma} = (g_{1,\varepsilon}^{\sigma})^2$. The parameter $\varepsilon$ tends to be omitted whenever it is clear from the context.

%%%%
\subsubsection{Situation 1: $g_0+g_2=0$}\label{sec:SymL-sit-2}
 
Meanwhile, we 
fix $\sigma=(x_1x_2x_3)$ and omit subscript ``$\sigma$.'' Let us consider the first situation that $g_{0,\varepsilon}+g_{2,\varepsilon}=0$ holds for all but two values of $\varepsilon$.  
Clearly, the equation $g_{0,\varepsilon}+g_{2,\varepsilon}=0$ is equivalent to  
\[
\varepsilon^2(x^2+y^2+z^2+w^2) + 2\varepsilon(ax+by+cz+dw) + a^2+b^2+c^2+d^2=0.
\]
Since at least three different values of  $\varepsilon$ 
satisfy the above equation, the coefficient of each term $\varepsilon^i$ ($i\in\{0,1,2\}$) should be zero. Therefore, the following Eq.(\ref{SymL-x2-a2-ax}) should hold. Eq.(\ref{SymL-x2-a2-ax}) also holds for $\sigma=(x_1x_3x_2)$ because an exchange of the two variables $x_2$ and $x_3$ does not change those equations. 
\begin{equation}\label{SymL-x2-a2-ax}
(x_1x_2x_3)  \;\text{or}\; (x_1x_3x_2) \;\;\;  x^2+y^2+z^2+w^2 = a^2+b^2+c^2+d^2 = ax+by+cz+dw =0.
\end{equation}
By permuting variable indices further, we obtain two more properties:
\begin{equation}\label{SymL-a2-c2-ac}
(x_2x_1x_3)  \;\text{or}\; (x_2x_3x_1) \;\;\; a^2+b^2+x^2+y^2 = c^2+d^2+z^2+w^2 = ac+bd+xz+yw =0.
\end{equation}\vs{-5}
\begin{equation}\label{SymL-a2-b2-ab}
(x_3x_2x_1)  \;\text{or}\; (x_3x_1x_2) \;\;\; a^2+c^2+x^2+z^2 = b^2+d^2+y^2+w^2 = ab+cd+xy+zw =0.
\end{equation}
For a later convenience, we claim that if all the above properties  hold then Eq.(\ref{a2+b2-relation})--(\ref{b2+d2-relation}) described below hold. This claim is proven as follows. 
{}From  $a^2+b^2+c^2+d^2 = c^2+d^2+z^2+w^2=0$ (Eq.(\ref{SymL-x2-a2-ax})--(\ref{SymL-a2-c2-ac})), we obtain  $a^2+b^2-z^2-w^2=0$. Similarly, from  $x^2+y^2+z^2+w^2=a^2+c^2+x^2+z^2=0$ (Eq.(\ref{SymL-x2-a2-ax})\&(\ref{SymL-a2-b2-ab})) follows  $a^2+c^2-y^2-w^2=0$. 
By combining these two obtained equations, we conclude that   $b^2+y^2-c^2-z^2=0$. 
Moreover,  $a^2+b^2+c^2+d^2=a^2+c^2+x^2+z^2=0$ (Eq.(\ref{SymL-x2-a2-ax})\&(\ref{SymL-a2-b2-ab})) implies $b^2+d^2-x^2-z^2=0$. {}From   $a^2+b^2+c^2+d^2= a^2+b^2+x^2+y^2= 0$ (Eq.(\ref{SymL-x2-a2-ax})--(\ref{SymL-a2-c2-ac})), we obtain  $c^2+d^2-x^2-y^2=0$, and   $a^2+b^2-z^2-w^2=b^2+d^2-x^2-z^2=0$ also implies $a^2+x^2-d^2-w^2=0$. 
In summary, we obtain two conditions given below. 
\begin{equation}\label{a2+b2-relation}
a^2+b^2-z^2-w^2 = a^2+c^2-y^2-w^2 
= b^2+y^2-c^2-z^2 =0.
\end{equation}
\begin{equation}\label{b2+d2-relation}
b^2+d^2-x^2-z^2 = c^2+d^2-x^2-y^2 
= a^2+x^2-d^2-w^2 = 0.
\end{equation}

We can further draw Eq.(\ref{a2+d2-relation}) by the following argument.  Assuming $b^2+d^2+y^2+w^2= a^2+b^2 -z^2-w^2 =0$ (Eq.(\ref{SymL-a2-b2-ab})\&(\ref{a2+b2-relation})), $a^2=d^2$ leads to $x^2=w^2$. Since its opposite direction holds as well, we conclude that  $a^2=d^2$ iff $x^2=w^2$. In a similar way, we obtain three more equivalence relations: $a^2=z^2$ iff $b^2=w^2$, $a^2=y^2$ iff $c^2=w^2$, and $b^2=c^2$ iff $y^2=z^2$. 
Overall, we can establish the following conditions. 
\begin{equation}\label{a2+d2-relation}
a^2=d^2 \IFF x^2=w^2, \;\; a^2=z^2 \IFF b^2=w^2, \;\; a^2=y^2 \IFF c^2=w^2.
\end{equation}

Next, let us recall  $xy+zw=-(ab+cd)$ (Eq.(\ref{SymL-a2-b2-ab})) and  $xz+yw=-(ac+bd)$ (Eq.(\ref{SymL-a2-c2-ac})). Using these equations, 
we can transform $(x+w)(y+z)$ into $-(a+d)(b+c)$ as follow.   
\[
(x+w)(y+z) = (xy+zw) + (xz+yw) = -(ab+cd) -(ac+bd) = -(a+d)(b+c).
\]
Thus, we immediately obtain the following equation. 
\begin{equation}\label{xwyz-adbc}
(x_1x_2x_3) \;\;\; (a+d)(b+c) + (x+w)(y+z) =0.
\end{equation}
By permuting variable indices, we also obtain the two more equations shown below.
\begin{equation}\label{aybx-cwdz}
(x_2x_1x_3) \;\;\; (a+y)(b+x) + (c+w)(d+z) =0.
\end{equation}
\begin{equation}\label{azcx-bwdy}
(x_3x_2x_1) \;\;\; (a+z)(c+x) + (b+w)(d+y) =0.
\end{equation}

%%%%%
\subsubsection{Situation 2: $g_0=g_2\wedge g_1=0$}\label{sec:SymL-sit-1}
  
Let us assume that both  $g_{0,\varepsilon}=g_{2,\varepsilon}$ 
and $g_{1,\varepsilon}=0$  hold  for at least three distinct values  of  $\varepsilon$. In what follows, we will discuss these two conditions separately. 

\s
{\bf [Case: $g_0=g_2$]} 
Consider the first case where $g_{0,\varepsilon}=g_{2,\varepsilon}$ holds  for at least three distinct values  of  $\varepsilon$. Using the value $[g_{0,\varepsilon},g_{1,\varepsilon},g_{2,\varepsilon}]$ 
given in Section \ref{sec:first-symmetrization}, the equation $g_{0,\varepsilon}-g_{2,\varepsilon}=0$ is equivalent to  
\[
\varepsilon^2(x^2+y^2-z^2-w^2) + 2\varepsilon(ax+by-cz-dw) + a^2+b^2-c^2-d^2=0.
\] 
Since there are three distinct values $\varepsilon$ satisfying the above equation, it follows that  
\begin{equation}\label{123-x2+y2-z2-w2}
(x_1x_2x_3) \;\;\; x^2+y^2-z^2-w^2 = a^2+b^2-c^2-d^2 = ax+by-cz-dw = 0
\end{equation}
Permuting variable indices further produces the following five more conditions. 
\begin{equation}\label{132-x2+z2-y2-w2}
(x_1x_3x_2) \;\;\; x^2+z^2-y^2-w^2 = a^2+c^2-b^2-d^2 = ax+cz-by-dw = 0.
\end{equation}
\begin{equation}\label{213-c2+d2-z2-w2}
(x_2x_1x_3) \;\;\; c^2+d^2-z^2-w^2 = a^2+b^2-x^2-y^2 = ac+bd-xz-yw = 0.
\end{equation}
\begin{equation}\label{231-c2+z2-d2-w2}
(x_2x_3x_1) \;\;\; c^2+z^2-d^2-w^2 = a^2+x^2-b^2-y^2 = ac+xz-bd-yw = 0.
\end{equation}
\begin{equation}\label{321-b2+y2-d2-w2} 
(x_3x_2x_1) \;\;\; b^2+y^2-d^2-w^2 = a^2+x^2-c^2-z^2 = ab+xy-cd-zw  = 0.
\end{equation}
\begin{equation}\label{312-b2+d2-y2-w2} 
(x_3x_1x_2) \;\;\; b^2+d^2-y^2-w^2 = a^2+c^2-x^2-z^2 = ab+cd-xy-zw = 0.
\end{equation}

Now, we claim, by the argument that follows, that Eq.(\ref{123-x2+y2-z2-w2})--(\ref{132-x2+z2-y2-w2}) imply 
$a^2=d^2$, $b^2=c^2$, $x^2=w^2$, $y^2=z^2$, $ax=dw$, and $by=cz$. 
{}From  $x^2+y^2=z^2+w^2$ (Eq.(\ref{123-x2+y2-z2-w2})) and $x^2+z^2=y^2+w^2$   (Eq.(\ref{132-x2+z2-y2-w2})) follows $y^2=z^2$; thus  $x^2=w^2$ also holds.  
Similarly, using both  $a^2+b^2=c^2+d^2$ (Eq.(\ref{123-x2+y2-z2-w2})) and  $a^2+c^2=b^2+d^2$  (Eq.(\ref{132-x2+z2-y2-w2})), we obtain  $b^2=c^2$ and  $a^2=d^2$. 
In addition, we obtain  $by=cz$ and $ax=dw$ from  $ax+by=cz+dw$ (Eq.(\ref{123-x2+y2-z2-w2})) and $ax+cz=by+dw$ (Eq.(\ref{132-x2+z2-y2-w2})).   Therefore, the claim should be true.

Similarly, Eq.(\ref{213-c2+d2-z2-w2})--(\ref{231-c2+z2-d2-w2}) imply that $a^2=z^2$, $b^2=w^2$, $c^2=x^2$, $d^2=y^2$, $ab=zw$, and $cd=xy$. 
Moreover, from Eq.(\ref{321-b2+y2-d2-w2})--(\ref{312-b2+d2-y2-w2}), it follows that $a^2=y^2$, $b^2=x^2$, $c^2=w^2$, $d^2=z^2$, $ac=yw$, 
and $bd=xz$. 

\s
{\bf [Case: $g_1=0$]} 
Let us consider the second case where $g_{1,\varepsilon}=0$ holds for at least three distinct values of $\varepsilon$. This case can be rephrased as    
\[
\varepsilon^2(xz+yw) + \varepsilon(az+bw+cx+dy) + ac+bd =0.
\]
Since this equation has degree at most $2$ with respect to the parameter $\varepsilon$, we can conclude the following. 
\begin{equation}\label{123-az+bw+cx+dy}
(x_1x_2x_3) \;\;\; az+bw+cx+dy = ac+bd = xz+yw =0.
\end{equation}
When permuting variable indices further, the following five conditions can be also induced. 
\begin{equation}\label{132-ay+cw+bx+dz}
(x_1x_3x_2) \;\;\; ay+bx+cw+dz = ab+cd = xy+zw =0.
\end{equation}
\begin{equation}\label{213-az+bw+cx+dy} 
(x_2x_1x_3) \;\;\; az+bw+cx+dy = ax+by = cz+dw =0.
\end{equation}
\begin{equation}\label{231-ad+bc+xw+yz} 
(x_2x_3x_1) \;\;\; ad+bc+xw+yz = ab+xy = cd+zw =0.
\end{equation}
\begin{equation}\label{321-ad+bc+xw+yz}
(x_3x_2x_1) \;\;\; ad+bc+xw+yz = ac+xz = bd+yw =0.
\end{equation}
\begin{equation}\label{312-az+bx+cw+dz} 
(x_3x_1x_2) \;\;\; ay+bx+cw+dz = ax+cz = by+dw =0.
\end{equation}

%%%%%%%%%%%
%%%%%%%%%%%
\subsubsection{Situation 3: $g_0g_2=g_1^{2}$}\label{sec:SymL-sit-3}

Let us consider the third situation that 
$g_{0,\varepsilon}^{\sigma}g_{2,\varepsilon}^{\sigma} = (g_{1,\varepsilon}^{\sigma})^2$ holds for at least five distinct values of  $\varepsilon$. This situation can be expressed as a degree-$4$ polynomial equation in $\varepsilon$. First, we fix $\sigma=(x_1x_2x_3)$ and omit superscript ``$\sigma$.''  Using the values  $g_{0,\varepsilon},g_{1,\varepsilon},g_{2,\varepsilon}$ given in Section \ref{sec:first-symmetrization}, the terms $g_{0,\varepsilon}g_{2,\varepsilon}$ and $(g_{1,\varepsilon})^2$ can be calculated as follows. 
\begin{eqnarray*}
g_{0,\varepsilon}g_{2,\varepsilon} &=& (x^2+y^2)(z^2+w^2)\varepsilon^4 + 2[(ax+by)(z^2+w^2) + (cz+dw)(x^2+y^2)]\varepsilon^3 \\
&& + [(x^2+y^2)(c^2+d^2) + (z^2+w^2)(a^2+b^2) + 4(ax+by)(cz+dw)] \varepsilon^2 \\
&& + 2[(ax+by)(c^2+d^2) +(cz+dw)(a^2+b^2)]\varepsilon + (a^2+b^2)(c^2+d^2).
\end{eqnarray*}\vs{-8}
\begin{eqnarray*}
(g_{1,\varepsilon})^2 &=& (xz+yw)^2\varepsilon^4 + 2(xz+yw)(az+bw+cx+dy)\varepsilon^3 \\
&& + [2(xz+yw)(ac+bd) +(az+bw+cx+dy)^2]\varepsilon^2 \\
&& + 2(ac+bd)(az+bw+cx+dy)\varepsilon + (ac+bd)^2. \hs{32}
\end{eqnarray*}
Since $g_{0,\varepsilon}g_{2,\varepsilon} = (g_{1,\varepsilon})^2$ holds for at least five distinct values of $\varepsilon$, 
coefficients of each term $\varepsilon^{i}$ ($i\in\{0,1,2,3\}$) in both $g_{0,\varepsilon}g_{2,\varepsilon}$ and $(g_{1,\varepsilon})^{2}$ coincide. For instance,  two coefficients of the term $\varepsilon^0$ in  $g_{0,\varepsilon}g_{2,\varepsilon}$ and $(g_{1,\varepsilon})^2$ are equal, and thus we obtain $(a^2+b^2)(c^2+d^2) = (xz+yw)^2$, which is 
equivalent to $ad=bc$. By a similar calculation of every term 
$\varepsilon^i$, the equation 
$g_{0,\varepsilon}g_{2,\varepsilon}=(g_{1,\varepsilon})^2$ implies 
the following.
\begin{equation}\label{123-ad-bc} 
(x_1x_2x_3) \;\;\; ad-bc = xw-yz = aw-bz -cy+dx =0.
\end{equation}
By permuting variable indices, we also obtain additional two 
sets of equations.
\begin{equation}\label{213-ay-bx} 
(x_2x_1x_3) \;\;\; ay-bx = cw-dz = aw -bz+cy -dx =0.
\end{equation}
\begin{equation}\label{321-az-cx} 
(x_3x_2x_1) \;\;\; az-cx = bw-dy = aw+bz -cy -dx=0. 
\end{equation}

%%%%%%%%%%%%%%%%%%%%%%%%%%%%%%%%%%
%%%%%%%%%%%%%%%%%%%%%%%%%%%%%%%%%%
\subsection{Basic Properties of Sym(f)}\label{sec:property-h}

Finally, we will present a set of basic properties concerning the symmetrization $Sym(f)$, where  
 $f=(a,b,c,d,x,y,z,w)$ is any ternary signature. Here, 
we fix  $\sigma \in\{ (x_1x_2x_3),(x_1x_3x_2)\}$.  
Each element of $Sym(f)=[h_0,h_1,h_2,h_3]$ can be calculated as follows. 
\begin{equation}\label{h-123-zero}\vs{-1}
h_{0} = (a+d)[(a+d)^2 +3(bc-ad)]. 
\end{equation}
\begin{equation}\label{h-123-one}
h_{1} = (a^2+bc)x + (a+d)(bz+cy) + (bc+d^2)w. 
\end{equation}
\begin{equation}\label{h-123-two}
h_{2} = a(x^2+yz) + (bz+cy)(x+w) + d(yz+w^2). 
\end{equation}
\begin{equation}\label{h-123-three}
h_{3} = (x+w)[(x+w)^2 +3(yz-xw)].
\end{equation}

%%%%%%%%%%%%%%%%%%%%%%%%%%%%%%%%%%
%%%%%%%%%%%%%%%%%%%%%%%%%%%%%%%%%%
\section{Proof of Proposition \ref{SymL-non-degenerate}}\label{sec:proof-SymL}

As promised in Section \ref{sec:first-symmetrization}, we will present the proof of Proposition \ref{SymL-non-degenerate}.  Our argument that will follow shortly is quite elementary and it requires only a straightforward analysis of a set of low-degree polynomial equations listed in Section \ref{sec:SymL-sit-3}.  An underlying goal of the analysis is to prove that such a set of equations has no common solution. 

Let $f=(a,b,c,d,x,y,z,w)$ denote an arbitrary ternary signature and assume 
that $f\not\in\DUP$. 
In addition, we denote by $\sigma$ an arbitrary permutation in $S_3$ and we set 
$SymL(f_{\sigma})=[g_{0,\varepsilon}^{\sigma},g_{1,\varepsilon}^{\sigma},g_{2,\varepsilon}^{\sigma}]$. 
To lead to a contradiction, we first assume that $SymL(f_{\sigma})$ is degenerate. More precisely, we assume that 
$g_{0,\varepsilon}^{\sigma}g_{2,\varepsilon}^{\sigma} =(g_{1,\varepsilon}^{\sigma})^2$ for at least five distinct values of  $\varepsilon$. As discussed in Section \ref{sec:SymL-sit-3}, this assumption implies Eq.(\ref{123-ad-bc})--(\ref{321-az-cx}). We split the proof into three situations, depending on the choice of $\sigma$. Since the third situation, in which $\sigma=(x_3x_2x_1)$ or $(x_3x_1x_2)$, is essentially the same as the first two situations, for readability, we omit this  situation. 
At last,  we conveniently set $\sigma_1=(x_1x_2x_3)$, $\sigma_2=(x_2x_1x_3)$, 
and $\sigma_3=(x_3x_2x_1)$.

%%%%
\subsection{Situation: $\sigma=(x_1x_2x_3)$ or $(x_1x_3x_2)$}

Here, we consider only the situation where $\sigma=(x_1x_2x_3)$. 
For this $\sigma$, Eq.(\ref{123-ad-bc}) must hold; that is, $ad=bc$, $xw=yz$, and $aw+dx=bz+cy$. 
In what follows, we intend to show that $f$ belongs to $\DUP$ using Eq.(\ref{123-ad-bc}), because this clearly contradicts our assumption of $f\not\in\DUP$.  

%%%%
\s
{\bf [Case: $ax\neq0$]}
Initially, we set $\gamma=\frac{b}{a}$ and $\delta  = \frac{y}{x}$ . {}From $ad=bc$  and  $xw=yz$, 
we obtain  $b=\gamma a$, $d=\gamma c$, $y=\delta x$, and $w=\delta z$. At this point, $f$ is expressed as $(a,\gamma a,c,\gamma c,x,\delta x,z,\delta z)$. 
{}From  $aw+dx=bz+cy$, it easily follows that  (1') $(\delta-\gamma)(az-cx)=0$; thus,  either $\delta=\gamma$ or $az=cx$ holds. Now, we discuss these two cases separately. 
When $\delta=\gamma$, $f_{\sigma_3}$ equals $[1,\gamma](x_3)\cdot (a,x,c,z,a,x,c,z)$; thus, $f$ belongs to $\DUP$.   
If $\delta\neq\gamma$, then (1') implies $az=cx$. Next, let $\theta=\frac{c}{a}$, implying $c=\theta a$ and $z=\theta x$ from $az=cx$. Since $d=\gamma c=\theta\gamma a$ and $w=\delta z=\theta\delta x$,  $f_{\sigma_2}$ becomes $[1,\theta](x_2)\cdot (a,\gamma a,x,\delta x,a,\gamma a,x,\delta x)$. This proves $f$ to be in $\DUP$. 

%%%%%%
\s
{\bf [Case: $ax=0$]} 
Since this case is more involved, we split it into three subcases. 

%%%
\s
{[Subcase: $a=x=0$]} 
{}From $ad=bc$, we immediately obtain (3') 
$bc=0$, which implies either $b=0$ or $c=0$. Similarly, 
$xw=yz$ implies (4') $yz=0$, which means either $y=0$ or $z=0$.
Firstly, we assume that $b=y=0$. For the permutation $\sigma_2$, this assumption makes $f_{\sigma_2}$ equal $(0,0,0,0,c,d,z,w)$,  and thus $f$  belongs to $\DUP$. 
Secondly, we assume that $b=0\wedge y\neq0$. {}From (4') follows $z=0$.   
By $aw+dx=bz+cy$, we obtain $cy=0$, which yields $c=0$. 
For  $\sigma_3$, $f_{\sigma_3}$ becomes $(0,0,0,0,0,y,d,w)$, again in 
$\DUP$. 
Thirdly, we consider the case where $b\neq0\wedge y=0$. 
Using (3'), we deduce $c=0$. 
{}From $aw+dx=bz+cy$, we also obtain $bz=0$, implying $z=0$. 
Since $f_{\sigma_3} = (0,0,0,0,b,y,d,w)$, obviously $f$ belongs to $\DUP$. 
Finally, we discuss the case where  $b\neq0\wedge y\neq0$. 
The two equations (3') and (4') indicate that $c=z=0$. Moreover, 
we  obtain $f_{\sigma_3} = [1,\gamma](x_3)\cdot (0,x,0,0,0,x,0,0)$, 
making $f$ fall into $\DUP$.   In all the cases, contradictions follow. 

%%%
\s
{[Subcase: $a=0\wedge x\neq0$]} 
{}From $ad=bc$, we have (5') 
$bc=0$, which implies either $b=0$ or $c=0$. 
setting $\gamma=\frac{y}{x}$, we obtain $y=\gamma x$ and $w=\gamma z$ from  $xw=yz$. 
Now, we begin with examining the case of $b=0$. Since $aw+dx=bz+cy$,  it holds that $x(d-\gamma c)=0$; thus, $d=\gamma c$ follows. This concludes that $f_{\sigma_3} = [1,\gamma](x_3)\cdot (0,x,c,z,0,x,c,z)$. Obviously, this makes $f$ fall into $\DUP$. 
Next, let us consider the case of $b\neq0$. {}From (5') follows $c=0$. 
We also obtain $dx=bz$ from $aw+dx=bz+cy$. Letting $\delta =\frac{z}{x}$, we further obtain $z=\delta x$ and $d=\delta b$ from $dx=bz$. Note that 
$xw=yz$ implies $\gamma x(z-\delta x)=0$, yielding $z=\delta x$.  It thus  holds that $w=\gamma z= \delta \gamma z$. For the permutation  $\sigma_2$, $f_{\sigma_2}$ can be written in the form $[1,\delta](x_2)\cdot (0,b,x,\gamma x,0,b,x,\gamma x)$, which is clearly in $\DUP$. 

%%%
\s
{[Subcase: $a\neq0\wedge x=0$]} 
Because this subcase is essentially the same as the previous subcase $a=0\wedge x\neq0$, we omit this subcase for readability. 

%%%
\subsection{Situation: $\sigma=(x_2x_1x_3)$ or $(x_2x_3x_1)$}

In this subsection, we assume that $\sigma=(x_2x_1x_3)$. Notice that our assumption $g_0^{\sigma}g_2^{\sigma}=(g_1^{\sigma})^2$ 
ensures Eq.(\ref{213-ay-bx}); that is, $ay=bx$, $cw=dz$, 
and $aw+cy=bz+dx$.  With these equations, we wish to lead to a contradiction. 

%%%
\s
{\bf [Case: $az\neq0$]}
Using $ay=bx$ and $cw=dz$, we conveniently set $\gamma =\frac{b}{a}$ and $\delta =\frac{w}{z}$; thus, $\gamma$ and $\delta$ satisfy that 
$b=\gamma a$, $d=\delta c$, $y=\gamma x$, and $w=\delta z$. 
{}From $aw+cy=bz+dx$, it follows that (1') $(\delta-\gamma)(az-cx)=0$. 
Hereafter, let us consider two subcases: $\delta=\gamma$ and $\delta\neq\gamma$.
First, we assume that $\delta=\gamma$. Obviously, $f_{\sigma_3}$ equals  $[1,\gamma](x_3)\cdot (a,x,c,z,a,x,c,z)$, and thus $f$ belongs to $\DUP$. 
Next, we assume that $\delta\neq\gamma$. 
Clearly, (1') implies $az=cx$. Note that $c\neq0$ because of $az\neq0$. Now, let $\theta =\frac{c}{a}$; thus, $c=\theta a$ and $z=\theta x$ hold. Using this $\theta$,  $f$ can be expressed as $[a,x](x_1)\cdot (1,\gamma,\theta,\theta\delta,1,\gamma,\theta,\theta\delta)$, which is clearly in $\DUP$. 

%%%
\s
{\bf [Case: $az=0$]}
To handle this case, we will consider three subcases.

\s
{[Subcase: $a=z=0$]} 
By $ay=bx$, we obtain (2') $bx=0$, implying either $x=0$ or $b=0$.  Similarly,  
$cw=dz$ implies (3') $cw=0$; thus, either $c=0$ or $w=0$ holds. 
Firstly, we assume that $c=x=0$. This implies that $f_{\sigma_3}$ is of 
the form  $(0,0,0,0,b,y,d,w)$, which forces $f$ to be in $\DUP$. 
Secondly, we assume that $c=0\wedge x\neq0$. {}From (2') follows $b=0$. 
Since $x\neq0$, we obtain $d=0$ from $aw+cy=bz+dx$. Therefore, it holds that $f = (0,0,0,0,x,y,z,\gamma z)$, proving that $f\in\DUP$. 
Thirdly, we assume that $c\neq0\wedge x=0$. Note that $w=0$ by (3'). 
The equation $aw+cy=bz+cy$ yields $c=0$; hence, $f_{\sigma_3}$ becomes $(0,0,0,0,b,y,d,0)\in\DUP$. 
The remaining case is that $c\neq0\wedge x\neq0$. {}From (2')\&(3') follows $b=w=0$.  
The equation $aw+cy=bz+dx$ is thus equivalent to $dx=cy$. 
If we set $\gamma =\frac{c}{d}$, then we obtain $c=\gamma d$ and $x=\gamma y$ from $dx=cy$, and thus $f_{\sigma_3}$ can be written as $[\gamma,1](x_3)\cdot (0,y,d,0,0,y,d,0)$. Clearly, $f$ belongs to $\DUP$. 

\s
{[Subcase: $a=0\wedge z\neq0$]}  
{}From $ay=bx$, we obtain (4') $bx=0$. Letting  $\gamma =\frac{w}{z}$, 
we obtain $w=\gamma z$ and $d=\gamma c$ from  $cw=dz$. 
Firstly, we assume that $b=c=0$; thus, $d=\gamma c=0$. We immediately obtain $f=(0,0,0,0,x,y,z,\gamma z)\in\DUP$. 
Secondly, assume that $b=0\wedge c\neq0$. Since $aw+cy=bz+dx$ is equivalent to $c(y-\gamma x)=0$, $c\neq0$ implies $y=\gamma x$. Thus, $f_{\sigma_3}$ becomes $[1,\gamma](x_3)\cdot (0,x,c,z,0,x,c,z)$. This implies that $f\in\DUP$. 
Finally, let us handle the case of $b\neq0$. Here, we obtain $x=0$ by (4'). Using $aw+cy=bz+dx$, we also obtain $cy=bz$. 
Now, let $\delta=\frac{y}{b}$ since $b\neq0$. With this $\delta$, 
it follows that $y=\delta b$ and $z=\delta c$. Obviously, $f$ equals $[1,\delta](x_1)\cdot (0,b,c,\gamma c,0,b,c,\gamma c)$. Obviously, $f$ 
belongs to $\DUP$. 

\s
{[Subcase: $a\neq0\wedge z=0$]}   
Note that (5') $cw=0$ is obtained from $cw=dz$. Now, let $\gamma=\frac{b}{a}$; thus, $ay=bx$ implies both  $b=\gamma a$ and $y=\gamma x$. 
First of all, we consider the case where $c=0$. Note that $aw+cy=bz+dx$ immediately  leads to  $aw=dx$.  Conveniently, we set $\delta =\frac{d}{a}$. 
It then follows  from $aw=dx$  that $d=\delta a$ and $w=\delta x$. 
Hence, we obtain $f=[a,x](x_1)\cdot (1,\gamma,0,\delta,1,\gamma,0,\delta)\in\DUP$.  
What still remains is the case where $c\neq0$. By (5'), we immediately  obtain $w=0$. Moreover,  $aw+cy=bz+dx$ implies $x(d-\gamma c)=0$. If $x\neq0$, then $d=\gamma c$ also follows. In summary, $f_{\sigma_3}$ must have the form  $[1,\gamma](x_3)\cdot (a,x,c,0,a,x,c,0)$, proving that $f\in\DUP$.  
On the contrary, if $x=0$, then we immediately obtain $f=(a,\gamma a,c,\gamma c,0,0,0,0)$. This makes $f$ fall into $\DUP$, as requested.

%%%%%%%%%%%%%%%%%%%%%%%%%%%%%%%%%%%%%%%%
%%%%%%%%%%%%%%%%%%%%%%%%%%%%%%%%%%%%%%%%
\section{Proof of Proposition \ref{prop-type-I}}\label{sec:type-I} 

Here, we will prove Proposition \ref{prop-type-I}.  In this proof, 
we assume that $f$ is of the form $(a,b,c,d,x,y,z,w)$ and let $SymL(f_{\sigma})_{\varepsilon} =[g_{0,\varepsilon}^{\sigma},g_{1,\varepsilon}^{\sigma},g_{2,\varepsilon}^{\sigma}]$ for each permutation $\sigma$ and each value $\varepsilon$. 
Furthermore, we assume that $f$ is $SIG_1$-legal; that is, the signature  $Sym(f)=[h_0,h_1,h_2,h_3]$ satisfies $h_0+h_2=h_1+h_3=0$ and $h_0\neq \xi h_1$ for any constant $\xi\in\{\pm i\}$. Toward a contradiction, we further 
assume that, for every permutation $\sigma$ and almost all values  of $\varepsilon$, both $g_{0,\varepsilon}^{\sigma} = g_{2,\varepsilon}^{\sigma}$ and $g_{1,\varepsilon}^{\sigma} =0$ hold.
Notice that this assumption implies Eq.(\ref{123-x2+y2-z2-w2})--(\ref{312-az+bx+cw+dz}).  
As shown in Section \ref{sec:SymL-sit-1}, Eq.(\ref{123-x2+y2-z2-w2})--(\ref{312-b2+d2-y2-w2}) imply that  $a^2=d^2=y^2=z^2$ and $b^2=c^2=x^2=w^2$. 
{}From these equations, we can set $z=e_1a$, $y=e_2a$, $d=e_3a$, $b=e_4w$, $c=e_5w$, and $x=e_6w$ using appropriate constants $e_i\in\{\pm1\}$. 
Eq.(\ref{123-x2+y2-z2-w2})--(\ref{312-b2+d2-y2-w2}) also provide  with the following equations: $ax=dw$, $by=cz$, $ac=yw$, $bd=xz$, $ab=zw$, and $cd=xy$. Now, we split our proof into two cases, depending on whether $aw=0$ or not, and we try to argue  that each case indeed leads to a contradiction. 

%%%
\s
{\bf [Case: $aw\neq0$]}   
{}From $ax=dw$, we obtain $e_6 aw= e_3 aw$, or equivalently $(e_6-e_3)aw=0$; thus, $e_3=e_6$ must hold since $aw\neq0$. Similarly, from $ac=yw$ and $ab=zw$, it follows that $e_1=e_4$ and $e_2=e_5$, respectively.  
Moreover, $ac+bd=0$ (Eq.(\ref{123-az+bw+cx+dy})) implies 
 $(e_2+e_1e_3)aw=0$, which yields $e_3=-e_1e_2$. 
Similarly, from $az+bw+cx+dy=0$  (Eq.(\ref{123-az+bw+cx+dy})) follows  $2e_1(a^2+w^2)=0$; hence, we obtain $a^2+w^2=0$. Let us assume that  $w=\gamma a$ for an appropriate constant $\gamma\in\{\pm i\}$. At present,  $f$ equals  
$(a,e_1\gamma a,e_2\gamma a,-e_1e_2 a,-e_1e_2\gamma a,e_2 a,e_1a,\gamma a)$. 
Next, let us consider the values $h_0$ and $h_1$.   
Making a direct calculation of Eq.(\ref{h-123-zero})--(\ref{h-123-one}), we obtain $h_0 = (1-e_1e_2)^3a^3$ and $h_1 = \gamma (1-e_1e_2)(3-e_1e_2)a^3$. When $e_1e_2=1$, it clearly follows that  
$h_0 = h_1=0$, a contradiction against $h_0\neq\xi h_1$ for every $\xi\in\{\pm i\}$; therefore, $e_1e_2$ must be $-1$, or equivalently $e_2=-e_1$. Using this result, we further simplify $h_0$ 
and $h_1$ as $h_0 =8a^3$ and $h_1= 8\gamma a^3$. These values imply  $h_1=\gamma h_0$. Since $\gamma\in\{\pm i\}$, this equality leads to a contradiction, as requested. 

%%%
\s
{\bf [Case: $aw=0$]} 
First, note that both $a=0$ and $w=0$ never happen simultaneously because, otherwise, $f$ becomes an all-zero function, and thus $f$ belongs to $\DUP$, a contradiction. When $a=0$, $f$ equals $(0,e_1w,e_2w,0,e_3w,0,0,w)$. 
{}From  $az+bw+cx+dy=0$ (Eq.(\ref{123-az+bw+cx+dy})) follows  $(e_1+e_2e_3)w^2=0$, which implies $e_3=-e_1e_2$. Hence, we obtain 
$f=w\cdot (0,e_1,e_2,0,-e_1e_2,0,0,1)$. 
By Eq.(\ref{h-123-zero})--(\ref{h-123-one}), it follows that 
$h_1 = e_1e_2-1$ and $h_3=2+e_1e_2$; as a result, $h_1+h_3= 1+2e_1e_2\neq 0$ follows. 
This consequence clearly contradicts the assumption that $h_1+h_3=0$.  
Similarly, when $w=0$, since $a\neq0$, $f$ equals $(a,0,0,e_3 a,0,e_2 a,e_1 a,0)$. Using  $az+bw+cx+dy=0$ (Eq.(\ref{123-az+bw+cx+dy})), we obtain $(e_1+e_2e_3)a^2=0$, implying $e_3=-e_1e_2$. This makes $f$ equal   
$a\cdot (1,0,0,-e_1e_2,0,e_2,e_1,0)$.  Since $h_0 =2+e_1e_2$ and $h_2=e_1e_2-1$, we then conclude that $h_0+h_2 =1+2e_1e_2\neq 0$, a contradiction against $h_0+h_2=0$. 

%%%%%%%%%%%%%%%%%%%%%
%%%%%%%%%%%%%%%%%%%%%
\section{Proof of Proposition \ref{prop-type-II}}\label{sec:type-II}

Assume that $f=(a,b,c,d,x,y,z,w)\in\DUP$ is $SIG_1$-legal  and let $SymL(f_{\sigma})=[g^{\sigma}_0,g^{\sigma}_1,g^{\sigma}_2]$ for any permutation $\sigma\in S_3$.  
Here, we aim at proving Proposition \ref{prop-type-II} by contradiction. To achieve this goal, we first assume that, together with Statement (*),  there are two distinct permutations $\sigma$ and $\tau$ for which (i)  $g_0^{\sigma}=g_2^{\sigma} \wedge g_1^{\sigma}=0$ and (ii) $g_0^{\tau}+g_2^{\tau}=0$ hold. 
{}From this assumption, we want to lead to a contradiction. 
As shown in Section \ref{sec:outline-proof}, Statement (*) implies that,  for every $\sigma'\in S_3$, the two conditions (i) and (ii) are not satisfied simultaneously. 
Since $f$ is $SIG_1$-legal, it also holds that $h_0+h_2=h_1+h_3=0$ and $h_0^2+h_1^2\neq0$, provided that $Sym(f)=[h_0,h_1,h_2,h_3]$. Notice that $h_2^2+h_3^2\neq0$ also holds. 

%%%%%
\subsection{Situation: $\sigma=(x_1x_2x_3)$ and $\tau=(x_2x_1x_3)$}\label{sec:situation-01}

For our choice of $\sigma$ and $\tau$, we assume that $g_0^{\sigma}=g_2^{\sigma}\wedge g_1^{\sigma}=0$  and  $g_0^{\tau}+g_2^{\tau}=0$. Letting $\sigma'=(x_1x_3x_2)$,  
we first claim that $g_0^{\sigma'}+g_2^{\sigma'}\neq0$ holds. Meanwhile, assume  otherwise. Because of the close similarity between $\sigma$ and $\sigma'$,  as seen in Section \ref{sec:SymL-sit-2}, $g_0^{\sigma}+g_2^{\sigma}=0$ should hold for $\sigma$. This indicates the condition $g_0^{\sigma}=g_2^{\sigma}\wedge g_1^{\sigma}=0$ to fail; thus, we obtain a contradiction. Therefore, since $g_0^{\sigma'}+g_2^{\sigma'}\neq0$, we conclude that   $g_0^{\sigma'}=g_2^{\sigma'}\wedge g_1^{\sigma'}=0$. 

{}From our assumption, Eq.(\ref{123-x2+y2-z2-w2})--(\ref{132-x2+z2-y2-w2}) and Eq.(\ref{123-az+bw+cx+dy})--(\ref{132-ay+cw+bx+dz}) hold  respectively 
for $\sigma$ and $\sigma'$, and Eq.(\ref{SymL-a2-c2-ac}) 
holds for $\tau$.  
As Section \ref{sec:SymL-sit-1} showed,  
Eq.(\ref{123-x2+y2-z2-w2})--(\ref{132-x2+z2-y2-w2}) produce the following six simple equations:  $a^2=d^2$, $b^2=c^2$, $x^2=w^2$, 
$y^2=z^2$, (1') $ax=dw$, and (2') $by=cz$. 
Since $a^2=d^2$, we assume that $d=e_1a$ for a certain constant 
$e_1\in\{\pm 1\}$. Similarly, using three relations, $b^2=c^2$, $x^2=w^2$, and $y^2=z^2$, it is possible to  
set $c=e_2b$, $w=e_3x$ and $z=e_4y$ using appropriate constants 
$e_2,e_3,e_4\in\{\pm 1\}$. Let us examine the following two cases. 

%%%
\s
{\bf [Case: $a=0$]}
We split this case into two subcases, depending on whether $x=0$ or not.  The first subcase is rather simple. 
Note that $d=0$ holds because $d=e_1a$.    

\s
[Subcase: $x=0$]  
Clearly,  $w=e_3x=0$ holds. We also obtain $b^2+y^2=0$ because  $a^2+b^2+x^2+y^2=0$ (Eq.(\ref{SymL-a2-c2-ac})) holds.  
{}From this equation, we conclude that $b=0$ iff $y=0$. In particular, if $by=0$, then $f$ is composed of all zeros, forcing $f$ fall into $\DUP$, a contradiction. It thus suffices to assume that $by\neq0$. By (2'), we obtain  $(1-e_2e_4)by=0$; thus, $e_2e_4=1$, or equivalently, $e_4=e_2$ holds. A vigorous calculation of Eq.(\ref{h-123-zero})--(\ref{h-123-one})  shows that $h_0 = h_1=0$. This is a contradiction against our requirement that $h_1\neq \xi h_0$ for any $\xi\in\{\pm i\}$. 

\s
[Subcase: $x\neq0$] 
First, we want to claim that $b\neq0$. Assume otherwise.  Since  
$b=0$ implies $c=e_2b=0$, it follows that $a=b=c=d=0$. We therefore conclude that $f$ is in $\DUP$. This is a clear contradiction; therefore,  
$b\neq0$ should hold. Using Eq.(\ref{h-123-zero})--(\ref{h-123-two}), we obtain  $h_0=0$, $h_1= (1+e_3)e_2b^2x$, and $h_2=(e_2+e_4)(1+e_3)bxy$.    Since $h_0\neq \xi h_1$ for any $\xi\in\{\pm i\}$,  $h_1\neq0$ must hold; thus, $e_3\neq-1$, or equivalently $e_3=1$ follows.  Therefore, $h_2$ is of the form $h_2 = 2(e_2+e_4)bxy$. 
First, let us consider the case where  $y\neq0$. Since $h_0+h_2=0$, we obtain  
$2(e_2+e_4)bxy=0$, which yields $e_4=-e_2$. By contrast, from  (2') follows  $(1-e_2e_4)by=0$. We thus conclude 
that $e_2e_4=1$, or equivalently $e_4=e_2$. This is obviously a contradiction. 
Next, consider the case where $y=0$. We can simplify  $az+bw+cx+dy=0$ (Eq.(\ref{123-az+bw+cx+dy})) to  $(1+e_2)bx=0$; thus, $e_2=-1$ follows.  Similarly, from $a^2+b^2+x^2+y^2=0$ (Eq.(\ref{SymL-a2-c2-ac})), we deduce  (3') $b^2+x^2=0$.  The values $h_1$ and $h_3$ take  $h_1 = -2b^2x$ and 
$h_3 = 2x^3$ by Eq.(\ref{h-123-one})\&(\ref{h-123-three}). The requirement $h_1+h_3=0$ implies $2x(x^2-b^2)=0$; thus, $x^2=b^2$ follows. By combining this equation with (3'), we conclude that $x=b=0$. This is obviously a contradiction against $b\neq0$. 

%%%
\s
{\bf [Case: $a\neq0$]}
This case is more involved. 
Similar to the previous case, we split this case into two subcases.

\s
[Subcase: $x=0$] Note that $w=e_3x=0$. 

(i) We start with assuming $by\neq0$. Using  $az+bw+cx+dy=0$ (Eq.(\ref{123-az+bw+cx+dy})), we deduce $(e_1+e_4)ay=0$, from which $e_4=-e_1$ follows. Similarly, from  $ac+bd=0$ (Eq.(\ref{123-az+bw+cx+dy})), we obtain  $(e_1+e_2)ay=0$ and then $e_2=-e_1$. Now, let us determine the value $e_1$  using Eq.(\ref{h-123-zero})--(\ref{h-123-three}). Since $h_3=0$ and $h_1=-2e_1(1+e_1)aby$ by a direct calculation, the requirement $h_1+h_3=0$  leads to  $e_1(1+e_1)aby=0$, further implying $e_1=-1$. At present, $f$ has the form $(a,b,b,-a,0,y,y,0)$. Since the value $h_2$ becomes $0$, we  therefore conclude that $h_2=h_3=0$, contradicting the requirement $h_1^2+h_3^2\neq0$.  

(ii) Next, we assume that $b=y=0$. 
Since  $a^2+b^2+x^2+y^2=0$ (Eq.(\ref{SymL-a2-c2-ac})), we immediately 
obtain $a=0$. This contradicts our assumption $a\neq0$. 

(iii) Let us assume that $b=0\wedge y\neq0$. Note that $c=e_2b=0$. 
The equation $az+bw+cx+dy=0$ (Eq.(\ref{123-az+bw+cx+dy})) implies  $(e_1+e_4)ay=0$, which yields $e_4=-e_1$. It thus follows 
by Eq.(\ref{h-123-two})--(\ref{h-123-three}) that $h_2 = -(1+e_1)ay^2$ and $h_3=0$.  Here, we claim that $e_1\neq-1$ because, otherwise,  
we obtain $h_2=h_3=0$, a contradiction. Since $e_1\neq-1$,  $e_1=1$ must hold. The value $h_2$ then becomes $h_2=-2ay^2$. Since $h_0=2a^3$, the requirement $h_0+h_2=0$ implies $2a(a^2-y^2)=0$, which is equivalent to (4') $a^2=y^2$. Next, we use   $a^2+b^2+x^2+y^2=0$ (Eq.(\ref{SymL-a2-c2-ac})) 
to obtain $a^2+y^2=0$. {}From (4'), we conclude that $a=y=0$. This is a clear contradiction. 

(iv) Finally, we assume that $b\neq0\wedge y=0$. Obviously,  $z=e_4y=0$ holds. We then obtain $(e_1+e_2)ab=0$ from  $ac+bd=0$ (Eq.(\ref{123-az+bw+cx+dy})).  This yields $e_2=-e_1$. By a simple calculation, we obtain  $h_2=h_3=0$, from which a contradiction follows. 

\s
[Subcase: $x\neq0$] 
We use (1') to obtain $(1-e_2e_3)ax=0$, from which we conclude that $e_2e_3=1$, or equivalently $e_2=e_3$. 

(i) Assume that $by\neq0$. It follows from (2') that  $(1-e_2e_4)by=0$;, thus,  $e_4=e_2$ holds. Because of  $xz+yw=0$ (Eq.(\ref{123-az+bw+cx+dy})), we conclude that $2e_2xy=0$. This implies that either $x=0$ or $y=0$, and it clearly contradicts our current assumption. 

(ii) Assuming that $b=y=0$,  we can simplify  $ax+by-cz-dw=0$ (Eq.(\ref{123-x2+y2-z2-w2})) to $(1-e_1e_2)ax=0$, further implying $e_2=e_1$. Now, we show that $e_1=1$. For this purpose, we first calculate $h_2$ and $h_3$ as  $h_2=(1+e_1)ax^2$ and $h_3 = (1+e_1)(2-e_1)x^3$. If $e_1=-1$, then $h_2=h_3=0$ follows. Since this is a contradiction, it must hold that $e_1\neq-1$, or equivalently $e_1=1$, as requested. 
The equation $a^2+b^2+x^2+y^2=0$ (Eq.(\ref{SymL-a2-c2-ac})) then becomes   $a^2+x^2=0$. Now, we set $x=\gamma a$ using an appropriate constant $\gamma\in\{\pm i\}$. It is easy to show that $h_0 = 2a^3$ and $h_1 =2\gamma a^3$; thus,  $h_1 =\gamma h_0$ holds, a contradiction. 

(iii) Next, we assume that $b=0\wedge y\neq0$. It follows from   $xz+yw=0$ (Eq.(\ref{123-az+bw+cx+dy}))  that $(e_2+e_4)xy=0$; thus, $e_4=-e_2$ holds.  By  $az+bw+cx+dy=0$ (Eq.(\ref{123-az+bw+cx+dy})), we also obtain $(e_1-e_2)ay=0$, from which $e_2=e_1$ follows.  
Now, we want to claim that $e_1=1$. This is shown as follows. Note that $h_0 = (1+e_1)(2-e_1)a^3$ and $h_1 = (1+e_1)a^2x$. If $e_1=-1$, then we immediately obtain $h_0=h_1=0$, contradicting the requirement $h_0^2 +h_1^2\neq0$. Since $e_1\in\{\pm 1\}$, $e_1=1$ follows.
Therefore, it holds that $h_0 = 2a^3$ and $h_2 = 2a(x^2-y^2)$. Since $h_0+h_2=0$, we obtain $2a(a^2+x^2-y^2)=0$; thus, $a^2+x^2-y^2=0$ follows.  Now,  $a^2+b^2+x^2+y^2=0$ (Eq.(\ref{SymL-a2-c2-ac})) becomes  $a^2+x^2+y^2=0$. These two equations clearly imply $y=0$, a contradiction against $y\neq0$. 

(iv) The remaining case is that $b\neq0\wedge y=0$. By  $ac+bd=0$ (Eq.(\ref{123-az+bw+cx+dy})), it follows that $(e_1+e_2)ab=0$; thus, we have $e_2=-e_1$. Moreover, from  $ax+by-cz-dw=0$ (Eq.(\ref{123-x2+y2-z2-w2})) follows  $(1+e_1)ax=0$, yielding $e_1=-1$. The equation  $az+bw+cx+dy=0$ (Eq.(\ref{123-az+bw+cx+dy})) therefore becomes equivalent to $bx=0$, leading to 
a contradiction against $b\neq0$ and $x\neq0$. 

%%%%%
\subsection{Situation: $\sigma=(x_2x_1x_3)$ and $\tau=(x_1x_2x_3)$}

Let us assume that  $g_0^{\sigma}=g_2^{\sigma}\wedge g_1^{\sigma}=0$ for  $\sigma=(x_2x_1x_3)$ and $g_0^{\tau}+g_2^{\tau}=0$ for $\tau=(x_1x_2x_3)$.   For brevity, we set $\sigma'=(x_2x_3x_1)$ and $\sigma_3=(x_3x_2x_1)$.  
Following a similar argument given in Section \ref{sec:situation-01}, we can conclude another condition that $g_0^{\sigma'}=g_2^{\sigma'}\wedge g_1^{\sigma'}=0$ for 
$\sigma'$. Notice that our assumption guarantees Eq.(\ref{213-c2+d2-z2-w2})--(\ref{231-c2+z2-d2-w2}) and 
 Eq.(\ref{213-az+bw+cx+dy})--(\ref{231-ad+bc+xw+yz}) for $\sigma$ and $\sigma'$, respectively, and also Eq.(\ref{SymL-x2-a2-ax}) for $\tau$. 
As discussed in Section \ref{sec:SymL-sit-1}, Eq.(\ref{213-c2+d2-z2-w2})--(\ref{231-c2+z2-d2-w2}) implies the following equations: $a^2=z^2$, $b^2=w^2$, $c^2=x^2$, $d^2=y^2$, (1') $ab=zw$, and (2') $cd=xy$. 
With appropriate constants $e_1,e_2,e_3,e_4\in\{\pm 1\}$,  we can set   $z=e_1a$, $w=e_2b$, $x=e_3c$, and $y=e_4d$. 

%%%
\s
{\bf [Case: $a=0$]}
First, we obtain $z=0$ from $z=e_1a$. In what follows, we will discuss two subcases. 

\s
{[Subcase: $b=0$]} 
Since  $b=0$,  $w=0$ follows. Now, we claim that $e_4=e_3$. To show this claim, assume that $e_4\neq e_3$, or equivalently $e_3e_4\neq1$. {}From (2'), we obtain  $(1-e_3e_4)cd=0$, which means $cd=0$. The equation     $a^2+b^2+c^2+d^2=0$ (Eq.(\ref{SymL-x2-a2-ax}))  is then equivalent to  $c^2+d^2=0$. Moreover, $cd=0$ and $c^2+d^2=0$ imply $c=d=0$. Hence, $f$ is composed of all zeros, and thus it is in 
$\DUP$, a contradiction. As a consequence, we conclude that $e_4=e_3$. 
For $\sigma_3$, $f_{\sigma_3}$ becomes $(0,e_3c,c,0,0,e_3,d,d,0)$, which is written as $[c,d](x_3)\cdot (0,e_3,1,0,0,e_3,1,0)$. Thus, $f$ belongs to $\DUP$.  

\s
{[Subcase: $b\neq0$]}
There are two situations to consider separately. 

(i) Let us consider the case where $d=0$.  Note that $b^2=c^2$ follows  from  $a^2+x^2=b^2+y^2$ (Eq.(\ref{231-c2+z2-d2-w2})). Moreover, from  $a^2+b^2+c^2+d^2=0$ (Eq.(\ref{SymL-x2-a2-ax})), we conclude that 
 $b^2+c^2=0$. These two equations immediately yield $b=c=0$, which  contradicts $b\neq0$.  

(ii) Next, consider the case where $d\neq0$. Note that $(e_2+e_4)bd=0$ holds since $ax+by+cz+dw=0$ (Eq.(\ref{SymL-x2-a2-ax})); thus, $e_4=-e_2$ holds. 
Firstly, we assume that $c\neq0$. It follows by (2') that $(1+e_2e_3)cd=0$; hence, we obtain $e_3=-e_2$. {}From  $a^2+b^2=x^2+y^2$ (Eq.(\ref{213-c2+d2-z2-w2})) and  
$a^2+b^2+c^2+d^2=0$ (Eq.(\ref{SymL-x2-a2-ax})), it also follows that $b^2-c^2-d^2=0$ and $b^2+c^2+d^2=0$, respectively. Combining these two equations, we lead to $2b^2=0$, a contradiction. 
Secondly, we assume that $c=0$.  Note that $x=z=0$. The equation $a^2+b^2+c^2+d^2=0$ (Eq.(\ref{SymL-x2-a2-ax})) implies $b^2+d^2=0$. Furthermore, from $c^2+d^2-z^2-w^2=0$ (Eq.(\ref{213-c2+d2-z2-w2})), we obtain $b^2=d^2$. Combining these two consequences, we conclude that $b=d=0$. Hence, $f$ is an all-zero function and belongs to $\DUP$, a contradiction. 

%%%
\s
{\bf [Case: $a\neq0$]}
Here, we will consider two subcases. 

\s
{[Subcase: $bd\neq0$]} 
{}From (1'), we have $(1-e_1e_2)ab=0$. Thus, we have $e_2=e_1$. 

(i) Assume that $c=0$; thus, $x=e_3c=0$ holds. We deduce from  
$ax+by+cz+dw=0$ (Eq.(\ref{SymL-x2-a2-ax})) the equation $(e_1+e_4)bd=0$,  which leads to $e_4=-e_1$. Use  
$ac+bd=xz+yw$ (Eq.(\ref{213-c2+d2-z2-w2})), and we then obtain $2bd=0$; however, this is a contradiction against our assumption. 

(ii) Next, assume that $c\neq0$. The equation (2') implies $(1-e_3e_4)cd=0$, yielding $e_4=e_3$. {}From  
$ax+by+cz+dw=0$ (Eq.(\ref{SymL-x2-a2-ax})), it follows that (3') 
$(e_1+e_3)(ac+bd)=0$. This implies either $e_1+e_3=0$ or $ac+bd=0$. Here, we will examine these two possibilities. 

(a) Assume that $e_1+e_3=0$, or equivalently $e_3=-e_1$. {}From  $c^2+d^2=z^2+w^2$ (Eq.(\ref{213-c2+d2-z2-w2})), we obtain 
 $a^2+b^2 - c^2-d^2=0$. Combining this equation with  $a^2+b^2+c^2+d^2=0$ (Eq.(\ref{SymL-x2-a2-ax})), we also obtain $a^2+b^2=0$, from which  $c^2+d^2=0$ immediately follows. Now, we set $b=\gamma a$ and $d=\delta c$ for two constants $\gamma,\delta\in\{\pm i\}$. {}From  
$ac+bd=xz+yw$  (Eq.(\ref{213-c2+d2-z2-w2})), it follows 
that $2(1+\delta\gamma)ac=0$. Since $ac\neq0$, we conclude that 
$\gamma\delta=1$, or equivalently $\delta=\gamma$. Overall, $f_{\sigma_3}$  has the form $[1,\gamma](x_3)\cdot (a,-e_1c,c,e_1a,a,-e_1c,c,e_1a)$. Clearly, this contradicts $f\not\in\DUP$. 

(b) Assume that $e_1+e_3\neq0$; thus, $e_3\neq -e_1$, or equivalently  
 $e_3=e_1$ follows. By (3'), we obtain $ac+bd=0$.  Letting  $\gamma=\frac{b}{a}$, we obtain  $b=\gamma a$ and $c=-\gamma d$ from $ac+bd=0$.  Next, we claim that $\gamma^2=-1$. Assume otherwise. 
The equation   $a^2+b^2+c^2+d^2=0$ (Eq.(\ref{SymL-x2-a2-ax})) then becomes 
$(1+\gamma^2)(a^2+d^2)=0$, implying $a^2+d^2=0$. On the contrary, from  $a^2+b^2=x^2+y^2$ (Eq.(\ref{213-c2+d2-z2-w2})), we obtain  $(1+\gamma^2)(a^2-d^2)=0$, 
which implies $a^2-d^2=0$. These two equations lead to $a=d=0$, a contradiction. Thus, we obtain $\gamma^2=-1$. For $\sigma_3$, 
$f_{\sigma_3}$ can be expressed as $[-\gamma,1](x_3)\cdot (\gamma a,e_3d,d,\gamma e_1a,\gamma a,e_3d,d,\gamma e_1a)$, which implies $f\in\DUP$, a contradiction. 

\s
{[Subcase: $bd=0$]}
Firstly, we assume that $b=d=0$. In this case,  
$f_{\sigma_3}$ equals $(a,x,c,z,0,0,0,0)$, a contradiction against $f\not\in\DUP$.  
Secondly, we assume that $b=0\wedge d\neq0$. {}From  
$a^2+b^2+c^2+d^2=0$ (Eq.(\ref{SymL-x2-a2-ax})) and  $a^2+b^2=x^2+y^2$ (Eq.(\ref{213-c2+d2-z2-w2})), we obtain $a^2+c^2+d^2=0$ and $a^2-c^2-d^2=0$, respectively. Combining these two equations leads to $2a^2=0$. This is a contradiction against $a\neq0$.
Finally, we assume that $b\neq0\wedge d=0$. Applying (1'), we then obtain  $(1-e_1e_2)ab=0$, which yields $e_2=e_1$. Similar to the second case, from  
$a^2+b^2+c^2+d^2=0$ (Eq.(\ref{SymL-x2-a2-ax})) and  
$a^2+b^2=x^2+y^2$ (Eq.(\ref{213-c2+d2-z2-w2})), we conclude that $c=0$. Hence, $a^2+b^2+c^2+d^2=0$ becomes $a^2+b^2=0$. Now, we set $b=\gamma a$ with an appropriate constant $\gamma\in\{\pm i\}$. With this $\gamma$,  $f_{\sigma_3}$ is written as $a\cdot[1,\gamma](x_3)\cdot (1,0,0,e_1,1,0,0,e_1)$, which clearly belongs to $\DUP$, a contradiction.  

%%%%%%%%%%%%%
%%%%%%%%%%%%%
\section{Proof of Proposition \ref{prop-type-III}}\label{sec:type-III}

This last section will prove Proposition \ref{prop-type-III},  completing the whole proof of Proposition \ref{h0+h2=h1+h3=0}. As we have done in Sections \ref{sec:proof-SymL}--\ref{sec:type-II}, we set $f=(a,b,c,d,x,y,z,w)$ and let  $SymL(f_{\sigma})=[g_0^{\sigma},g_1^{\sigma},g_2^{\sigma}]$ for each permutation  $\sigma\in S_3$.  

In this proof, we assume that $f$ is $SIG_1$-legal; namely,  $Sym(f)=[h_0,h_1,h_2,h_3]$ satisfies that $h_0+h_2 =h_1+h_3 =0$ and $h_0\neq \xi h_1$ for any value $\xi\in\{\pm i\}$. 
Moreover, we assume that $g_{0,\varepsilon}^{\sigma}+g_{2,\varepsilon}^{\sigma}=0$ holds for every permutation $\sigma\in S_3$ and for almost all values of $\varepsilon$. 
Since the degree of this  polynomial equation is at most two, in the rest of this proof, we fix an appropriate value $\varepsilon$ and assume that  $g_{0,\varepsilon}^{\sigma}+g_{2,\varepsilon}^{\sigma}=0$ for every 
$\sigma\in S_3$. 
For simplicity, hereafter, we omit subscript ``$\varepsilon$.'' 
To proceed our proof by contradiction, we further assume that $f\not\in\DUP$.  Notice that, as discussed in Section \ref{sec:SymL-sit-2}, Eq.(\ref{SymL-x2-a2-ax})--(\ref{azcx-bwdy}) should be satisfied.

First, we fix $\sigma=(x_1x_2x_3)$ and, for this $\sigma$, we want to prove  that  $(a+d)(y+z)(x+w)\neq0$ and $xw=yz$. Let us begin with the poof of $(a+d)(y+z)(x+w)\neq0$.   

\begin{claim}\label{from-a+d-to-x+w}
$(a+d)(y+z)(x+w)\neq0$.
\end{claim}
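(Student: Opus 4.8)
The plan is to prove Claim \ref{from-a+d-to-x+w} by contradiction, falsely assuming that at least one of the factors $a+d$, $y+z$, $x+w$ vanishes, and in each case deriving a contradiction with either the hypothesis $f\notin\DUP$ or the $SIG_1$-legality conditions $h_0+h_2=h_1+h_3=0$ and $h_0\neq\xi h_1$ for $\xi\in\{\pm i\}$. The starting point is the battery of equations available under the current standing assumptions: since $g_{0}^{\sigma}+g_{2}^{\sigma}=0$ holds for all $\sigma\in S_3$, we have Eq.(\ref{SymL-x2-a2-ax})--(\ref{azcx-bwdy}), together with the derived relations Eq.(\ref{a2+b2-relation})--(\ref{a2+d2-relation}). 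In particular, these give us a wealth of squared-equalities ($a^2=d^2\iff x^2=w^2$, etc.) and the three product identities Eq.(\ref{xwyz-adbc})--(\ref{azcx-bwdy}). The symmetrization formulas Eq.(\ref{h-123-zero})--(\ref{h-123-three}) then translate any vanishing factor into explicit constraints on $h_0,h_1,h_2,h_3$.

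The key steps, in order, are as follows. First I would dispose of the case $(x+w)=0$: by Eq.(\ref{h-123-three}) this forces $h_3=0$, and then $h_1+h_3=0$ gives $h_1=0$; by Eq.(\ref{h-123-one}) with $w=-x$ one gets $h_1=(a^2+bc)x+(a+d)(bz+cy)-(bc+d^2)x$, and combined with the other standing equations I would push this toward either $h_0=h_1=0$ (violating $h_0\neq\xi h_1$ trivially, or more precisely violating non-degeneracy of $Sym(f)$ guaranteed by $SIG_1$-legality) or toward $f$ having a duplicated-block structure after some permutation, i.e.\ $f\in\DUP$, contradiction. Second, the case $(y+z)=0$: here I would use Eq.(\ref{xwyz-adbc}), which becomes $(a+d)(b+c)=0$, splitting into two sub-cases, and also exploit Eq.(\ref{SymL-x2-a2-ax}) and the equivalences in Eq.(\ref{a2+d2-relation}) to pin down many coordinates; the goal again is to force $f$ into $\DUP$ after a suitable variable permutation $\sigma_2$ or $\sigma_3$, or to make $Sym(f)$ degenerate. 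Third, the case $(a+d)=0$: by Eq.(\ref{h-123-zero}) this gives $h_0=0$, hence (by $h_0+h_2=0$) $h_2=0$ as well, and Eq.(\ref{h-123-two}) with $d=-a$ yields $a(x^2+yz)+(bz+cy)(x+w)-a(yz+w^2)=a(x^2-w^2)+(bz+cy)(x+w)=0$, i.e.\ $(x+w)(a(x-w)+bz+cy)=0$; since we may assume $x+w\neq0$ from the first case (or handle the overlap), we get $a(x-w)+bz+cy=0$, which together with Eq.(\ref{SymL-x2-a2-ax}) and the relations in Eq.(\ref{a2+b2-relation})--(\ref{a2+d2-relation}) should over-determine the system enough to force $f\in\DUP$.

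The main obstacle I expect is the case analysis bookkeeping in the $(a+d)=0$ and $(y+z)=0$ cases: the squared-equality relations only determine coordinates up to signs, so each reduction branches into several sign-pattern sub-cases (as in the proofs of Propositions \ref{prop-type-I} and \ref{prop-type-II}), and in each branch one must correctly identify the permutation $\sigma\in\{\sigma_1,\sigma_2,\sigma_3\}$ under which $f_\sigma$ exhibits the $u(x_{\sigma(1)})\cdot(f_0,f_0)$ form witnessing membership in $\DUP$. The strategy to control this is to use Eq.(\ref{aybx-cwdz}) and Eq.(\ref{azcx-bwdy}) — the permuted analogues of Eq.(\ref{xwyz-adbc}) — to detect factorizations: a vanishing product like $(a+y)(b+x)=-(c+w)(d+z)$ combined with sign constraints typically collapses to showing two of the four ``slices'' $f(i,\cdot,\cdot)$ are proportional, which is exactly the $\DUP$ condition. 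In borderline sub-cases where several coordinates are zero, I would fall back on the direct observation that if $f$ (or some $f_\sigma$) has a zero $4$-block, e.g.\ $f_\sigma=(0,0,0,0,x,y,z,w)$, then $f\in\DUP$ automatically, as noted right before Theorem \ref{main-theorem}. Throughout, the calculations stay at degree two or three in the coordinates, so each individual branch is short; the work is entirely in organizing the branches so no case is missed.
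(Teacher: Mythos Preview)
Your plan is workable but takes a genuinely different route from the paper, and it is worth seeing how.

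The paper's proof never touches the $SIG_1$-legality relations $h_0+h_2=h_1+h_3=0$ or $h_0^2+h_1^2\neq0$ for this claim. Instead it argues purely from the $g_0^{\sigma}+g_2^{\sigma}=0$ equations (Eq.~(\ref{SymL-x2-a2-ax})--(\ref{azcx-bwdy})) and in every branch concludes $f\in\DUP$, contradicting $f\notin\DUP$. The key structural move is that the three factors are \emph{not} independent: assuming (say) $a+d=0$, Eq.~(\ref{xwyz-adbc}) immediately forces $(x+w)(y+z)=0$, so the analysis collapses to ``$a+d=0$ together with one of $x+w=0$ or $y+z=0$'', and the paper then does two explicit sub-cases, introducing a ratio $\gamma=x/a$ (or analogues) and pushing each sign branch to an explicit $\DUP$ factorization.

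Your approach instead leans on the $h_i$ formulas: $x+w=0\Rightarrow h_3=0\Rightarrow h_1=0$, and $a+d=0\Rightarrow h_0=0\Rightarrow h_2=0$. This does buy you something: it kills the overlap case $a+d=x+w=0$ instantly (since then $Sym(f)=[0,0,0,0]$ is degenerate), and in case~3 your factorization $h_2=(x+w)[a(x-w)+bz+cy]=0$ gives a clean extra linear relation. But notice that in your case~2, $y+z=0$, none of the $h_i$ has $y+z$ as a factor, so your $h_i$-based lever does not engage; you are thrown back on Eq.~(\ref{xwyz-adbc}), which yields $(a+d)(b+c)=0$, and the sub-case $b+c=0$ (with $a+d\neq0$, $x+w\neq0$) then requires exactly the kind of direct $\DUP$-factorization analysis the paper does---your proposal defers this with ``should over-determine the system'', which is where most of the actual work sits.

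In short: your route is sound but mixes two mechanisms, whereas the paper uses only one ($f\in\DUP$ from the $g$-equations) and exploits Eq.~(\ref{xwyz-adbc}) to avoid treating the three factors as independent cases. If you carry out your plan, you will end up reproducing much of the paper's sub-case analysis anyway, with the $h_i$ relations providing shortcuts only in the overlapping sub-cases.
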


\begin{proof}
Our proof goes by way of contradiction: namely, assuming $(a+d)(y+z)(x+w)=0$, we aim at drawing a contradiction. This assumption implies that at least one of the following three terms must be zero: $a+d$, $y+z$, and $x+w$. In what follows, we consider the situation in which $a+d=0$ is satisfied. The other two possible situations can be treated similarly. 
It follows from $(a+d)(b+c)+(x+w)(y+z)=0$ (Eq.(\ref{xwyz-adbc})) that  (1')  $(x+w)(y+z)=0$; thus, either $x+w=0$ or $y+z=0$ should hold.  

\s
{\bf [Case: $x+w=0$]} 
Note that $w=-x$. {}From  $ax+by+cz+dw=0$ (Eq.(\ref{SymL-x2-a2-ax})), we obtain (2') $2ax+by+cz=0$. Moreover, the equation   $ac+bd+xz+yw=0$ (Eq.(\ref{SymL-a2-c2-ac})) implies  (3') $a(b-c)+x(y-z)=0$. Hereafter, we will examine four subcases, depending on the values of $a$ and $x$. 

\s
[Subcase: $ax\neq0$]  Let $\gamma =\frac{x}{a}$.  Note that $\gamma\neq0$. {}From (3'), 
we obtain both (4') $x=\gamma a$ and (5') $b-c=-\gamma(y-z)$.  Next, we use   $x^2+y^2+z^2+w^2=0$ (Eq.(\ref{SymL-x2-a2-ax})) and then obtain  (6') $2\gamma^2a^2+y^2+z^2=0$. 
Since  $b^2-c^2+y^2-z^2=0$ (Eq.(\ref{a2+b2-relation})) is equivalent to $(b+c)(b-c)+(y+z)(y-z)=0$,  (5') implies  (7') $(y-z)[(y+z)-\gamma(b+c)]=0$. 

(i) First, assume that $y=z$; thus,  $b=c$ also holds by (5'). We can deduce  (8') $y^2+\gamma^2a^2=0$ from (6'). In addition, applying (2'), we obtain (9') $\gamma a^2+by=0$. Now, we calculate 
(8') $-$ (9')$\times \gamma$. We then obtain  $y^2-\gamma by=0$, or equivalently  
$y(y-\gamma b)=0$. This equation gives  $y=\gamma b$, and hence $f$ becomes  $(a,b,b,-a,\gamma a,\gamma b,\gamma b, -\gamma a)$, which is also written as  $[1,\gamma](x_1)\cdot (a,b,b,-a,a,b,b,-a)$. Obviously, $f$ belongs to 
 $\DUP$, a contradiction. 

(ii) On the contrary, we assume that $y\neq z$. This inequality implies (10') $y+z=\gamma(b+c)$ by (7').  By calculating 
(10') $+$ (5')$\times\gamma$, we obtain  (11')  $2\gamma b = (1-\gamma^2)y+(1+\gamma^2)z$. Similarly, by calculating (10') $-$ (5')$\times\gamma$, we easily obtain  (12') $2\gamma c=(1+\gamma^2)+(1-\gamma^2)z$. 
It then follows from $ax+by+cz+dw=0$ (Eq.(\ref{SymL-x2-a2-ax})) that  $2\gamma a^2+by+cz=0$; thus, (13') $2\gamma(by+cz+2\gamma a^2)=0$ holds. 
By inserting  (11')\&(12') and $2\gamma^2a^2=-(y^2+z^2)$ obtained from (6') into (13'), we deduce the equation  $(1-\gamma^2)(y^2+z^2)+2(1+\gamma^2)yz-2(y^2+z^2)=0$, which is simplified as $(1+\gamma^2)(y-z)^2=0$. Since $y\neq z$, we conclude that  $\gamma^2=-1$. Using this value, we can draw from (11')\&(12') the consequences: $y=\gamma b$ and $z=\gamma c$. Hence, $f$ is of the form $(a,b,c,-a,\gamma a,\gamma b,\gamma c,-\gamma a)$. This makes $f$ fall into $\DUP$, a contradiction. 

\s
[Subcase: $a=x=0$] 
{}From  the equation $a^2+b^2+c^2+d^2=0$ (Eq.(\ref{SymL-x2-a2-ax})), it follows that  $b^2+c^2=0$. Similarly,  $x^2+y^2+z^2+w^2=0$ (Eq.(\ref{SymL-x2-a2-ax})) implies  $y^2+z^2=0$. Inserting these equations into  $b^2-c^2+y^2-z^2=0$ (Eq.(\ref{a2+b2-relation})), we obtain $b^2+y^2=0$. Now, let $y=\gamma b$ using an appropriate constant $\gamma\in\{\pm i\}$. It then follows from   $y^2+z^2=0$ that (1') $\gamma^2b^2+z^2=0$. In addition,   $ax+by+cz+dw=0$ (Eq.(\ref{SymL-x2-a2-ax})) leads to  (2') $\gamma b^2+cz=0$. Next, we calculate   (2')$\times\gamma$ $-$ (1') and then obtain  (14') $z(z-\gamma c)=0$.

Here, we assume that $z=0$. Since this assumption implies $y=b=c=0$,  $f$ becomes an all-zero function, belonging to $\DUP$, a contradiction. 
On the contrary, we assume that $z\neq0$; thus, (14') implies $z=\gamma c$. Obviously, 
$f$ is of the form $(0,b,c,0,0,\gamma b,\gamma c,0)$, which is also 
in $\DUP$. 

\s
[Subcase: $a=0\wedge x\neq0$] 
{}From (3'), we immediately obtain $x(y-z)=0$, yielding  $y=z$. {}From  $b^2-c^2 +y^2-z^2=0$ (Eq.(\ref{a2+b2-relation})), we also obtain (15') $b^2=c^2$. Moreover, from  $a^2+b^2+c^2+d^2=0$ (Eq.(\ref{SymL-x2-a2-ax})) follows (16')  $b^2+c^2=0$. Using (15')--(16'), we deduce  $b=c=0$. Overall, $f$ must have the form $(0,0,0,0,x,y,y,-x)$, indicating that $f\in\DUP$, a contradiction.   

\s
[Subcase: $a\neq0\wedge x=0$]  This subcase is similar to the previous subcase for $a=x=0$ and is omitted.  

\s
{\bf [Case: $x+w\neq0$]}   
Assume that $x+w\neq0$. By (1'),  $x+w\neq0$ implies $y+z=0$. Let us recall the equation $a^2-d^2+x^2-w^2=0$ (Eq.(\ref{a2+b2-relation})), which is equivalent to $(a-d)(a+d)+(x-w)(x+w)=0$. Since $a+d=0$, we obtain $(x-w)(x+w)=0$. By our assumption, it follows that $x=w$; thus, $x$ cannot be zero. Next, we use the equation  $x^2+y^2+z^2+w^2=0$ (Eq.(\ref{SymL-x2-a2-ax})) to obtain (17') $x^2+y^2=0$. Here, we let $y=\delta x$ for a certain constant $\delta\in\{\pm i\}$. 
The equation $ax+by+cz+dw=0$ (Eq.(\ref{SymL-x2-a2-ax})) leads to  $y(b-c)=0$; thus, either $y=0$ or $b=c$ holds. 

We begin studying the case $y=0$. By (17'), we immediately conclude that $x=0$, a contradiction. Next, we consider the case $b=c$. 
The equation  $a^2+b^2+c^2+d^2=0$ (Eq.(\ref{SymL-x2-a2-ax})) then becomes  $a^2+b^2=0$. Now, we set $b=\gamma a$ using an appropriate constant $\gamma\in\{\pm i\}$. There are two subcases to examine. When $\gamma =-\delta$ is satisfied, for the permutation $\sigma_2=(x_2x_1x_3)$, 
$f_{\sigma_2}$ can be expressed as $[1,\gamma](x_2)\cdot (a,\gamma a,x,-\gamma x,a,\gamma a,x,-\gamma x)$, which is obviously in $\DUP$. On the contrary, when $\gamma=\delta$, for $\sigma_3=(x_3x_2x_1)$,  $f_{\sigma_3}$ becomes $[1,\gamma](x_3)\cdot (a,x,\gamma a,-\gamma x,a,x,\gamma a,-\gamma x)$, and thus $f$ falls into $\DUP$.  This contradicts $f\not\in\DUP$. 
\end{proof}

What we need to prove next is the equality  $xw=yz$. Note that, by Claim \ref{from-a+d-to-x+w}, none of the following terms is zero: $a+d$, $y+z$, and $x+w$. We will use this fact in the proof of Claim \ref{xw-equal-yz}. 

\begin{claim}\label{xw-equal-yz}
$xw=yz$.
\end{claim}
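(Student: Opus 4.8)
The plan is to argue by contradiction: assume $xw\neq yz$. The ingredients at our disposal are: (a) the conclusion $(a+d)(y+z)(x+w)\neq0$ of Claim~\ref{from-a+d-to-x+w}; (b) all of the square and bilinear relations Eq.(\ref{SymL-x2-a2-ax})--(\ref{azcx-bwdy}), which hold because $g_{0,\varepsilon}^{\sigma}+g_{2,\varepsilon}^{\sigma}=0$ for every permutation $\sigma\in S_3$; and (c) the $SIG_1$-legality of $Sym(f)=[h_0,h_1,h_2,h_3]$, i.e. $h_0+h_2=0$, $h_1+h_3=0$, and $h_0\neq\xi h_1$ for every $\xi\in\{\pm i\}$, with the $h_i$ given explicitly by Eq.(\ref{h-123-zero})--(\ref{h-123-three}). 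Observe that Proposition~\ref{SymL-non-degenerate} actively prevents us from invoking Eq.(\ref{123-ad-bc})--(\ref{321-az-cx}) (which would contain ``$xw=yz$'' for free), so the equality has to be forced out of (a)--(c) alone.

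First I would record a reformulation. Since $x+w\neq0$, factoring $h_3=(x+w)\bigl[(x+w)^2+3(yz-xw)\bigr]$ shows that $xw=yz$ is equivalent to $h_3=(x+w)^3$, hence --- via the relation $h_3=-h_1$ --- to the single polynomial identity $h_1=-(x+w)^3$, where $h_1$ is given by Eq.(\ref{h-123-one}) (there is also a ``dual'' version with $h_0=(a+d)\bigl[(a+d)^2+3(bc-ad)\bigr]$ and $h_0=-h_2$). Then I would use the relations on squares --- in particular $a^2+x^2=d^2+w^2$ (from Eq.(\ref{b2+d2-relation})) and the equivalences $a^2=d^2\Leftrightarrow x^2=w^2$ of Eq.(\ref{a2+d2-relation}), sharpened using $a+d\neq0$ and $x+w\neq0$ to $a=d\Leftrightarrow x=w$ --- to split into a manageable number of branches according to which of $a,d,x,w$ vanish and whether $a=d$ (equivalently $x=w$). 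In each branch, the bilinear relations $ax+by+cz+dw=0$, $ac+bd+xz+yw=0$, $ab+cd+xy+zw=0$ (the final entries of Eq.(\ref{SymL-x2-a2-ax}), Eq.(\ref{SymL-a2-c2-ac}), Eq.(\ref{SymL-a2-b2-ab})) together with the remaining quadratic relations determine $b,c,y,z$ up to proportionality constants in $\complex$ or sign constants $e_i\in\{\pm1\}$, so that some permuted form $f_{\sigma}$ takes the shape $u(x_{\sigma(1)})\cdot(f_0,f_0)$ with $u\in\UU$ and $f_0$ a binary signature; this places $f\in\DUP$, contradicting the standing hypothesis. Any residual branch that fails to collapse into $\DUP$ is closed by substituting the determined values of $a,\dots,w$ into $h_0$ and $h_1$ and checking that $h_0=h_1=0$, which contradicts $h_0\neq\xi h_1$ for $\xi\in\{\pm i\}$.

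The hard part is organizational rather than conceptual: the tree of vanishing/non-vanishing subcases is sizeable, and the borderline subcases genuinely require the cubic identities $h_0+h_2=0$ and $h_1+h_3=0$ to be combined with the quadratic relations, so elimination yields degree-three consequences that must be simplified carefully (this is where the sign bookkeeping of the $e_i$ and the choice of which $f_{\sigma}$ to exhibit are the delicate points). A secondary caution is to keep applying Claim~\ref{from-a+d-to-x+w} whenever passing from an equation on squares to an equation on the underlying quantities, since steps such as $a^2=d^2\Rightarrow a=d$ and $x^2=w^2\Rightarrow x=w$ are licensed only by $a+d\neq0$ and $x+w\neq0$.
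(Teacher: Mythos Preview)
Your proposal is a high-level strategy rather than a proof: the assertion that ``in each branch the bilinear relations determine $b,c,y,z$ up to proportionality constants so that some $f_{\sigma}$ has the shape $u\cdot(f_0,f_0)$, and any residual branch yields $h_0=h_1=0$'' is precisely what has to be demonstrated, and you have not verified a single branch. With eight unknowns and a dozen quadratic relations, there is no a~priori reason the case tree will terminate the way you hope; you have essentially restated the problem as ``do a large case analysis and hope it works.''

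The paper's proof is direct and avoids case analysis almost entirely. The key idea you are missing is to introduce the single parameter $\gamma=(x+w)/(a+d)$ (legal since $a+d\neq0$). From Eq.(\ref{xwyz-adbc}) one gets $b+c=-\gamma(y+z)$; combining with $b^2-c^2+y^2-z^2=0$ and $y+z\neq0$ gives $y-z=\gamma(b-c)$. These two linear relations solve $2\gamma b$ and $2\gamma c$ as explicit linear combinations of $y,z$. An identical manipulation with $a^2-d^2+x^2-w^2=0$ solves $2\gamma a$ and $2\gamma d$ in terms of $x,w$. Substituting all four into $2\gamma(ax+by+cz+dw)=0$ and using $x^2+y^2+z^2+w^2=0$ collapses everything to
\[
(1+\gamma^2)\bigl[(x+w)^2+(y-z)^2\bigr]=0.
\]
If $\gamma^2=-1$, the solved expressions give $(x,y,z,w)=\gamma(a,b,c,d)$, so $f=[1,\gamma](x_1)\cdot(a,b,c,d,a,b,c,d)\in\DUP$, a contradiction. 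Otherwise $(x+w)^2+(y-z)^2=0$; expanding and subtracting $x^2+y^2+z^2+w^2=0$ yields $2(xw-yz)=0$, i.e.\ $xw=yz$.

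Note in particular that the paper's argument never touches $h_0,\dots,h_3$ or the $SIG_1$-legality hypothesis; it uses only Claim~\ref{from-a+d-to-x+w}, the quadratic relations Eq.(\ref{SymL-x2-a2-ax})--(\ref{b2+d2-relation}), and $f\notin\DUP$. Your plan to route everything through the cubic identities $h_0+h_2=0$, $h_1+h_3=0$ is therefore importing machinery that is not needed and that makes the computation substantially heavier.
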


\begin{proof}
Since $a+d\neq0$, let $\gamma =\frac{x+w}{a+d}$; thus, we obtain two equations: (1') $x+w=\gamma(a+d)$ and (2') $b+c=-\gamma(y+z)$. 
Note that $b^2-c^2+y^2-z^2=0$ (Eq.(\ref{a2+b2-relation})) is equivalent to $(b-c)(b+c)+(y-z)(y+z)=0$. We insert (2') to this equation and then obtain $(y+z)[(y-z)-\gamma(b-c)]=0$. Moreover, since $y+z\neq0$, it follows that (3') $y-z=\gamma(b-c)$. 
To remove the term $c$, we calculate (2')$\times\gamma$ $+$ (3') and then  obtain (10') $2\gamma b= (1-\gamma^2)y-(1+\gamma^2)z$. Similarly, by calculating  (2')$\times \gamma$ $-$ (3'), we obtain (11')  $2\gamma c= -(1+\gamma^2)y +(1-\gamma^2)z$. These equations help evaluate the term $2\gamma(by+cz)$ as  $2\gamma(by+cz)= (1-\gamma^2)(y^2+z^2) - 2(1+\gamma^2)yx$, which is obviously equivalent to  (7')  $2\gamma(by+cz) = (1+\gamma^2)(y-z)^2-2\gamma^2(y^2+z^2)$. 

In a similar manner, since  $a^2-d^2+x^2-w^2=0$ (Eq.(\ref{b2+d2-relation}))  is equivalent to $(a-d)(a+d)+(x-w)(x+w)=0$, we insert (1') and then obtain  $(a+d)[(a-d)+\gamma(x-w)]=0$, implying  (4') $a-d =-\gamma(x-w)$.  By calculating (1') $+$ (4')$\times\gamma$, we obtain  (5') $2\gamma a=(1-\gamma^2)x+(1+\gamma^2)w$. Similarly, the calculation of  (1') $-$ (4')$\times \gamma$ shows  (6') $2\gamma d = (1+\gamma^2)x+(1-\gamma^2)w$. 
This implies (8') $2\gamma (ax+ dw) = (1+\gamma^2)(x+w)^2 - 2\gamma^2(x^2+w^2)$. 

Inserting (7')--(8') into  $2\gamma(ax+by+cz+dw)=0$ (Eq.(\ref{SymL-x2-a2-ax})), we obtain $(1+\gamma^2)[(x+w)^2+(y-z)^2] -2\gamma(x^2+y^2+z^2+w^2)=0$. Since  $x^2+y^2+z^2+w^2=0$ (Eq.(\ref{SymL-x2-a2-ax})), it holds that (9')  $(1+\gamma^2)[(x+w)^2+(y-z)^2]=0$.  Now, we examine two possible cases. 

(i) First, assume that  $\gamma^2=-1$. By (5')--(6') and (10')--(11'), 
it follows that $2\gamma b = 2y$, $2\gamma c= 2z$, $2\gamma a=2x$, and $2\gamma d=2w$; in other words, $y=\gamma b$, $z=\gamma c$, $x=\gamma a$, and $w=\gamma d$. These values make  $f$ equal $(a,b,c,d,\gamma a,\gamma b,\gamma c,\gamma d)$, which can be written as $[1,\gamma](x_1)\cdot (a,b,c,d,a,b,c,d)$. Hence, $f$ clearly belongs to $\DUP$, a contradiction. 

(ii) Assume that $\gamma^2\neq-1$; thus, (9') implies  $(x+w)^2+(y-z)^2=0$, which is the same as $x^2+y^2+z^2+w^2+2(xw-yz)=0$. Since   $x^2+y^2+z^2+w^2=0$ (Eq.(\ref{SymL-x2-a2-ax})), we conclude that  $xw=yz$.
\end{proof}

By this point, we have proven, for $\sigma=(x_1x_2x_3)$, that both  $(a+d)(y+z)(x+w)\neq0$ and $xw=yz$ hold. By simply permuting the variable indices,  a similar argument can show that, for $\sigma_2=(x_2x_1x_3)$, both  $(a+y)(d+z)(c+w)\neq0$ and $cw=dz$ hold. Similarly, when $\sigma_3=(x_3x_2x_1)$, we obtain  both $(a+z)(d+y)(b+w)\neq0$ and $bw=dy$.  To complete the proof of Proposition \ref{prop-type-III}, we consider four cases separately. 

\s
{\bf [Case: $xy\neq0$]}  Now, let $\delta=\frac{y}{x}$. This implies that $y=\delta x$ and $w=\delta z$. The assumption $y\neq 0$ implies that $\delta\neq0$. {}From $cw=dz$, we obtain  $\delta cz=dz$, implying  $z(d-\delta c)=0$. Hence, $d=\delta c$ follows. Using  $b^2+d^2+y^2+w^2=0$ (Eq.(\ref{SymL-a2-b2-ab})), we obtain  $b^2+\delta^2(c^2+x^2+z^2)=0$. Applying $a^2 = -(c^2+x^2+z^2)$, which is obtained from 
$a^2+c^2+x^2+z^2=0$  (Eq.(\ref{SymL-a2-b2-ab})), we conclude that    $b^2-\delta^2a^2=0$; thus, either $b=\delta a$ or $b=-\delta a$ holds. 
First, let us consider the case where  $b=-\delta a$. It follows from   $ab+cd+xy+zw=0$ (Eq.(\ref{SymL-a2-b2-ab})) that $-\delta a^2+\delta(c^2+x^2+z^2)=0$. As discussed before, this is equivalent to $-\delta a^2+\delta(-a^2)=0$, which yields $-2\delta a^2=0$. Since  $a^2=0$, we obtain $b=0$. This implies that, for the permutation $\sigma_3=(x_3x_2x_1)$,  $f_{\sigma_3} =(0,x,c,z,0,\delta x,\delta c,\delta z)$; thus, $f$ is in $\DUP$,  a contradiction.
For the next case where $b=\delta a$, $f_{\sigma_3}$ also equals $(a,x,c,z,\delta a,\delta x,\delta c,\delta z)$ and $f$ thus falls into $\DUP$, a contradiction.  

\s
{\bf [Case: $x=y=0$]} 
Note that, since $x+w\neq0$, $x=0$ implies $w\neq0$. Since  $(y+z)(a+y)(d+y)\neq0$, $y=0$ implies $zad\neq0$. Moreover, from 
$bw=dy$, we obtain $bw=0$; thus,  $b=0$ follows. 
{}From  $x^2+y^2+z^2+w^2=0$ (Eq.(\ref{SymL-x2-a2-ax})), we obtain  $z^2+w^2=0$. Here, let $z=\gamma w$ for a certain constant $\gamma\in\{\pm i\}$.  It then follows from  $ax+by+cz+dw=0$ (Eq.(\ref{SymL-x2-a2-ax})) that $\gamma cw+dw=0$, implying $w(d+\gamma c)=0$. Hence, we obtain  $d=-\gamma c$. Finally, $a^2+b^2+c^2+d^2=0$ (Eq.(\ref{SymL-x2-a2-ax})) implies $a^2=0$. This proves that $a+y=0$, a contradiction.

\s
{\bf [Case: $x=0\wedge y\neq0$]} 
Since  $x+w\neq0$, it holds that  $w\neq0$. Let $\gamma=\frac{w}{y}$. By $bw=dy$, we obtain  $w=\gamma y$ and $d=\gamma b$. Moreover, from $xw=yz$ follows $z=0$. We thus obtain  $c=0$ from $cw=dz$.  
It then follows from   $x^2+y^2+z^2+w^2=0$ (Eq.(\ref{SymL-x2-a2-ax}))    that $(1+\gamma^2)y^2=0$; thus, $\gamma^2=-1$. Here, the equation  $a^2+b^2+c^2+d^2=0$ (Eq.(\ref{SymL-x2-a2-ax})) implies  $a^2+(1+\gamma^2)b^2=0$, which immediately yields $a^2=0$. Hence, $f_{\sigma_3}$ is of the form $(0,b,0,y,0,\gamma b,0,\gamma y)$, making $f$ fall into $\DUP$, a contradiction. 

\s
{\bf [Case: $x\neq0\wedge y=0$]} 
Since $(y+z)(a+y)(d+y)\neq0$, $y=0$ implies $zad\neq0$.  The equation  $xw=yz$ leads to $xw=0$, implying $w=0$. Moreover, $cw=dz$ implies $dz=0$. This contradicts the result $zad\neq0$.  

\s

In this end, we have completed the proof of Proposition \ref{prop-type-III}.

%%%%%%%%%%%%%%%%%%%%%%%%
%%%%%%%%%%%%%%%%%%%%%%%%
%%%%%
%%%%%%%%%%%%%%%%%%%%%%%%%%
%%%%%%%%%%%%%%%%%%%%%%%%%%
\bibliographystyle{alpha}

%%%%%%%%%%%%%%%%%%
\end{document}